\newtheorem{assumption}{Assumption}
\newcommand{\Rpplus}{{\mathbb{R}_{\geq 0}}}
\newcommand{\Rplus}{{\mathbb{R}_{> 0}}}
\newcommand{\E}{\mathbb{E}}
\newcommand{\N}{\mathbb{N}}
\newcommand{\Q}{\mathbb{Q}}
\newcommand{\R}{\mathbb{R}}
\newcommand{\xstr}{x_{*}}
\newcommand{\ystr}{y_{*}}
\newcommand{\lam}{\lambda}
\newcommand{\q}{{\overrightarrow{q}}}
\renewcommand{\P}{\mathbb{P}}
\renewcommand{\rho}{\varrho}
\DeclareMathOperator{\dom}{dom\,}
\DeclareMathOperator{\epi}{epi\,}
\DeclareMathOperator{\hypo}{hypo\,}
\DeclareMathOperator{\co}{co\,}
\DeclareMathOperator{\supp}{supp \,}
\def\mathbbone{{\mathchoice {1\mskip-4mu\mathrm l}{1\mskip-4mu\mathrm l}
{1\mskip-4.5mu\mathrm l} {1\mskip-5mu\mathrm l}}}
\def\ind{\mathbbone}
\def\define{:=}
\def\JELname{{\bfseries JEL Subject Classification}}
\def\JELList#1{\par\addvspace\medskipamount{\rightskip=0pt plus1cm
\def\and{\ifhmode\unskip\nobreak\fi\ $\cdot$
}\noindent\JELname\enspace\ignorespaces#1\par}}
\title{Portfolio Optimization under Convex Incentive Schemes}
\author{Maxim Bichuch${}^*$  \and Stephan Sturm${}^*$}
\date{\today}
\institute{
              ${}^*$Department of Mathematical Sciences,
	      Worcester Polytechnic Institute,
	      100 Institute Road,
	      Worcester, MA 01609,
	      USA;
              \email{mbichuch@wpi.edu, ssturm@wpi.edu}.
	      Work partially supported by NSF grant DMS-0739195 while the authors were Postdoctoral Research Associates at Princeton University.
}
\journalname{Finance and Stochastics}
\begin{document}

\maketitle

\begin{abstract}
We consider the terminal wealth utility maximization problem from the point of view of a portfolio manager who is paid by an incentive scheme, which is given as a convex function $g$ of the terminal wealth. The manager's own utility function $U$ is assumed to be smooth and strictly concave, however the resulting utility function $U \circ g$ fails to be concave. As a consequence, the problem considered here does not fit into the classical portfolio optimization theory.  Using duality theory, we prove wealth-independent existence and uniqueness of the optimal portfolio in general (incomplete) semimartingale markets as long as the unique optimizer of the dual problem has a continuous law. In many cases, this existence and uniqueness result is independent of the incentive scheme and depends only on the structure of the set of equivalent local martingale measures. As examples, we discuss (complete) one-dimensional models as well as (incomplete) lognormal mixture and popular stochastic volatility models. We also 
provide a detailed analysis of the case where the unique optimizer of the dual problem does not have a continuous law, leading to optimization problems whose solvability by duality methods depends on the initial wealth of the investor.
\end{abstract}

\keywords{portfolio optimization, fund manager's problem, incentive scheme, convex duality, delegated portfolio management}

\subclass{91G10, 90C26}

\JELList{G11}

\section{Introduction}

Whereas classical portfolio theory studies utility maximization from the point of view of an investor, whose preferences are modeled by a concave utility function, in reality, portfolio management is commonly delegated to a fund manager. To increase the efficacy of the manager, he is often paid by an incentive scheme that depends on the performance of the fund he manages. Such a scheme can be composed, for example, of a fixed fee, some percentage of the fund, plus an additional reward, which consists of one (or a combination of several) call options on the fund. As a consequence, two differences to the classical setting arise. First, the utility function, under which the optimization is carried out, does not represent the preference structure of the investor (also called the {\em principal}), but rather the manager's (the {\em agent}'s) preference structure. Second, what is optimized under this utility function is not the terminal value of the fund itself, but rather some function of it, which depends on 
the specific incentive scheme.

The resulting optimization problem is, in general, no longer concave, and therefore does not fit into the classical setting first studied by Merton \cite{Merton}, who used a stochastic optimal control approach.  Specifically, Merton derived a Hamilton-Jacobi-Bellman (HJB) equation satisfied by the value function and found a closed form solution in the case of power utility. The drawback of this method -- namely that it requires the state process to be Markov -- can be overcome by using the fact that the processes dual to the portfolio processes are given via the set of equivalent local martingale measures, as pioneered by Karatzas, Lehoczky and Shreve \cite{KILS} and Pliska \cite{Pliska}.  A thorough study of the portfolio optimization problem in a general (incomplete) semimartingale market was conducted by Kramkov and Schachermayer \cite{KramSchach1}, \cite{KramSchach2}, Bouchard, Touzi and Zeghal \cite{BTZ} and others.

As pointed out, all of the above literature concentrates on the principal investor himself. The  problem becomes more involved, if the investor, rather than investing himself, delegates his money to a fund manager. The agent invests on the principal's behalf, in exchange for a fee schedule, which is based on the fund's performance at the final time $T$, and given by a function $g$ of the portfolio at terminal time.  We assume that the agent's utility function $U$ is smooth, strictly concave and has a domain bounded from below.  These assumptions allow for the classical examples of power and logarithmic utility (but not utility functions defined on the whole real line such as the exponential). The fee schedule function $g$ is assumed to be convex and dominated by an affine function -- i.e., its slope has to be bounded; without loss of generality we will assume that the maximal slope is $1$. The financial reasoning for these assumptions on $g$ is that we expect the manager's fees to increase as the fund's 
profit increases. Therefore, $g$ should be convex. The fund manager's utility, which results from his payoff, is hence a composition of the two functions, $\bar U \define U \circ g$, and may no longer be concave. Thus, the previously mentioned results are no longer applicable.  

The resulting problem is not well understood; the existing literature discusses mainly the question of whether such a compensation scheme leads the portfolio manager to take excessive risk. In \cite{Ross}, Ross discusses some conditions to make the agent more or less risk averse then the principal. Carpenter shows in \cite{carp} the existence of the fund manager's optimal portfolio in case of a utility function $U$ with constant relative risk aversion and a call option like fee schedule $g$ in a Brownian stock price model. In this setting, her analysis is generalized by Larsen \cite{Lars} into an agency problem, where the investor optimizes the resulting payoff over piecewise affine incentive schemes, which he might choose to offer the portfolio manager.

We want to point out that there is also a different approach to portfolio optimization under incentive schemes, in which the compensation is based on high-watermarks, i.e., the running maximum of the fund. Recent references to this compensation approach include \cite{ObGua}, \cite{JanSir}, \cite{PanWest}. In all of these papers the authors also assume a Brownian stock price model and solve the appropriate HJB equation.

In the present paper we will investigate the more fundamental problem of existence and uniqueness of an agent's optimal investment portfolio in a general semimartingale model. As noted above, the resulting fund manager's utility function $\bar U$ may not be concave. It is well-known that the solution is then to concavify $\bar U$, and solve the concavified problem instead. Even though this new utility is now concave, it is not necessarily strictly concave, nor does it necessarily satisfy the usual Inada condition at zero, both of which are needed in the classical utility maximization framework. Moreover, the smoothness of the concavified function is not clear a priori. Using a dynamic programming approach via HJB equation is -- at least in the straightforward way -- also not possible, since the concavified utility function can (and usually will) be affine in some parts, and hence finding the optimal portfolio becomes impossible. Thus, we have effectively to weaken the utility function requirement of Kramkov 
and Schachermayer \cite{KramSchach1}. Our approach is to use the more general framework of Bouchard, Touzi and Zeghal \cite{BTZ} and, by proving additional regularity of the concavified utility function, show the uniqueness of the dual optimizer. We are thus able to utilize the abstract framework of Bouchard, Touzi and Zeghal in a concrete setting, which is a rare feat (note, however, the exception of Seifried \cite{Seifried}, who discusses capital gains taxes in a complete market).

The next step is to develop sufficient conditions, broad enough to be of interest, for the solution of the concavified problem to be also the solution of the original problem. It turns out that a necessary and sufficient condition is that the corresponding unique dual optimizer has a continuous law (i.e., the distribution of the random variable has no atoms). A similar procedure can be found in a related paper by Carassus and Pham \cite{Carassus}, who consider a problem of portfolio optimization in a complete market with Brownian stock price, with a utility function created by two piecewise concave functions. We show that the condition of atomlessness holds, not only true in the classical Black-Scholes model with discounted stock price having nonzero drift, but also in two example classes of markets, independent of the initial capital of the fund and independent of the concrete incentive schemes: (1) complete one-dimensional It\^{o}-process models (such as local volatility models), and (2) incomplete 
lognormal mixture and stochastic volatility models (such as the popular correlated Hull-White, Scott and Heston models).

The practical consequence of this is that the agent shuns successfully away from any part of the domain where the concavified utility function is linear . However, he does this in a smooth way. The optimal terminal wealth has no atoms except possibly at zero (meaning that the fund manager jeopardizes the fund with a positive probability), and it is zero under any linear spot of the concavified utility function.

If the assumption on the non-atomic structure of the dual optimizers fails, we are still able to give an affirmative answer, albeit only for some initial capitals. In general, the fund manager's optimal wealth does not have to agree with the one calculated from the concavified problem, and even if it does, it does not have to be unique. As a note of caution, we present easy counterexamples that show that this method should not be implemented without proper conditions. We also give simple examples for our theorems, which conceptually present how the optimal portfolio can be explicitly calculated in a complete market setting.

The rest of this paper is organized as follows. In Section \ref{utility} we introduce the mathematical model of our delegated portfolio optimization problem and state our main results. The two following sections are devoted to examples illustrating our findings. In Section \ref{BSpower} we discuss in detail the case of power utility in a Black-Scholes market, highlighting the central importance of the distributional properties of the dual optimizer and investigating the problem from the point of view of the managers risk aversion. Section \ref{sec:example-diff} contains several complete and incomplete market models in which our assumptions hold true. The remaining sections are devoted to the more technical side of the problem. Section \ref{sec:prelim} provides the background on general results on smooth and non-smooth duality theory and discusses how they can put to work for our needs. Section \ref{genresult} contains the detailed proofs on the relationship of the conacavified and the dual problem. Section \
ref{sec:7} draws the conclusions for the original problem and contains the proof of the main theorem. Finally, Section \ref{sec:8} discusses the limitations of the main theorem and provides partial results for an atomic dual optimizer. The conclusions of our exposition are summarized in Section \ref{sec:conclusion}.

After finishing a first version of the present paper, we have learnt of the work of Reichlin \cite{Rei}, who studies the utility maximization problem for more general non-concave utility functions under a fixed pricing measure.

\section{Setting and Main Results}\label{utility}

We start by reviewing utility maximization in a general semimartingale framework and state our main results.  Assume that $S^i$, $i=1,\ldots , d$ is a d-dimensional, locally bounded semimartingale on a filtered probability space $(\Omega, \mathcal{F}, (\mathcal{F}_t)_{0 \leq t \leq T}, \P)$, representing discounted stock price processes; without loss of generality we assume $\mathcal{F}_T = \mathcal{F}$.  We focus on portfolio processes with initial capital $x$ and predictable and $S$-integrable hedging strategies $H$.  The value process of such a portfolio is then given by
\[
X_{t}^{x,H} = x + \int_{0}^{t} H_{s} \, dS_{s}, \qquad  0\le t\le T.
\]
Denote by $\mathcal{X}(x)$ the set of all nonnegative wealth processes  with initial capital $x$,
\begin{align}\label{procX}
\mathcal{X}(x) & = \Bigl\{ X \ge 0\, : \,   X_t = X_t^{x,H}  \mbox{ for some predictable and $S$-integrable strategy }H\Bigr. \nonumber  \\
& \phantom{=\Bigl\{ \Bigr.}\Bigl. \mbox{ for every } 0 \leq t \leq T \Bigr\}.
\end{align}
We refer to $\mathcal{X}(x)$ as the set of all {\em admissible wealth processes}.

We want to look at the portfolio optimization problem from the perspective of a portfolio manager, who is paid with incentives that depend on the performance of the portfolio at some future time $T>0$. In this article we allow the incentive scheme to be a function $g: \mathbb{R}_{\geq 0} \to \mathbb{R}_{\geq 0}$, nonconstant, nondecreasing, convex and with maximal slope $c>0$, i.e.,
\begin{equation}
\sup \bigcup_{x \geq 0} \partial g(x) \leq c.
\label{g-slope}
\end{equation}
We note that the agent's private capital can be absorbed into $g$ (if positive). To simplify the exposition, we will assume throughout this text that, without loss of generality, $c=1$. Setting $\bar{U} \define U \circ g$, the portfolio manager's utility maximization problem is
\begin{equation}\label{original}
u(x) \define \sup_{X \in \mathcal{X}(x)} \E\bigl[ \bar{U}\bigl(X_T\bigr)\bigr].
\end{equation}

\begin{assumption}\label{assumpt:finite}
To make the problem nontrivial, we assume that there exists at least some $x_0 > 0$ such that
\begin{equation*}
\sup_{X \in \mathcal{X}(x_0)} \E\bigl[U\bigl(X_T\bigr)\bigr] <\infty.
\end{equation*}
\end{assumption}

\begin{assumption}\label{equivlMart}
To preclude the possibility of arbitrage in the sense of {\em `free lunch with vanishing risk'} (for details see the work of Delbaen and Schachermayer, \cite{DS}) we assume that the set of equivalent local martingale measures is not empty,
\begin{equation*}
\mathcal{M}^{e} = \Bigl\{ \Q \, : \, \Q \sim \P, \, S \mbox{ is a local } \Q \mbox{-martingale}\Bigr\} \ne \emptyset.
\end{equation*}
\end{assumption}

\begin{assumption}\label{UassKS}
The fund manager's preferences are represented by a utility function $U: \Rplus \rightarrow \R$ (without loss of generality we assume $U(\infty) \define \lim\limits_{x\to\infty}U(x) > 0$). 
\begin{itemize}
\item[a)] We assume that $U$ is strictly increasing, strictly concave and continuously differentiable on $\Rplus$; we extend $U$ continuously to $\mathbb{R}_{\geq 0}$, allowing the value $-\infty$ at $0$;
\item[b)] The utility function satisfies the Inada-conditions
\begin{equation}\label{Inada}
U'(0)\define\lim\limits_{x\rightarrow0} U'(x)=\infty, \qquad  \qquad U'(\infty)\define\lim\limits_{x\rightarrow\infty} U'(x)=0,
\end{equation}
\item[c)]Moreover, it satisfies the asymptotic elasticity condition
\begin{equation}\label{AE}
AE(U) := \limsup_{x \to \infty} \frac{xU'(x)}{U(x)} < 1.
\end{equation}
\end{itemize}
\end{assumption}

These three standard assumptions of utility maximization problems (see, e.g., \cite{KramSchach1}) will be the standing assumptions for the rest of this paper.

Before introducing the dual problem, we recall some standard notions and notation of convex analysis. A function  $f \, : \, \mathbb{U} \subseteq \R \to [-\infty, \infty]$ defined on some convex domain $\mathbb{U}$ is called {\em convex} (respectively {\em concave}) if its epigraph (respectively hypograph)
\[
 \epi f := \bigl\{ (x,\mu) \in \mathbb{U} \times \R \, : \, f(x) \leq \mu \bigr\}, \qquad \hypo f := \bigl\{ (x,\mu) \in \mathbb{U} \times \R \, : \, f(x) \geq \mu \bigr\},
\]
is a convex set. The effective domain of a convex function $f$ is defined as
\[
 \dom f := \bigl\{ x \in \mathbb{U} \subseteq\R \, : \, f(x)< \infty \bigr\} .
\]
Similarly, for a concave function, we define its domain as the set of points in the pre-image not mapping to $-\infty$. Generalizing the usual notations from utility maximization problems in an obvious way, we define, for any function $f$ dominated by some affine function, its {\em convex conjugate} $f^{*}$ and its {\em biconjugate} $f^{**}$ by
\[
f^{*}(y) := \sup_{x \in \dom f} \Bigl(f(x)- xy\Bigr), \qquad f^{**}(x) := \inf_{y \in \dom f^*} \Bigl(f^*(y)+ xy\Bigr).
\]
Note that $f^{**}$ is the concavification of $f$, i.e., the hypograph of $f^{**}$ is the closed convex hull of the hypograph of $f$, $\hypo f^{**} = \overline{\co (\hypo f)}$. We note that $f^*$ is the convex conjugate of $-f(-\, \cdot \, )$ in the classical sense of convex analysis. We will use standard results  of convex analysis (cf., e.g., \cite{HUL}) with the obvious modifications without further notice.

We note that the function $\bar U$ is not necessarily concave, placing the problem \eqref{original} outside the standard setting of utility maximization. Instead of analyzing the non-concave problem \eqref{original} directly, we will first consider the concavified problem
\begin{equation}\label{concavified}
w(x) \define \sup_{W \in \mathcal{X}(x)} \E\bigl[ \bar{U}^{**}\bigl(W_T\bigr)\bigr].
\end{equation}
Similar to \cite{KramSchach1} we define the set of process dual to \eqref{procX} by
\begin{equation*}
\mathcal{Y}(y) := \Bigl\{ Y \geq 0 \, : \, Y_0 = y \mbox{ and } XY \mbox{ is a supermartingale for all } X \in \mathcal{X}(1)\Bigr\}.
\end{equation*}
It turns out then that both problems share the dual problem (see Theorem \ref{main} below), i.e.,
\begin{equation}\label{dualgen}
v(y) \define \inf_{Y \in \mathcal{Y}(y)} \E\bigl[ \bar{U}^{*}\bigl(Y_T\bigr)\bigr].
\end{equation}

In general the concavified utility function $\bar{U}^{**}$ will be neither strictly concave nor satisfy the Inada condition at $0$. Hence, we will have to rely on results for nonsmooth utility maximization (see Theorem \ref{BTZduality} for more details). We will see that Assumptions \ref{assumpt:finite}, \ref{equivlMart}, and \ref{UassKS} place us in a setting where we will be able to apply this  theorem.

Finally let
\[
\beta := \inf \{ x >0 : \bar{U} (x) >-\infty\} \in [0,\infty).
\]
The following are the main theorems of this paper. Theorem \ref{main} establishes the duality relationship between 
$v$ and $w$, and relates $\hat W_T(x)$ and $\hat Y_T(y)$, the optimizers of the problems \eqref{concavified} and \eqref{dualgen}, respectively. Theorem \ref{general} provides conditions under which $\hat X_T(x)$ and $\hat W_T(x)$ -- the optimizers of the problems \eqref{original} and \eqref{concavified}, respectively, are the same.

\begin{theorem}\label{main}
For the utility optimization problem under a convex incentive scheme $g$ it holds that
\begin{itemize}
\item[a)] The functions $u$ and $w$ are finite on $(\beta, \infty)$ as is $v$ on $\mathbb{R}_{> 0}$, and $v = w^*$. Moreover, $v$ is strictly convex on the whole domain $(0,\infty)$ if $U(0)=-\infty$. Otherwise, there exists some $\delta \in (0,\infty]$ such that $v$ is convex on the interval $(0,\delta)$ and constant $\bar{U}^{**}(0)$ on $[\delta, \infty)$.  The function $w$ is continuously differentiable on $(\beta,\infty)$, and concave.
\item[b)] The optimizer $\hat{Y}(y)$ of the dual problem \eqref{dualgen} exists for every $y>0$ and is a.s. unique on $\bigl(0,\bigl(\bar{U}^{**}\bigr)'(0)\bigr)$.
\item[c)] For $x > \beta $ there exists an optimizer $\hat{W}	(x)$ of the concavified problem \eqref{concavified} that satisfies
\[
\hat{W}_T(x) \in - \partial \bar{U}^{*}\bigl(\hat{Y}_T(y)\bigr)
\]
for $y = w'(x)$ such that $\hat{W}(x) \in \mathcal{X}(x)$ and $\hat{W}(x)\hat{Y}(y)$ is a uniformly integrable martingale.
\item[d)] Additionally we have
\[
v(y) = \inf_{\Q \in \mathcal{M}^e} \E\biggl[ \bar{U}^*\biggl(y \frac{d\Q}{d\P}\biggr)\biggr],
\]
however the infimum is in general not attained in $\mathcal{M}^e$.
\end{itemize}
\end{theorem}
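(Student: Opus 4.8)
The plan is to obtain parts (a)--(d) by combining the regularity properties of the concavified utility $\bar U^{**}$ proved in Section~\ref{genresult} with the abstract nonsmooth duality theorem~\ref{BTZduality} applied to the concavified problem \eqref{concavified}, supplemented by a short convexity argument for the uniqueness in (b). The first task is to check that \eqref{concavified} falls within the scope of Theorem~\ref{BTZduality}. By construction $\bar U^{**}$ is concave and nondecreasing. Since $g$ has maximal slope $1$ we have $g(x)\le x+g(0)$, hence $\bar U(x)=U(g(x))\le U(x+g(0))$, and because $U(x+g(0))-U(x)=\int_{x}^{x+g(0)}U'(t)\,dt\to 0$, both $\bar U$ and therefore its least concave majorant $\bar U^{**}$ are dominated by $U$ up to an additive constant away from the origin. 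Consequently the finiteness Assumption~\ref{assumpt:finite}, the Inada condition at $\infty$, and the asymptotic elasticity bound $AE(\bar U^{**})<1$ (this last being part of what is shown in Section~\ref{genresult}) all transfer to $\bar U^{**}$; together with Assumption~\ref{equivlMart} this places \eqref{concavified} inside the framework of Theorem~\ref{BTZduality}.

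Theorem~\ref{BTZduality} then yields simultaneously: finiteness of $w$ on $(\beta,\infty)$ and of $v$ on $\Rplus$; the conjugacy $v=w^{*}$ (equivalently $w(x)=\inf_{y>0}\bigl(v(y)+xy\bigr)$); existence of the dual optimizer $\hat Y(y)$ for every $y>0$; existence for $x>\beta$ of a primal optimizer of \eqref{concavified} with $\hat W_{T}(x)\in-\partial\bar U^{*}(\hat Y_{T}(y))$ at $y=w'(x)$, together with the uniform integrability of $\hat W(x)\hat Y(y)$; and the representation of $v$ over $\mathcal M^{e}$ in (d). For (d), the inclusion of the scaled density processes of equivalent local martingale measures into $\mathcal Y(y)$ gives the inequality $v(y)\le\inf_{\Q\in\mathcal M^{e}}\E\bigl[\bar U^{*}(y\,d\Q/d\P)\bigr]$ directly, and the reverse inequality --- together with the fact that the infimum is generically not attained in $\mathcal M^{e}$, since $\mathcal Y(y)$ is strictly larger --- follows from the characterisation of the dual domain as (essentially) the closure of those densities, as in Kramkov--Schachermayer~\cite{KramSchach1}. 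The passage from the abstract optimal terminal wealth furnished by the theorem to an honest element $\hat W(x)\in\mathcal X(x)$ is routine: once $\hat W(x)\hat Y(y)$ is a uniformly integrable martingale one has $\E[\hat W_{T}(x)\hat Y_{T}(y)]=xy$, which, exactly as in~\cite{KramSchach1}, forces the optimizer to be attained by an admissible trading strategy.

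The remaining assertions --- the uniqueness in (b), the $C^{1}$-regularity of $w$, and the precise shape of $v$ in (a) --- all rest on the fact, established in Section~\ref{genresult}, that $\bar U^{**}$ is continuously differentiable on $(\beta,\infty)$. Because $\bar U^{**}$ has no corners there, its conjugate $\bar U^{*}$ is not affine on any subinterval of $\bigl(0,(\bar U^{**})'(0)\bigr)$ --- i.e.\ it is strictly convex on that open interval --- while it is constant, equal to $\bar U^{**}(0)$, on $\bigl[(\bar U^{**})'(0),\infty\bigr)$, this constant regime being non-void precisely when $\bar U^{**}(0)>-\infty$, that is when $U(0)>-\infty$ (when $U(0)=-\infty$ one has $(\bar U^{**})'(0)=\infty$ and no such regime). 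For the uniqueness in (b): if $\hat Y,\tilde Y\in\mathcal Y(y)$ are both optimal for \eqref{dualgen}, then $\tfrac12(\hat Y+\tilde Y)\in\mathcal Y(y)$ is again optimal, so convexity of $\bar U^{*}$ forces $\bar U^{*}\bigl(\tfrac12(\hat Y_{T}+\tilde Y_{T})\bigr)=\tfrac12\bar U^{*}(\hat Y_{T})+\tfrac12\bar U^{*}(\tilde Y_{T})$ almost surely, and strict convexity of $\bar U^{*}$ on $\bigl(0,(\bar U^{**})'(0)\bigr)$ yields $\hat Y_{T}=\tilde Y_{T}$ a.s.\ on the event where both take values in that interval, which is precisely the claimed uniqueness. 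The $C^{1}$-property of $w$ and the convex/constant dichotomy for $v$ in (a) follow from the smoothness of $\bar U^{**}$ and the Fenchel duality $v=w^{*}$: the absence of corners of $\bar U^{**}$ propagates to the absence of corners of $w$ on $(\beta,\infty)$, equivalently to strict convexity of $v$ on the range of $w'$, while the flat part $\bar U^{**}(0)$ of $\bar U^{*}$ beyond its last corner produces the constant part of $v$, with $\delta$ equal to the right endpoint $w'(\beta^{+})$ of that range.

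I expect the genuine obstacle to lie not in this duality bookkeeping but in the regularity input it rests on: proving that the concavification $\bar U^{**}$ of the possibly non-concave $\bar U=U\circ g$ is continuously differentiable on $(\beta,\infty)$ and still satisfies the asymptotic elasticity bound. A priori concavification may create corners where the affine ``bridges'' of $\bar U^{**}$ attach to the graph of $\bar U$, and ruling this out is exactly what makes $\bar U^{*}$ strictly convex below $(\bar U^{**})'(0)$ --- and hence the dual optimizer unique; this is the technical heart of Section~\ref{genresult}. Once that regularity is in hand, everything else is either a direct application of Theorem~\ref{BTZduality} or the one-paragraph convexity argument above.
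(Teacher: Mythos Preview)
Your approach is essentially the paper's: verify via Proposition~\ref{technicallemma} that $\bar U^{**}$ meets the hypotheses of Theorem~\ref{BTZduality}, read off existence and duality, and then use the $C^1$-regularity of $\bar U^{**}$ (hence strict convexity of $\bar U^{*}$ on $(0,(\bar U^{**})'(0))$) to obtain uniqueness of the dual optimizer and differentiability of $w$. The uniqueness argument in (b) matches the paper exactly.

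One step you gloss over deserves a sentence more. You write that ``the absence of corners of $\bar U^{**}$ propagates to the absence of corners of $w$, equivalently to strict convexity of $v$ on the range of $w'$.'' The equivalence between $C^1$ of $w$ and strict convexity of $v$ is indeed standard Fenchel duality, but the passage from strict convexity of $\bar U^{*}$ to strict convexity of $v$ is not automatic and is what the paper actually argues: for $y_1\neq y_2$ one has $\lambda\hat Y(y_1)+(1-\lambda)\hat Y(y_2)\in\mathcal Y(\lambda y_1+(1-\lambda)y_2)$, and then strict convexity of $\bar U^{*}$ on $(0,(\bar U^{**})'(\beta))$ gives
\[
v(\lambda y_1+(1-\lambda)y_2)\le \E\bigl[\bar U^{*}(\lambda\hat Y_T(y_1)+(1-\lambda)\hat Y_T(y_2))\bigr]<\lambda v(y_1)+(1-\lambda)v(y_2),
\]
provided the optimizers place mass inside that interval. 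This is the same device as your uniqueness argument, just applied at two distinct values of $y$; the paper defines $\delta$ as the supremal $y$ for which $\hat Y_T(y)$ still charges $(0,(\bar U^{**})'(\beta)]$, which may differ from your $w'(\beta^{+})$ in general. Apart from this missing line, your outline and the paper's proof coincide.
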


\begin{theorem}\label{general}
Assume that for every $y \in \bigl(0,w'(\beta)\bigr]$ the terminal value of the dual optimizers $\hat{Y}(y)$ has a continuous cumulative distribution function. Then 
\begin{itemize}
\item[a)] The optimizer $\hat{W}(x)$ for the concavified problem \eqref{concavified} is unique for every $x>\beta$.
\item[b)] For every $x>\beta$ there exists a solution $\hat{X}(x)$ of the original problem \eqref{original} and this solution is unique. It coincides also with $\hat{W}(x)$, the solution of the concavified problem \eqref{concavified}.
\end{itemize}
\end{theorem}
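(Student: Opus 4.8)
The plan is to show that, under the atomlessness hypothesis, the terminal value $\hat W_T(x)$ of the concavified optimizer furnished by Theorem \ref{main}(c) almost surely avoids the \emph{linear region}
\[
\mathcal{L} \define \bigl\{ x \in \dom \bar U^{**} \,:\, \bar U(x) < \bar U^{**}(x) \bigr\},
\]
and then to argue that this is all that is needed. Since $\bar U \le \bar U^{**}$ on $\dom\bar U^{**}$, every $X \in \mathcal{X}(x)$ is also admissible for \eqref{concavified} with $\E[\bar U(X_T)] \le \E[\bar U^{**}(X_T)] \le w(x)$, so $u(x) \le w(x)$; conversely, once we know $\hat W_T(x) \notin \mathcal{L}$ a.s., then $\bar U(\hat W_T(x)) = \bar U^{**}(\hat W_T(x))$ a.s.\ and hence $u(x) \ge \E[\bar U(\hat W_T(x))] = \E[\bar U^{**}(\hat W_T(x))] = w(x)$, so that $u(x) = w(x)$ and $\hat X(x) \define \hat W(x) \in \mathcal{X}(x)$ solves \eqref{original}. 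At this point I would record the standard convex-analytic facts (cf.\ \cite{HUL}): $\mathcal{L}$ is open --- because $\bar U^{**}$ is continuous on the interior of its domain while $\bar U = U \circ g$ is upper semicontinuous --- hence a countable disjoint union of open intervals, on each of which $\bar U^{**}$ is affine; and $\bar U = \bar U^{**}$ off $\mathcal{L}$ by the very definition of $\mathcal{L}$.

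The core step will be a pointwise first-order condition: for \emph{any} optimizer $\hat W$ of \eqref{concavified}, writing $y \define w'(x)$, one has $\hat W_T \in -\partial \bar U^*(\hat Y_T(y))$ almost surely, where $\hat Y(y)$ is a dual optimizer (a.s.\ unique on the relevant range, by Theorem \ref{main}(b)). I would obtain this from Fenchel--Young: $v = w^*$ and $w$ is concave (Theorem \ref{main}(a)), hence $w = w^{**}$ and $w(x) = v(y) + xy$ at $y = w'(x)$; $v(y) = \E[\bar U^*(\hat Y_T(y))]$ since $\hat Y(y)$ attains the infimum in \eqref{dualgen}; the supermartingale property in the definition of $\mathcal{Y}(y)$ gives $\E[\hat W_T \hat Y_T(y)] \le xy$; and Fenchel--Young gives $\bar U^{**}(\hat W_T) \le \bar U^*(\hat Y_T(y)) + \hat W_T \hat Y_T(y)$ pointwise. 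Chaining these yields $\E[\bar U^{**}(\hat W_T)] \le v(y) + xy = w(x)$, and optimality of $\hat W$ forces equality throughout --- in particular a.s.\ equality in Fenchel--Young, which is exactly the displayed inclusion (and which moreover makes $\hat W\hat Y(y)$ a uniformly integrable martingale, extending Theorem \ref{main}(c) to arbitrary optimizers).

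Then I would exploit atomlessness. Let $D \subset \Rplus$ be the at most countable set of slopes of the affine pieces of the concave function $\bar U^{**}$. For $y \notin D$ the set $-\partial\bar U^*(y)$ reduces to a single point $x(y)$ --- two distinct points would force $\bar U^{**}$ to be affine with slope $y$ in between --- and $x(y) \notin \mathcal{L}$, for otherwise $\bar U^{**}$ would be affine near $x(y)$ with some slope $s \in D$, forcing $y = s \in D$ (the endpoints of components of $\mathcal{L}$ lie outside $\mathcal{L}$ to begin with). For $x > \beta$, the value $y = w'(x)$ lies in $(0, w'(\beta)]$, as $w$ is concave and nondecreasing on $(\beta,\infty)$; so by hypothesis $\hat Y_T(y)$ has a continuous distribution function, whence $\P(\hat Y_T(y) \in D) = \sum_{s\in D}\P(\hat Y_T(y) = s) = 0$. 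The first-order condition then gives $\hat W_T = x(\hat Y_T(y)) \notin \mathcal{L}$ a.s., which by the first paragraph proves part (b). For uniqueness, the same singleton property shows that $\hat W_T$ is a.s.\ the same for every optimizer of \eqref{concavified}; the whole wealth process is then pinned down through $\hat W_t\hat Y_t(y) = \E[\hat W_T\hat Y_T(y)\mid\mathcal{F}_t]$, giving part (a). Finally, any optimizer $\hat X(x)$ of \eqref{original} satisfies $\E[\bar U^{**}(\hat X_T)] \ge \E[\bar U(\hat X_T)] = u(x) = w(x)$, so it is also optimal for \eqref{concavified} and hence equals $\hat W(x)$, completing part (b).

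I expect the main obstacle to be the bookkeeping around the first-order condition and the range of dual variables: one has to check that for every $x > \beta$ the value $y = w'(x)$ indeed lies in $(0, w'(\beta)]$ and within the interval where $\hat Y(y)$ is a.s.\ unique (Theorem \ref{main}(b)), that $v(y) = \E[\bar U^*(\hat Y_T(y))]$ for the chosen dual optimizer, and that the degenerate cases are harmless --- for instance an unbounded affine piece of $\bar U^{**}$ with limiting slope $0$ (no problem, since an atomless law has no atom at $0$) and the behaviour of $\bar U = U\circ g$ at $\beta$, where it may drop to $-\infty$. The structural facts about $\bar U^{**}$ (openness of the linear region, affineness on components, agreement with $\bar U$ off $\mathcal{L}$) are classical, but deserve to be stated explicitly, since $\bar U$ is itself neither concave nor smooth.
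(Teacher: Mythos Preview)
Your proof is correct and follows the same overall strategy as the paper: use the continuity of the law of $\hat Y_T(y)$ to show that the concavified optimizer almost surely avoids the region $\mathcal{L}$ where $\bar U < \bar U^{**}$, and deduce $u=w$ and uniqueness from that. The key structural facts you use (openness of $\mathcal{L}$, affineness of $\bar U^{**}$ on its components, the duality $(\bar U^{**})'(\mathcal{L}) = \Gamma$) are exactly those established in Lemma~\ref{Uconcsmooth} and recalled in \eqref{dualag}.

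The one genuine difference is how you obtain the first-order condition. The paper simply invokes Theorem~\ref{main}(c), which asserts that \emph{some} optimizer $\hat W(x)$ satisfies $\hat W_T(x) \in -\partial\bar U^*(\hat Y_T(y))$, and then argues that the right-hand side is a.s.\ a singleton under the atomlessness hypothesis. You instead derive the inclusion for \emph{every} optimizer $\hat W$ directly, by chaining Fenchel--Young with the supermartingale bound $\E[\hat W_T\hat Y_T(y)]\le xy$ and the duality identity $w(x)=v(y)+xy$. This buys you a cleaner uniqueness argument for part (a): the paper's proof, read literally, pins down only the optimizer furnished by Theorem~\ref{main}(c) and leaves implicit why no other optimizer can exist, whereas your argument closes that gap. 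Your treatment of uniqueness in part (b) (any primal optimizer is a concavified optimizer, hence equals $\hat W(x)$) is likewise more explicit than the paper's.
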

At a first glance the condition that the distribution of the dual optimizer has no atoms seems quite abstract and hard to check. Therefore, we present a sufficient condition for no atoms, in terms of equivalent local martingale measures, which can be checked more easily in some concrete models.

\begin{proposition}\label{suffCrit}
Assume that the laws of the Radon-Nikod\'{y}m derivatives $Z_T = \frac{d\Q}{d\P}\vert_{\mathcal{F}_T}$, $\Q \in \mathcal{M}^e$, are uniformly absolutely continuous with respect to the Lebesgue measure $\lambda$ on $\Rplus$ (i.e., the densities $\frac{d\P \circ Z_T^{-1}}{d\lambda}$ are uniformly integrable). Then the terminal value of the optimizer $\hat{Y}(y)$ of the dual problem \eqref{dualgen} has a continuous law. 
\end{proposition}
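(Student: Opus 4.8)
The plan is to show that the assumed uniform absolute continuity of the laws of $\{Z_T : \Q \in \mathcal{M}^e\}$ is inherited by the terminal value $\hat Y_T(y)$ of the dual optimizer. First recall from Theorem \ref{main}(d) that $v(y) = \inf_{\Q \in \mathcal{M}^e} \E[\bar U^*(y\,d\Q/d\P)]$, and that although this infimum need not be attained in $\mathcal{M}^e$, the optimizer $\hat Y(y) \in \mathcal{Y}(y)$ of \eqref{dualgen} does exist by part (b). The standard structure theory for the set $\mathcal{Y}(y)$ (as in Kramkov--Schachermayer \cite{KramSchach1}) tells us that $\hat Y_T(y)/y$ lies in the closed convex hull, in an appropriate sense, of the set of densities $\{Z_T : \Q \in \mathcal{M}^e\}$ — more precisely, $\hat Y_T(y)$ is the $\P$-a.s.\ limit of a sequence $y Z_T^n$ with $Z_T^n$ in the convex hull $\mathcal{Z} := \mathrm{conv}\{Z_T : \Q \in \mathcal{M}^e\}$ (Koml\'os-type convergence after passing to convex combinations). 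So I would reduce the claim to two facts: (i) the family $\mathcal{Z}$ is still uniformly absolutely continuous with respect to $\lambda$, and (ii) uniform absolute continuity is preserved under $\P$-a.s.\ limits.

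For step (i), I would argue directly: if $U_n := d(\P \circ (Z_T^{(n)})^{-1})/d\lambda$ denotes the density of the law of an element $Z_T^{(n)} = \sum_k \alpha_k^{(n)} Z_T^{k}$ of $\mathcal{Z}$, then for any Borel set $A \subseteq \Rplus$ with $\lambda(A)$ small, $\P(Z_T^{(n)} \in A) = \E[\ind_{\{\sum_k \alpha_k^{(n)} Z_T^k \in A\}}]$. This is not literally an average of the $\P(Z_T^k \in A)$, so the cleanest route is instead to pass through uniform integrability of the densities as stated: the hypothesis says $\{d(\P\circ Z_T^{-1})/d\lambda : \Q \in \mathcal{M}^e\}$ is uniformly integrable in $L^1(\lambda)$. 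I would show this uniform integrability is convex-combination-stable (a convex combination of uniformly integrable functions is uniformly integrable, with the same modulus via the de la Vall\'ee-Poussin criterion — pick a common convex $\varphi$ with $\varphi(t)/t \to \infty$ dominating $\sup_{\Q}\int \varphi(d(\P\circ Z_T^{-1})/d\lambda)\,d\lambda$, and use convexity of $\varphi$), hence the laws of elements of $\mathcal{Z}$ have a uniformly integrable family of $\lambda$-densities, which is equivalent to uniform absolute continuity of this family with respect to $\lambda$.

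For step (ii), suppose $y Z_T^{(n)} \to \hat Y_T(y)$ $\P$-a.s.\ (hence in law). Uniform absolute continuity with respect to $\lambda$ of the laws of $y Z_T^{(n)}$ means: for every $\varepsilon > 0$ there is $\eta > 0$ with $\P(y Z_T^{(n)} \in A) < \varepsilon$ whenever $\lambda(A) < \eta$, uniformly in $n$. For a fixed point $a \in \R$ and $\delta > 0$, the interval $A_\delta = (a-\delta, a+\delta)$ has $\lambda(A_\delta) = 2\delta$; by the portmanteau theorem applied to the open set $A_\delta$, $\P(\hat Y_T(y) \in A_\delta) \leq \liminf_n \P(y Z_T^{(n)} \in A_\delta) \le \varepsilon$ once $2\delta < \eta$. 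Letting $\delta \downarrow 0$ gives $\P(\hat Y_T(y) = a) \le \varepsilon$, and since $\varepsilon$ was arbitrary, $\P(\hat Y_T(y) = a) = 0$. As $a$ was arbitrary, $\hat Y_T(y)$ has no atoms, i.e.\ a continuous cumulative distribution function.

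The main obstacle is making precise the structural claim that $\hat Y_T(y)$ is (up to the scaling factor $y$) approximable $\P$-a.s.\ by convex combinations of the densities $Z_T$, $\Q \in \mathcal{M}^e$ — this is where one must invoke the description of $\mathcal{Y}(1)$ from \cite{KramSchach1}, namely that $\mathcal{Y}(1)$ is the closed (in the sense of convergence in probability, or Fatou convergence) solid hull of the set of density processes of $\mathcal{M}^e$, and that the optimizer $\hat Y_T(y)$, being the terminal value of an element of $\mathcal{Y}(y)$ that is the pointwise limit of a maximizing sequence, inherits the a.s.-limit-of-convex-combinations property via a Koml\'os argument. If one prefers to avoid this, an alternative is to prove the statement under the additional mild assumption that $\mathcal{M}^e$ is itself convex and closed enough that the dual optimizer is realized by a genuine equivalent (local) martingale measure or a countable mixture thereof; but the convex-combination route above is the robust one and only uses results already cited in the paper.
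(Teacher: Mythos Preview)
Your overall strategy coincides with the paper's: extract via a Koml\'os-type argument an almost-surely (Ces\`aro) convergent sequence of Radon--Nikod\'ym densities approximating the dual optimizer, check that the uniform absolute continuity hypothesis applies to the approximants, and pass it to the limit. For step (ii) the paper uses an $\varepsilon/3$ argument at continuity points of the limiting cumulative distribution function, while your portmanteau-on-shrinking-intervals argument is a tidy variant; both are correct.

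There is, however, a real slip in your step (i). You rightly observe that $\P\bigl(\sum_k \alpha_k Z_T^k \in A\bigr)$ is not the convex combination $\sum_k \alpha_k \,\P(Z_T^k \in A)$, but the de~la~Vall\'ee-Poussin repair you propose does not work: the Lebesgue density of the \emph{law} of $\sum_k \alpha_k Z_T^k$ is not $\sum_k \alpha_k \, d(\P \circ (Z_T^k)^{-1})/d\lambda$, so bounding $\int \varphi\bigl(\sum_k \alpha_k f_k\bigr)\,d\lambda$ via convexity of $\varphi$ says nothing about the distribution of $\sum_k \alpha_k Z_T^k$. For a concrete failure, take $Z$ uniform on $[1/2,3/2]$ and $Z' = 2 - Z$: both have bounded Lebesgue densities and mean one, yet $\tfrac12(Z+Z') \equiv 1$ is a point mass.

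The fix is simpler than what you attempt, and it is what the paper uses without comment: $\mathcal{M}^e$ is convex, so any finite convex combination $\sum_k \alpha_k Z_T^k$ is \emph{itself} the Radon--Nikod\'ym derivative of some $\Q \in \mathcal{M}^e$, and the hypothesis of the proposition applies to it directly with the same uniform modulus. With this one-line replacement for step (i), your argument is complete and matches the paper's.
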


While this assumption it is quite restrictive, it is the only one we know which works in general without having a priori knowledge of the maximizer. It is in particular satisfied in the Black-Scholes model with nonzero drift. We will show in section \ref{sec:example-diff} that these assumptions are also satisfied in other incomplete market models, such as lognormal mixture models. In more general models -- as stochastic volatility models, see also section \ref{ex:stochVol} -- one can nevertheless derive the result, if one has some a priori knowledge about the optimizer, essentially depending on the measurability properties of the Sharpe ratio.

\section{Examples around the Black-Scholes model}\label{BSpower}

We first present our findings for power utility maximization in the Black-Scholes model with an incentive $g$ of call option type: $g(x)=\lambda(x-k)^{+}$. This \emph{setting} not only allows us to connect our results to previous work \cite{carp} and provide explicit solutions, but it also allows us to illustrate the degeneracy if the Sharpe ratio vanishes -- producing a purely atomic Radon-Nikod\'{y}m derivative.  Another benefit of studying this setting is that it allows us to analyze the situation from the point of view of the (relative) risk aversion of the manager. Specifically, we address the issue of the optimal relation between the number of options and the value of the strike among all those producing the same relative risk aversion (compare this to \cite{Ross}). 

\begin{example}\label{counter}
Assume that the discounted stock price is modeled by
\[
S_t = \exp{\Bigl(\sigma W_t +\bigl(\mu-\sigma^2/2\bigr)t\Bigr)}, \qquad \mu\ge 0, \, \, \sigma >0,
\]
for some Brownian motion $W$ generating the filtration $(\mathcal{F}_t)$. These stock price dynamics, together with the riskless num\'{e}raire, describe a complete market. The set of all equivalent local martingale measures is hence the singleton $\mathcal{M}^e=\{\Q\}$, where the measure $\Q$ is given by the Radon-Nikod\'{y}m density $Z_T = \frac{d\Q}{d\P}\vert_{\mathcal{F}_T} = \exp{\bigl\{ - \theta W_T - \frac{\theta^2}2 T\bigr\}}$ with market price of risk  $\theta \define \frac{\mu}{\sigma}$ and $W^{\Q}_t \define W_t + \theta t$ is a $\Q$-Brownian motion. Furthermore, let the incentive scheme be given by $g(x) =\lambda(x-k)^{+}$, $k>0$, $0<\lambda<1$.

The portfolio manager's utility will be given by the function be $U$. We will now consider two cases, $U(0) =-\infty$ and $U(0)>-\infty$. In the second case, we will assume without loss of generality that $U(0)=0$ and find
\begin{align*}
\bar{U}(x) & = \left\{ \begin{array}{ll} 0 & 0 \leq x \leq k,\\ U\bigl(\lambda(x-k)\bigr) & x > k,\end{array} \right. \quad
 \bar{U}^{**}(x) = \left\{ \begin{array}{ll} \ystr x & 0 \leq x \leq \xstr,\\ U\bigl(\lambda(x-k)\bigr)  & x > \xstr,\end{array} \right.  \\
 \bar{U}^{*}(y) & = \left\{ \begin{array}{ll} U^{*}(y/\lambda)-ky & 0 < y \leq\ystr,\\ 0 & y >  \ystr,\end{array} \right.
\end{align*}
where $\xstr$ is the solution of $\lambda\xstr U'\bigl(\lambda(\xstr-k)\bigr)=U\bigl(\lambda(\xstr-k)\bigr)$, and $\ystr = \lambda U'\bigl(\lambda(\xstr-k)\bigr)$.
In the former case
\[
\bar{U}^{**}(x) = \bar{U}(x) =  \left\{\begin{array}{ll} - \infty & 0 < x \leq k, \\ U\bigl(\lambda(x-k)^{+}\bigr) & x > k, \end{array}\right. \quad
 \bar{U}^{*}(y) = U^{*}(y/\lambda)-ky, \;\; y > 0.
\]
To make our example more computationally tractable, we will focus on the power utility case $U(x) =\frac{x^p}{p}$ with $0<p<1$. Here $U^{*}(y) = \frac{1-p}{p}y^{\frac{p}{p-1}}$, $\xstr = \frac{k}{1-p}$, and $\ystr = \lambda^p\bigl(\frac{p}{1-p}k\bigr)^{p-1}$, and thus
\begin{align}
\bar{U}(x) & = \left\{ \begin{array}{ll} 0 & 0 \leq x \leq k,\\ \frac{(\lambda(x-k))^p}{p} & x > k,\end{array} \right. \quad
 \bar{U}^{**}(x) = \left\{ \begin{array}{ll} \lambda^p x( \frac{pk}{1-p})^{p-1}  & 0 \leq x \leq \frac{k}{1-p},\\ \frac{\bigl(\lambda(x-k)\bigr)^p}{p}   & x > \frac{k}{1-p},\end{array} \right. \nonumber \\
 \bar{U}^{*}(y) & = \left\{ \begin{array}{ll} \frac{1-p}{p}\Bigl(\frac{y}{\lambda}\Bigr)^{\frac{p}{p-1}}-ky & 0 < y \leq\ystr,\\ 0 & y >  \ystr.\end{array} \right.
 \label{eq:U}
\end{align}

For the illustration in Figure \ref{fig:numerical3} we assumed $p=\frac12,\lambda=\frac14, k=3,U(x) = 2\sqrt{x}$. Then it is easily seen that $\xstr = 6,~\ystr=\frac{\sqrt{3}}{6}$ and
\begin{align}
\bar{U}(x) & = \left\{ \begin{array}{ll} 0 & 0 \leq x \leq 3,\\ \sqrt{x-3} & x > 3,\end{array} \right. \quad
 \bar{U}^{**}(x) = \left\{ \begin{array}{ll} \frac{\sqrt{3}}{6} x & 0 \leq x \leq 6,\\ \sqrt{x-3} & x > 6,\end{array} \right.  \nonumber \\
 \bar{U}^{*}(y)& = \left\{ \begin{array}{ll} \frac{1}{4y} -3y & 0 < y \leq\frac{\sqrt{3}}{6},\\ 0 & y >  \frac{\sqrt{3}}{6}.\end{array} \right.
 \label{eq:U-example}
\end{align}

 \begin{figure}[htbp]
 \begin{center}
 \includegraphics[width=0.45\linewidth]{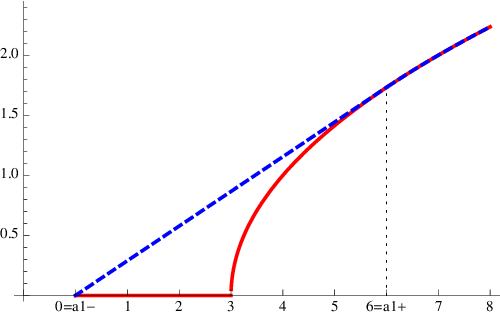}\qquad\qquad  \includegraphics[width=0.45\linewidth]{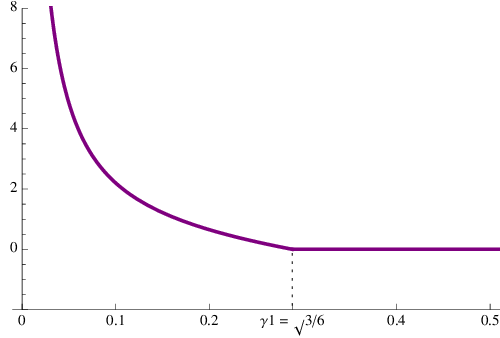}
 \caption{Left: Composed utility function $\bar{U}$ of equation \eqref{eq:U-example} and its concavification $\bar{U}^{**}$. Right: Dual utility function $\bar{U}^*$.}
 \label{fig:numerical3}
 \end{center}
 \end{figure}

The non-atomicity condition on the dual optimizer (here just the Radon Nikod\'{y}m derivative $\frac{d\Q}{d\P}$) implies that we have to consider two distinct cases: either $\mu$ is positive (equivalently, we could assume $\mu$ negative), or $\mu=0$. The reason for this distinction lies in the fact that with $\theta=0$, the original measure $\P$ is already the (unique) risk-neutral measure $\Q$. In other words, $Z_T = \frac{d\Q}{d\P}\vert_{\mathcal{F}_T} = 1$. Hence, the random variable $Z_T$ has an atom of mass one at $1$. In the other case the random variable $Z_T=\frac{d\Q}{d\P}\vert_{\mathcal{F}_T}$ has a (smooth) density. 
\subsection{Case 1: $\theta>0$}\label{ex:nonZeroDrift}
We proceed first with the case in which the random variable $Z_T$ has a density. Tedious, but straightforward stochastic calculus reveals the following results. The dual value function is given by 
\begin{align*}
v(y) & = \E \bigl[\bar{U}^*(yZ_T)\bigr] = \frac{1-p}p\Bigl(\frac{y}{\lambda}\Bigr)^{\frac{p}{p-1}}e^{\frac{p}{(1-p)^2}\frac{\theta^2 T}{2}} \Phi\bigl(d_{+} \bigr) -ky\Phi\bigl(d _{-}\bigr),\\
d_{+} & :=\frac{\log\ystr-\log y}{\theta\sqrt{T}}+ \Bigl(\frac12+\frac{p}{1-p}\Bigr)\theta\sqrt{T},\\
d_{-} & :=\frac{\log\ystr-\log y}{\theta\sqrt{T}}-\theta\sqrt{T},
\end{align*}
where $\Phi$ is the cumulative distribution function of the normal distribution. Thus, $v$ is continuously differentiable and strictly concave on the whole real line. Therefore, using the fact that $w^{*}=v$, the function $w$ is also strictly concave and continuously differentiable on $\Rplus$. For the terminal value of the optimizer we obtain (we can use almost everywhere defined derivatives since the law of $\hat{Y}_T(y) = y\frac{d\Q}{d\P}$ has no atoms)
\begin{align*}
\hat{X}_T(x) & = \hat{W}_T(x) = -\bigl(\bar{U}^*\bigr)' \Bigl(\hat{Y}_T\bigl(w'(x)\bigr)\Bigr) \\
& = \left(\Biggl(\frac{\lambda^p e^{ \theta W_T + \frac{\theta^2 T}{2} }}     {w'(x)}\Biggr)^{\frac1{1-p}}+ k\right)\mathbbone_{\bigl\{W_T > \frac{1}{\theta}\bigl(  \log w'(x)-\log \ystr  \bigr) - \frac{\theta T}{2}\bigr\}}.
\end{align*}
To compute the optimal strategy we simply observe that
\begin{align*}
f(t,z) & = \frac1{\sqrt{2\pi  (T-t)   }}\int_{\frac{1}{\theta}\bigl(  \log w'(x)-\log \ystr  \bigr) + \frac{\theta T}{2}}^\infty e^{ -\frac{(y-z)^2}{2(T-t)} } \left(\Biggl(\frac{\lambda^p e^{ \theta y - \frac{\theta^2 T}{2} }}     {w'(x)}\Biggr)^{\frac1{1-p}}+ k\right)  dy
\end{align*}
solves the (reverse) heat equation on $[0,T) \times \R$ with terminal condition
\[
f(T,z) = \left(\Biggl(\frac{\lambda^p e^{ \theta z - \frac{\theta^2 T}{2} }}{w'(x)}\Biggr)^{\frac1{1-p}}+ k\right)\mathbbone_{\bigl\{z > \frac{1}{\theta}\bigl(  \log w'(x)-\log \ystr  \bigr) + \frac{\theta T}{2}\bigr\}}
\]
satisfying $f(T, W_T^\Q) = \hat{X}_T(x)$. Then it follows from It\^{o}'s formula that 
\[
\hat{X}_T(x) = x + \int_0^T f_z(t,W_t^{\Q})dW_t^{\Q} = x+ \int_0^T \frac{f_z(t,W_t + \theta t)}{\sigma S_t}dS_t. 
\]
Thus, the optimal strategy (in terms of cash invested in stock) is simply $\frac{f_z(t,W_t + \theta t)}{\sigma}$.

Explicitly, we derive
\begin{align*}
f(t,z) = & \Biggl(\frac{\lambda^p } {w'(x)}\Biggr)^{\frac1{1-p}} e^{ \frac{\theta z}{1-p} +\frac{\theta^2}{2(1-p)}\bigl(-T + \frac{T-t}{1-p} \bigr) } \\
& \phantom{=} \cdot \Phi\Biggl( - \frac{\log w'(x)-\log \ystr}{\theta\sqrt{T-t}} - \frac{\theta T}{2\sqrt{T-t}}  +\frac{z}{\sqrt{T-t}} + \frac{\theta\sqrt{T-t}}{1-p }\Biggr)\\
& + k \Phi \Biggl( - \frac{\log w'(x)-\log \ystr}{\theta\sqrt{T-t}} - \frac{\theta T}{2\sqrt{T-t}} + \frac{z}{\sqrt{T-t}}\Biggr).
\end{align*}
Thus, the optimal strategy in terms of money invested in stock is given by
\begin{align*}
H_t(x) = & \frac{\theta}{(1-p)\sigma } \Biggl(\frac{\lambda^p } {w'(x)}\Biggr)^{\frac1{1-p}} e^{ \frac{\theta (W_t+\theta t) }{1-p}+ \frac{\theta^2}{2(1-p)}\bigl(-T + \frac{T-t}{1-p} \bigr) } \\
& \phantom{=} \cdot\Phi \Biggl( - \frac{\log w'(x)-\log \ystr}{\theta\sqrt{T-t}} - \frac{\theta T}{2\sqrt{T-t}}  +\frac{W_t+\theta t}{\sqrt{T-t}} + \frac{\theta\sqrt{T-t}}{1-p }\Biggr)\\
&+\frac{1}{\sigma\sqrt{T-t} } \Biggl(\frac{\lambda^p } {w'(x)}\Biggr)^{\frac1{1-p}} e^{ \frac{\theta (W_t+\theta t) }{1-p} +\frac{\theta^2}{2(1-p)}\bigl(-T + \frac{T-t}{1-p} \bigr) } \\
& \phantom{=} \cdot\varphi \Biggl( - \frac{\log w'(x)-\log \ystr}{\theta\sqrt{T-t}} - \frac{\theta T}{2\sqrt{T-t}}  +\frac{W_t+\theta t}{\sqrt{T-t}} + \frac{\theta\sqrt{T-t}}{1-p }\Biggr)\\
& + \frac{k}{\sigma\sqrt{T-t} }  \varphi \Biggl( - \frac{\log w'(x)-\log \ystr}{\theta\sqrt{T-t}}  - \frac{\theta T}{2\sqrt{T-t}}  +\frac{W_t+\theta t}{\sqrt{T-t}} \Biggr).
\end{align*}
We note that this can also be derived more generally by using \cite[Theorem 6.2]{OconeKaratzas} and observing that their condition (6.7) is always satisfied in our case, as asymptotic elasticity implies the inequalities of \cite[Lemma 6.3]{KramSchach1} (where we only have to replace the derivatives by suprema of subdifferentials). 

On a more practical side, we would like to investigate the optimization problem conditional on the portfolio manager's risk aversion. It turns out that the proper concept for this issue is to look on relative risk aversion (RRA) and to apply this also to the dual value function.
Using the fact that $(w')^{-1} = v'$, it follows that $v''(y) = \frac1{w''(x)}$ for $y=w'(x)$, and we can compute the relative risk aversion
\[
RRA_{v}(y) = -\frac{yv''(y)}{v'(y)} = -\frac{w'(x)}{xw''(x)} = \frac{1}{RRA_{w}(x)} =\frac{1}{RRA_{w}(v'(y))}. 
\]
Using the fact the $d_{-}'=d_{+}'=-\frac{1}{y\theta\sqrt{T}}$, and that $\Phi ''(x) = -x\Phi '(x)$, we obtain
\begin{align*}
& \phantom{=} RRA_{v}(y) \\ & = \frac{\lambda^{\frac{p}{1-p}} y^{\frac{1}{p-1}} e^{\frac{p}{(1-p)^2}\frac{\theta^2 T}{2}}
 \Bigl( \frac{\Phi(d_{+})}{1-p}+\frac{\Phi'(d_{+}) }{p\theta\sqrt{T}} + \frac{\Phi'(d_{+})}{\theta\sqrt{T}} - \frac{1-p}{p}\frac{d_{+}\Phi'(d _{+}) }{\theta^2 T} \Bigr) + \frac{k\Phi'(d_{-}) } {\theta\sqrt{T}} \Bigl( 1 + \frac{d_{-}}{\theta\sqrt{T}}\Bigr)}{-\lambda^{\frac{p}{1-p}} y^{\frac{1}{p-1}} e^{\frac{p}{(1-p)^2}\frac{\theta^2 T}{2}}
\Bigl( \Phi(d _{+})+\frac{1-p}{p}\frac{\Phi'(d _{+}) }{\theta\sqrt{T}}\Bigr) +\frac{k}{\theta\sqrt{T}} \Phi '(d_{-})- k\Phi(d_{-})}\\
& =\frac{\lambda^{\frac{p}{1-p}} y^{\frac{1}{p-1}} e^{\frac{p}{(1-p)^2}\frac{\theta^2 T}{2}}
\Bigl( \frac{\Phi(d _{+})}{1-p}+\Phi'(d _{+})\bigl( -\frac{1-p}{p}\frac{\log \ystr-\log y}{\theta^3T^{\frac32}} + \frac12\frac{1+p}{p}\bigr)\Bigr) + \frac{k\Phi'(d_{-}) } {\theta\sqrt{T}} \frac{\log\ystr-\log y}{\theta^2T}}{-\lambda^{\frac{p}{1-p}} y^{\frac{1}{p-1}} e^{\frac{p}{(1-p)^2}\frac{\theta^2 T}{2}}
\Bigl( \Phi(d_{+})+\frac{1-p}{p}\frac{\Phi '(d_{+}) }{\theta\sqrt{T}}\Bigr) +\frac{k}{\theta\sqrt{T}} \Phi '(d_{-})- k\Phi(d_{-})      }.
\end{align*}

\begin{figure}[htbp]
\begin{center}
\includegraphics[width=.45\linewidth]{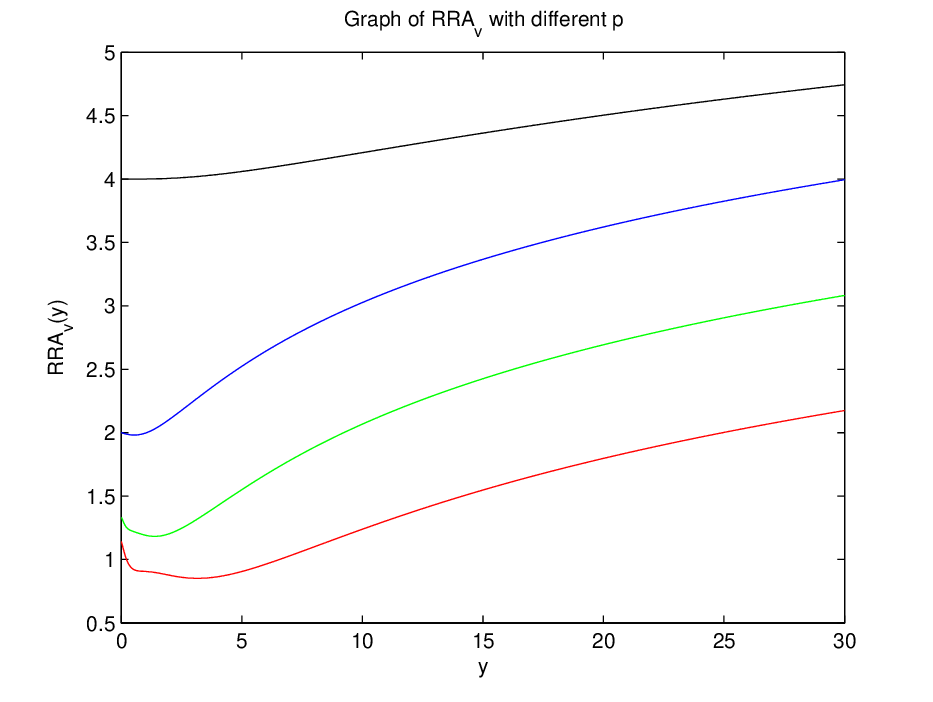}\qquad\qquad  \includegraphics[width=0.45\linewidth]{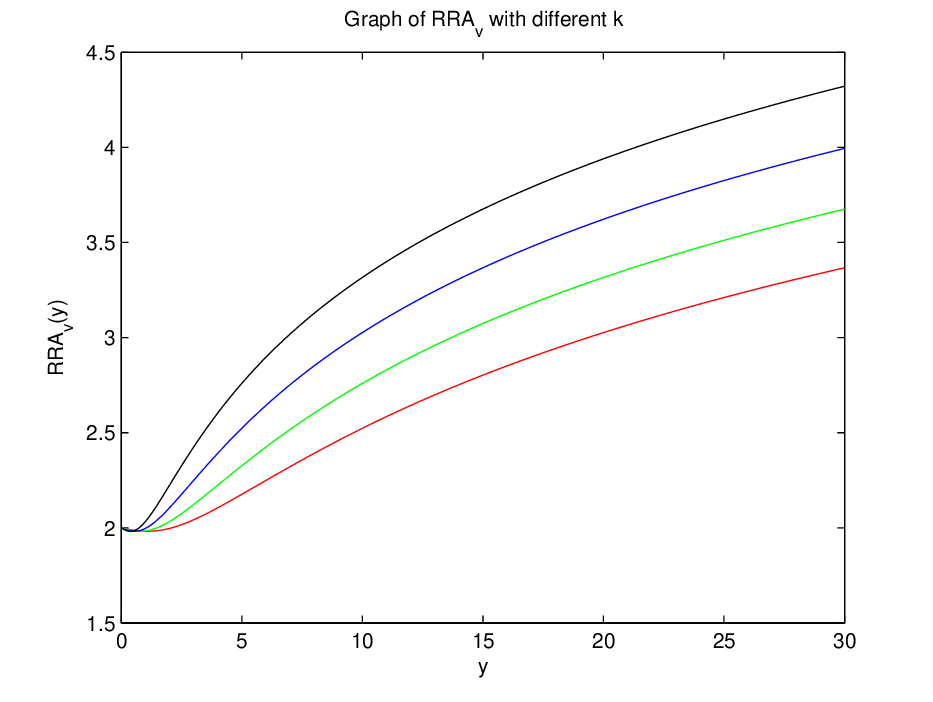}
\caption{Left: $RRA_{v}$ for values of $p=\frac18,\frac14,\frac12,\frac34$ in red, green, blue and black. Right: $RRA_{v}$ for values of $k=\frac14,\frac12,1,2$ in red, green, blue and black.}
\label{fig:numerical4}
\end{center}
\end{figure}

For the rest of this subsection, to highlight the dependency on $k$ and $\lambda$, we will write $\bar U^{*}_{k,\lam}(y)$, $u_{k,\lam} (x)$, $v_{k,\lam}(x)$, $w_{k,\lam}(x)$ for the concavified dual utility functions, the value function, its dual, and the concavified value function, respectively. 

Notice that if we parametrize $k$ and $\lam$ by $\alpha>0$ in the way that $\alpha \mapsto \bigl(k(\alpha), \lam(\alpha)\bigr) = \Bigl(\alpha \kappa, \alpha^{\frac{1-p}{p}} l\Bigr)$, then $\ystr= \lambda^p(\alpha)\Bigl(\frac{p}{1-p}k(\alpha)\Bigr)^{p-1}  = l^p \bigl(\frac{1-p}{p \kappa}\bigr)^{1-p}$ does not depend on $\alpha$. The same is true for $d_{\pm}$, and it follows that $v^{k(\alpha),\lam(\alpha)}(y)=\alpha v^{k(1),\lam(1)}(y)$. This is not surprising, since the same scaling property holds for the concavified dual utility function $\bar U^{*}_{k(\alpha),\lam(\alpha)} = \alpha \bar U^{*}_{k(1),\lam(1)}$.
Finally, we conclude that $RRA_{v^{k(\alpha),\lam(\alpha)}}(y)$ does not depend on $\alpha$. That is, the relative risk aversion of the fund manager is does not change when his compensation scheme is scaled in the above way.
Additionally, it is easily seen that
\[
w^{k(\alpha),\lam(\alpha)}(x) =  \inf_{y >0} \Bigl(v^{k(\alpha),\lam(\alpha)}(y)+ xy\Bigr) = \alpha\inf_{y >0} \Bigl(v^{k(1),\lam(1)}(y)+ \frac{x}{\alpha}y\Bigr) = \alpha w^{k(1),\lam(1)}\Bigl(\frac{x}{\alpha}\Bigr).
\]
Note that the family $\bigl\{ v^{k(\alpha),\lam(\alpha)}(y) \bigr\}_{\alpha>0}$ includes all the functions (up to an additive constant) that have the same relative risk aversion as the original dual function $v$. 

This leads finally to the following questions. Among the incentive schemes with same relative risk aversion $RRA_w(x)$, is there is one that is optimal from the manger's point of view?  If so, how it can be characterized? It turns out that the answer to the former question is affirmative if there is some $c^*$ such that the elasticity of $w^{k(1),\lam(1)}$ is equal to one: $E(w^{k(1),\lam(1)})(c^*)=1$. The elasticity of a utility function $U$ being defined as
\begin{equation}\label{E}
E(U)(c) := \frac{cU'(c)}{U(c)},~c>0,
\end{equation}
(compare this with the definition of the asymptotic elasticity in \eqref{AE}). Indeed, any solution $\alpha$ to $\frac{\partial}{\partial \alpha}w^{k(\alpha),\lam(\alpha)}(x) = w^{k(1),\lam(1)}\bigl(\frac{x}{\alpha}\bigr) - \frac{x}{\alpha}(w^{k(1),\lam(1)})'\bigl(\frac{x}{\alpha}\bigr)=0$ is precisely a solution to $E\bigl(w^{k(1),\lam(1)}\bigr)\bigl(\frac{x}{\alpha}\bigr)=1$. Hence, an optimal solution is characterized via $\alpha = \frac{x}{c^*}$, and thus $g(x) = \bigl(\frac{x}{c^*}\bigr)^\frac{1-p}{p}\bigl(x - \frac{x}{c^*}\bigr)^+$. Moreover, because $w^{k(\alpha),\lam(\alpha)}$ is concave this is a maximum. In the case that $E(w^{k(\alpha),\lam(\alpha)})<1$ on $\Rplus$ there is no optimal $\alpha$, as the manager's expected utility increases as $\alpha$ tends to infinity since $\frac{\partial}{\partial \alpha}w^{k(\alpha),\lam(\alpha)}>0$.

\subsection{Case 2: $\theta=0$}\label{ex:zeroDrift}
We now consider the second case, in which $\theta=0$. We remind the reader that we are still assuming that $U(x) =\frac{x^p}{p},~0<p<1$. Also, for future reference, note that 
\begin{equation}
\bar{U}^{**}(x) > \bar{U}^{}(x),~x\in(0,\xstr).
\label{eq:inequality-U}
\end{equation}
Indeed, this holds for $x\in(0,k]$, where from \eqref{eq:U} $\bar{U}^{**}(x) = \lambda^p x( \frac{pk}{1-p})^{p-1} >0$, and for $x\in(k,\xstr)$ it is true, because  $\bigl( \bar{U}^{**}\bigr)'(x) < \bigl( \bar{U}^{}\bigr)'(x)$ and $\bar{U}^{**}(\xstr) = \bar{U}^{}(\xstr)$. As mentioned above this case is different from Case 1 as we have now $\mathcal{M}^e=\{\P\}$ with $Z_T = \frac{d\P}{d\P}\vert_{\mathcal{F}_T} = 1$. It follows that the dual value function is given by $v(y) = \bar{U}^*(y)$ and thus, using the fact that $w^{*}=v$, we have for the concavified problem $w(x) = \bar{U}^{**}(x)$.

We first consider the case when $x\in(0,\xstr)$. We have from duality that $w(x) =  \bar{U}^{**}(x)$. This optimum is of course attained by the trivial strategy $H \equiv 0$ yielding the optimal wealth process $\hat{W}(x) \equiv x$ for the concavified problem. However, plugging this into the original problem yields $\E\bigl[\bar{U}\bigl(\hat{W}_T(x)\bigr)\bigr] = \bar{U}^{}(x)$. Thus, $\hat{W}_T(x)$ is an optimizer for the concavified problem, but yields a smaller value for the original problem.  Moreover, $\hat{W}_T(x)$ is even not an optimizer for the primal problem, as we will show that $\bar{U}^{}(x)<u(x) =w(x) = \bar{U}^{**}(x)$.

In this example, a way around this problem can be seen by thinking in terms of investment strategies. Not only does the trivial strategy $H \equiv 0$ lead to the optimum for the concavified problem, but so does every strategy with terminal value $\hat{W}_T(x)$ satisfying $\supp \hat{W}_T(x) \subseteq [0,\xstr]$, since in the interval $[0,\xstr]$ the concavified utility function $\bar{U}^{**}$ is linear.  Therefore, by the martingale property of the wealth process under $\P$, we have $\E \bigl[\bar{U}^{**} \bigl(\hat{W}_T(x)\bigr)\bigr] = \bar{U}^{**}(x)$. However, any strategy yielding a terminal value $\hat{W}_T(x)$, which has some support in $(\xstr,\infty)$ is clearly not optimal by the strict concavity of the concavified utility function. Finally, a strategy that maximizes, not only the concavified problem, but also yields the same value for original problem, has to satisfy $\supp \hat{X}_T(x) = \supp \hat{W}_T(x) = \{0,\xstr\}$ since $\bar{U} < \bar{U}^{**}$ on $(0,\xstr)$ 
by \eqref{eq:inequality-U}.

The existence of this strategy follows from a simple application of martingale representation theorem. Indeed, fix $a$ such that $\Phi(a/\sqrt{T})=\frac{x}{\xstr}$. Then, by the martingale representation theorem, the random variable $ \xstr\ind_{\{W_T<a\}}$ has the representation $\xstr\ind_{\{W_T<a\}} = x + \int_0^T H_sdW_s$, where $H\in L^2([0,T]\times\Omega)$. Thus, $\xstr\ind_{\{W_T<a\}} = x + \int_0^T \frac{H_s}{\sigma S_s}dS_s$. \footnote{We thank an anonymous referee for pointing out this straightforward existence proof.}

It turns out that one can easily explicitly construct the optimal strategy by using a strategy similar to the classical doubling strategy in the Black-Scholes model. However, contrary to the classical dubbling strategy our strategy will be admissible. Define the strategy $H_t = \frac{1}{\sigma S_t \sqrt{T-t}}$, which gives rise to the value process
\[
X_t^{1,H} = x +  \int_0^t \frac{dS_s}{\sigma S_s\sqrt{T-s}} = x +  \int_0^t \frac{dW_s}{\sqrt{T-s}}.
\]
We note that $X^{1,H}$ is a local martingale with quadratic variation process
\[
\Bigl\langle X^{1,H} \Bigr\rangle_t = \int_0^t \frac{ds}{T-s} = \log{\frac{T}{T-t}}, 
\]
hence, it is a time changed Brownian motion $X_t^{1,H} = x+\tilde{W}_{\log{\frac{T}{T-t}}}$. Defining now the stopping time $\tau := \inf \{t \geq 0 \, : \, X_t^{1,H} \notin [0,\xstr]\}$ we can see that we have for the stopped strategy $H_t^\tau = \frac{1}{\sigma S_t^\tau \sqrt{T-t}}$,
\[
X_t^{1,H^\tau} = x +  \int_0^t \mathbbone_{\{s \leq \tau \}}\frac{dW_s}{\sqrt{T-s}} = x+\tilde{W}_{\log{\frac{T}{T-t}}}^\tau. 
\]
Thus, the process $X^{1,H^\tau}$ hits either $0$ or $\xstr$ before time $T$ a.s. and the stopped process at terminal time, $X_T^{1,H^\tau}$, is hence almost surely concentrated on $\{0,\xstr\}$. Thus, $H^\tau$ is indeed a strategy which yields the optimum. 

This example, specifically the treatment of the case $\theta=0$ in section \ref{ex:zeroDrift}, reveals yet an other interesting fact. While the dual optimizer $\hat{Y}_T(y) =y$ is purely atomic for ever $y>0$, for $x\ge \xstr$ it nevertheless follows that $w(x) =  \bar{U}^{**}(x) = \bar{U}(x)$ is reached also by the trivial strategy $H \equiv 0$.  However, in this case, the solution of the concavified and the original problem coincide. This means that the condition of the atomlessness of the dual optimizer is not a necessary one, at least as one does not require an existence result which is independent of the initial capital.
\end{example}

\section{Examples of Models in the Class of It\^{o} Process}\label{sec:example-diff}

We now want to illustrate that Theorems \ref{main} and \ref{general} not only hold in Black-Scholes type markets, but also in many complete and incomplete markets, where the stock price process is given by an It\^{o} process.  First we will consider complete market models given by one-dimensional It\^{o} processes and prove a general sufficient condition in terms of Malliavin differentiability, which can be applied, e.g., to local volatility models. Then we show that some classes of incomplete market models, such as the lognormal mixture models of Brigo and Mercurio \cite{Brigo}, satisfy the conditions of Proposition \ref{suffCrit}. Additionally, we show that, under certain assumptions, stochastic volatility models satisfy directly the conditions of Theorem \ref{general}. We also provide examples of well-known models by Hull-White, Heston and Scott satisfying those assumptions.

\begin{example}\label{ex1}
{\bf (One dimensional diffusion models)}: Let $W$ be a one-dimensional
\footnote{Generalization to the  multi-dimensional case is straightforward. However, to make the exposition more tractable, we stay in one dimension.}
 Brownian motion, defined on some probability space $(\Omega, \mathcal{F}, P)$.  Denote by $\bigl(\mathcal{F}_t^{W}\bigr)$ the filtration generated by the Brownian motion, augmented by all $\P$-negligible sets (as usual, we assume without loss of generality that $\mathcal{F}_T^{W}=\mathcal{F}$). Additionally, let $\mathcal{B}([0,t])$ denote the Borel-$\sigma$-field on the interval $[0,t]$. Let the stock price process given by
 \begin{equation}
 dS_t = \mu_t S_t \, dt + \sigma_t S_t \, dW_t, \qquad S_0 =s,
 \label{eq:S}
 \end{equation}
 where $\mu_t$ and $\sigma_t$ are $\mathcal{F}_t^{W} \otimes \mathcal{B}([0,t])$-progressive processes satisfying
 \[
  \E\biggl[e^{2 \int_0^T \vert \mu_t \vert \, dt} + e^{\int_0^T \sigma_t^2 \, dt}\biggr] < \infty \qquad \mbox{and} \qquad \sigma > 0 \quad P \otimes dt\mbox{-a.e.}
 \]
 In particular, we do not assume any Markovianity of the drift or diffusion coefficient. Moreover, let the money market account be given by
 \[
 dB_t = r_t B_t \, dt, \qquad B_0 =1
 \]
 for some progressive interest process $r$ satisfying $\E \bigl[e^{\int_0^T \vert r_t \vert \, dt }\bigr]<\infty$. Define the market price of risk $\theta$ through
 \[
 \theta_t \sigma_t = \mu_t - r_t.
 \]
 To preclude arbitrage in the sense of a {\it 'free lunch with vanishing risk'}, we must assume that the market price of risk satisfies
 \[
 \E\Biggr[\mathcal{E}\biggl(-\int_0^{\bf \cdot} \theta_t \, dW^1_t \biggr)_T \Biggr] = 1,
 \]
 where $\mathcal{E}(X)_T := \exp{\bigl(X_T-1/2\langle X\rangle_T\bigr)}$ denotes the stochastic (Dol\'{e}ans-Dade) exponential of the semimartingale $X$. Additionally, for our results, we have to assume a little bit more regularity in terms of Malliavin differentiability (for a reference on Malliavin calculus see \cite{Nua}, \cite{ENua}).  We are in a one-dimensional stochastic volatility model.  Hence, the underlying Hilbert space $\mathcal{H}$ is given by $L^2([0,T];\mathbb{R})$, endowed with the canonical inner product. For $p\ge 1$ we denote by
 \begin{align}
 \mathbb{D}^{1,p}& := \biggl\{ F \in L^p(\Omega, \mathcal{F}, P) \, : \, \Vert F \Vert_{1,p} := \Bigl(\E[\vert F \vert^p]  + \E\bigl[  \Vert DF \Vert_\mathcal{H}^p \bigr]\Bigr)^\frac{1}{p} <\infty \biggr\}\label{eq:D1p} \\
  &= \Biggl\{ F \in L^p(\Omega, \mathcal{F}, P) \, : \, \Vert F \Vert_{1,p} := \biggl(\E[\vert F \vert^p]  + \E\biggl[  \int_0^T (D_tF)^p\, dt \biggr]\biggr)^\frac{1}{p} <\infty \Biggr\}\nonumber
 \end{align}
 the subspace of $L^p$ of random variables with $p$-integrable Malliavin derivatives. We note that $D_tF$ denotes the the Malliavin derivative. Moreover, denote by $\mathbb{L}^{1,2}$ the class of all processes $u \in L^2(\Omega \times [0,T])$ such that $u_t \in \mathbb{D}^{1,2}$ for almost all $t$ such that there exists a measurable version of the two-parameter process $D_s u_t$ satisfying
 \[
 \E \int_0^T\int_0^T (D_su_t)^2 \, ds \, dt < \infty.
 \]
 \begin{assumption}\label{Ass2}
 We assume that $\theta^2 \in \mathbb{L}^{1,2}$ and 
\begin{equation}
 \E\Biggr[\Biggl(\mathcal{E}\biggl(-\int_0^{\bf \cdot} \theta_t \, dW^1_t \biggr)_T\Biggr)^2 \Biggr] <\infty.
 \label{eq:Ass2}
 \end{equation}
 \end{assumption}
We note, in particular, that this Assumption is satisfied by {\em local volatility models}
\begin{equation}
dS_t = \mu( t, S_t) S_t \, dt + \sigma (t, S_t) S_t \, dW_t, \qquad S_0 =s,
\label{eq:S1}
\end{equation}
as long as $\mu(t,s)$ and $\sigma(t,s)$ are nice enough. Specifically, a sufficient condition is that $\mu(t,s)$, $\mu_s(t,s)s$, $\sigma(t,s)$, and  $\sigma_s(t,s)s$ are bounded functions, uniformly continuous in the first component and twice continuously differentiable in the second component with bounded derivatives and that $\sigma$ is uniformly bounded away from zero.

Recall that $\mathcal{M}^e$ denotes the set of all equivalent local martingale measures, in our current setting given by
\begin{equation}
\mathcal{M}^e=\bigl\{ Q \, : \,  Q \sim P, \, B^{-1}S \mbox{ is a local $\Q$-martingale} \bigr\}.
\label{eq:Me}
\end{equation}
Since we are in a complete market case, it is well-known that the set of all equivalent local martingale measures consists of a single measure
\begin{equation}\label{ELMMchar}
 \mathcal{M}^e= \Biggl\{ \Q  \, :  \, \left. \frac{d\Q}{d\P} \right\vert_{\mathcal{F}_T^{W}} = Z_T := \mathcal{E}\Bigl(-\int_0^{\bf \cdot} \theta_t \, dW^1_t \Bigr)_T\Biggr\}.
\end{equation}
 
\begin{lemma}\label{stochvoldensities}
If $\int_0^T\theta_t^2 \, dt > 0$ $\P$-a.e., then $Z_T$ has continuous law.
\end{lemma}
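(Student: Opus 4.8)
The plan is to use the Malliavin calculus criterion for absolute continuity: a random variable $F \in \mathbb{D}^{1,2}$ has an absolutely continuous law (hence a continuous cumulative distribution function) as soon as $\Vert D F \Vert_{\mathcal{H}} > 0$ almost surely (this is the classical Bouleau--Hirsch criterion; see \cite{Nua}). So the task reduces to two things: (i) show $Z_T = \mathcal{E}\bigl(-\int_0^{\cdot}\theta_t\,dW_t\bigr)_T \in \mathbb{D}^{1,2}$, and (ii) compute its Malliavin derivative and show it does not vanish on a set of positive probability, using the hypothesis $\int_0^T \theta_t^2\,dt>0$ $\P$-a.e.

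First I would establish the integrability and Malliavin differentiability of $Z_T$. Writing $M_t = -\int_0^t \theta_s\,dW_s$, we have $Z_T = \exp\bigl(M_T - \tfrac12\langle M\rangle_T\bigr)$. Assumption \ref{Ass2} gives $\theta^2 \in \mathbb{L}^{1,2}$ and $\E[Z_T^2] < \infty$; combining the chain rule for the Malliavin derivative with the derivative of an It\^o integral, one formally gets
\[
D_s Z_T = Z_T\Bigl(-\theta_s - \int_s^T D_s\theta_u\,dW_u - \tfrac12\int_s^T D_s(\theta_u^2)\,du\Bigr)
= Z_T\Bigl(-\theta_s + \int_s^T D_s\theta_u\,dW^{\Q}_u - \int_s^T \theta_u D_s\theta_u\,du\Bigr),
\]
after recognizing $D_s(\theta_u^2) = 2\theta_u D_s\theta_u$ and completing the Girsanov change of drift (this is where one uses $\theta^2\in\mathbb{L}^{1,2}$ to control the stochastic integral term, and the square-integrability bound \eqref{eq:Ass2} to justify membership in $\mathbb{D}^{1,2}$ via a standard localization/closability argument). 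The key structural observation is that the bracketed factor, viewed as a process in $s$ under the measure $\Q$, is the solution of a linear (affine) SDE driven by $W^{\Q}$, so it admits a representation as a strictly positive (exponential-type) factor times $\bigl(-\theta_s + \text{martingale correction}\bigr)$; in particular $D_s Z_T$ and $\theta_s$ cannot vanish simultaneously except on a $\P\otimes ds$-null set.

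From there the conclusion is quick: $\Vert DZ_T\Vert_{\mathcal{H}}^2 = \int_0^T (D_s Z_T)^2\,ds$, and by the previous step this integral is, up to the strictly positive multiplicative factor $Z_T^2$ (and the positive affine factor), bounded below by a quantity that is strictly positive precisely on the event $\{\int_0^T \theta_s^2\,ds > 0\}$, which has full probability by hypothesis. Hence $\Vert DZ_T\Vert_{\mathcal{H}} > 0$ $\P$-a.s., and Bouleau--Hirsch yields that the law of $Z_T$ is absolutely continuous with respect to Lebesgue measure, so its cumulative distribution function is continuous.

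The main obstacle is step (i)–(ii): rigorously justifying that $Z_T \in \mathbb{D}^{1,2}$ and that the formal expression for $D_s Z_T$ is valid, since $Z_T$ is a nonlinear (exponential) functional and the chain rule for $\mathbb{D}^{1,2}$ requires either Lipschitz test functions or an approximation argument. The clean route is to approximate $Z_T$ by $\phi_n(M_T)$ with $\phi_n$ smooth of bounded derivatives approximating $x\mapsto e^{x - \frac12\langle M\rangle_T}$, differentiate, and pass to the limit using the $L^2$-closability of the Malliavin derivative operator together with the uniform-integrability-type control afforded by $\E[Z_T^2]<\infty$ and $\theta^2\in\mathbb{L}^{1,2}$. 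Once the derivative formula is in hand, the non-degeneracy is essentially automatic because the $-\theta_s$ term is the ``leading order'' contribution pointwise in $s$.
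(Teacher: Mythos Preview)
Your overall strategy---Malliavin differentiability plus the Bouleau--Hirsch absolute-continuity criterion---is exactly the route the paper takes. Two points deserve comment, one minor and one substantive.

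First, the paper does \emph{not} show $Z_T\in\mathbb{D}^{1,2}$; under Assumption~\ref{Ass2} one only has $\E[Z_T^2]<\infty$ and $\theta^2\in\mathbb{L}^{1,2}$, which is not enough for $\E\bigl[\Vert DZ_T\Vert_{\mathcal H}^2\bigr]<\infty$. Instead the paper works with $L_T=\log Z_T$, shows $\E\bigl[\Vert DL_T\Vert_{\mathcal H}^2\bigr]<\infty$ directly, and then uses $DZ_T=Z_T\,DL_T$ together with Cauchy--Schwarz to get $Z_T\in\mathbb{D}^{1,1}$, which suffices for the criterion in \cite[Theorem~2.1.3]{Nua}. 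Your approximation-by-$\phi_n$ route would run into the same moment obstruction.

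Second, and more seriously, your nondegeneracy argument does not go through as written. The bracketed factor
\[
-\theta_s-\int_s^T D_s\theta_u\,dW_u-\tfrac12\int_s^T D_s(\theta_u^2)\,du
\]
is \emph{not} in any natural sense the solution of a linear SDE in the parameter $s$ (the integrand $D_s\theta_u$ depends on both variables), and there is no ``strictly positive exponential factor'' to extract. In particular the claim that ``$D_sZ_T$ and $\theta_s$ cannot vanish simultaneously'' is unjustified: when $\theta$ is genuinely stochastic, the correction integrals can in principle cancel the $-\theta_s$ term pointwise in~$s$, so ``leading order'' reasoning fails. The paper circumvents this entirely by a duality argument: for every adapted $Y\in\dom\delta$ one computes
\[
\E\bigl[\langle Y,\,DL_T\rangle_{\mathcal H}\bigr]=\E\bigl[L_T\,\delta(Y)\bigr]=-\E\Bigl[\int_0^T\theta_tY_t\,dt\Bigr],
\]
using that the Skorohod integral of an adapted process is the It\^o integral and the It\^o isometry. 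Testing against $Y=\theta$ links $\Vert DL_T\Vert_{\mathcal H}=0$ a.s.\ directly to $\int_0^T\theta_t^2\,dt=0$ a.s., without ever having to control the correction terms pointwise. This is the idea your proposal is missing.
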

 
\begin{proof} 
We will show more then required -- asserting that under the stated conditions the random variable $Z_T$ has a density with respect to the Lebesgue measure. This will be done by using a Malliavin calculus-based result, which is due to Bouleau and Hirsch. For the logarithm $L_T :=\log{ Z_t}$ we have
 \[
 L_T = -\int_0^T \theta_s \, dW_s - \frac{1}{2}\int_0^T \theta_s^2\, ds. 
 \]
 It follows that
 \begin{align*}
 \E\Bigl[ \Vert D L_T\Vert_\mathcal{H}^2 \Bigr] =& \E\biggl[\int_0^T\biggl(\theta_t + \int_t^T D_t \theta_s \, dW_s + \frac{1}{2} \int_t^T D_t \theta_s^2  \, ds\biggr)^2 \, dt \biggr]\\
 \leq & \E\biggl[\int_0^T 3\biggl(\theta_t^2 + \Bigl(\int_t^T D_t \theta_s \, dW_s\Bigr)^2 + \Bigl(\frac{1}{2}\int_t^T D_t \theta_s^2 \, ds\Bigr)^2\biggr) \, dt \biggr]\\
 = & 3\E\biggl[\int_0^T \theta_t^2\, dt \biggr] +3 \E\biggl[\int_0^T\int_t^T (D_t \theta_s)^2 \, ds \, dt \biggr] +\frac{3}{4} \E\biggl[\int_0^T\biggl(\int_t^T D_t \theta_s^2  \, ds\biggr)^2 \, dt \biggr]\\
 \leq & 3 \int_0^T \E[\theta_t^2]\, dt  + 3\int_0^T \E\Bigl[ \Vert D\theta_s\Vert_\mathcal{H}^2 \Bigr]\, ds + \frac{3}{4}T^2\int_0^T \E\Bigl[ \Vert D \theta_s^2\Vert_\mathcal{H}^2 \Bigr] \, ds\\
 \leq & 3\int_0^T \Vert \theta_s \Vert_{1,2}^2 \, ds + \frac{3}{4}T^2 \int_0^T \Vert \theta_s^2 \Vert_{1,2}^2 \, ds \leq 3 \int_0^T \biggl(\frac{1}{T}+ \frac{T}{2}\Vert \theta_s^2 \Vert_{1,2}\biggr)^2 \, ds < \infty
 \end{align*}
 by Assumption \ref{Ass2}. Hence,
 \begin{align*}
  \Vert Z_T \Vert_{1,1} &= \E\bigl[Z_T \bigr] + \E\Bigl[ \Vert D Z_T\Vert_\mathcal{H} \Bigr] = \E\bigl[ Z_T  \bigr] + \E\Bigl[Z_T  \Vert D L_T\Vert_\mathcal{H} \Bigr] 
  = \E\Bigl[[Z_T  \bigl(1+ \Vert D L_T\Vert_\mathcal{H} \bigr)\Bigr] \\
  &\le \sqrt{ \E \bigl[Z_T^2\bigr]} \sqrt{\E \Bigl[  \bigl(1+\Vert D L_T\Vert_\mathcal{H}\bigr) ^2 \Bigr]    } 
 \le  \sqrt{ \E\bigl[Z_T^2\bigr]} \sqrt{2\E\Bigl[ 1+\Vert D L_T\Vert_\mathcal{H}^2  \Bigr]}<\infty.
 \end{align*}
 Following the criterium for absolute continuity (cf. \cite[Theorem 2.1.3]{Nua}), it is therefore enough to show that
 \[
 \Vert D Z_T\Vert_\mathcal{H} >0 \qquad \P\mbox{-a.s.} 
 \]
 From 
 \[
 (D Z_T) ^2 = Z_T^2 ( D L_T)^2 
 \]
 and from the fact that $Z_T>0$  $\P$-a.s., this is equivalent to the fact that $\Vert D L_T\Vert_\mathcal{H}   >0~ \P$-a.s. However, for every adapted process $Y  \in  \dom{(\delta)} \subseteq L^2(\Omega; \mathcal{H})$, the domain of the Skorohod integral, we have, by the definition of $L_T$
 \begin{align*}
 \E\Bigl[ \bigl\langle Y_t \, , \,  D_tL_T \bigr\rangle_\mathcal{H} \Bigr] &= \E\Bigl[ L_T \, \delta(Y_t)\Bigr] = \E\biggl[ L_T \biggl( \int_0^T Y_t^2\, dW_t  \biggr)\biggr] = \E\biggl[ \Bigl\langle -\int_0^\cdot \theta_t\, dW_t,\int_0^\cdot Y_t\, dW_t \Bigr\rangle_T \biggr]\\&= \E\biggl[ \Bigl\langle -\int_0^\cdot \theta_t\, dW_t,\int_0^\cdot Y_t\, dW_t \Bigr\rangle_T \biggr]
 = \E\biggl[- \int_0^T \theta_t Y_t  \, dt\biggr]. 
 \end{align*}
 Thus, we conclude that $\Vert D L_T\Vert_\mathcal{H} =0 ~ \P$-a.s., if only if $\int_0^T\theta_t^2 \, dt = 0~ \P$-a.s.\qed
 \end{proof}
\end{example}

We turn our attention now to incomplete market models:
 \begin{example}\label{ex3}
 {\bf (Lognormal mixture models)}: Similar to Example \ref{ex1} let $W$ be a one-dimensional Brownian motion defined on some probability space $(\Omega, \mathcal{F}, \P)$ and denote by $\bigl(\mathcal{F}_t^{W}\bigr)$ the filtration generated by it, augmented by all $\P$-negligible sets. Let $N\in\N$ and consider a random variable $X$ on $\{1, \ldots N\}$ such that $\nu \bigl[ X=i \bigr] = p_i,~i=1,\ldots , N$, for some counting measure $\nu$ where $p_i>0$ and $\sum_{i=1}^N p_i=1$. Let the stock price process modeled on the space $(\tilde{\Omega}, \tilde{\mathcal{F}}, \tilde{\P})=(\Omega \times \{1, \ldots,N\}, \mathcal{F} \otimes \mathcal{P}(\{1,\ldots,N\}), \P \otimes \nu)$ by
\begin{equation}
dS_t = \mu_t S_t \, dt + \sigma_t(X) S_t \, dW_t, \qquad S_0 =s,
\label{eq:S2}
\end{equation}
 where $\mu_t$ and $\sigma_t(i),~i=1,\ldots, N$ are deterministic functions, bounded and bounded away from zero, and satisfying $\sigma_t(i)=\sigma_0$ for $t\in[0,t_0]$ for some $t_0>0$, and $\sigma_t(i)\ne\sigma_t(j),~i\ne j $ for all $t\in (t_0,t_1)$, for some $t_1>t_0$, satisfying $t_1\le T$. We note that the filtration $\mathcal{F}_t^S$  generated by the stock price is not right continuous at $t_0$, and it agrees with the filtration generated by $X$ and $\mathcal{F}^W_t$ only for $t>t_0$, whereas it is strictly smaller at $t\le t_0$. Following Brigo and Mercurio \cite[Section 10.4]{Brigo} this SDE has a unique strong solution.

Let the money market account be given by
 \[
 dB_t = r_t B_t \, dt, \qquad B_0 =1,
 \]
 for some bounded progressive interest process $r$ and, conditioned on $X=i$, define the market price of risk $\theta$ through
 \[
  \theta_i  = \frac{\mu_t - r_t}{\sigma_t(i)}.
 \]
Assume also that the $\left\{\theta_i\right\}_{i=1}^n$ are linearly independent.

\begin{lemma}\label{stochvoldensities1}
The the set of all equivalent local martingale measures $\mathcal{M}^e$ is given by 
 \begin{equation}\label{ELMMchar1}
\mathcal{M}^e= \Biggl\{ \tilde{\Q}  \, :  \, \left. \frac{d\tilde{\Q}}{d\tilde{\P}} \right\vert_{\mathcal{F}_T^{S}} = Z_T^{{\q}} := \sum_{i=1}^N\frac{q_i}{p_i}\mathcal{E}\Bigl(-\int_0^{\bf \cdot} \theta_i \, dW_t \Bigr)_T,~\q \in \mathcal{K} \Biggr\},
\end{equation}
where $\mathcal{K} \define\left\{\q\in \mathbb{R}^N_{>0}\colon ~\sum_{i=1}^Nq_i=1\right\}$. If $\int_0^T\theta_i(t)^2 \, dt > 0$ $\P$-a.e. for every $i=1,\ldots, N$, then the set $\bigl\{Z_T^{\q} \, : \, \q\in\mathcal{K}\bigr\}$ has uniformly absolutely continuous distributions.
\end{lemma}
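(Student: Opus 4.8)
The plan is to prove the two assertions of Lemma~\ref{stochvoldensities1} separately: first the explicit description~\eqref{ELMMchar1} of $\mathcal{M}^e$, and then the uniform absolute continuity of $\{Z_T^{\q}:\q\in\mathcal{K}\}$ under the non-degeneracy hypothesis $\int_0^T\theta_i(t)^2\,dt>0$.

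For the description of $\mathcal{M}^e$ I would work on $(\tilde{\Omega},\tilde{\mathcal{F}},\tilde{\P})$ and use that, for $t>t_0$, $\mathcal{F}_t^S=\mathcal{F}_t^W\vee\sigma(X)$, as recalled in the model description (the realized quadratic variation of $\log S$ over $(t_0,t_0+\varepsilon)$ reveals $X$, and then the path of $W$ is recovered from $S$ and $X$). The inclusion ``$\supseteq$'' is the routine direction: for $\q\in\mathcal{K}$ put $Z_T^{\q}=\sum_{i=1}^N\ind_{\{X=i\}}\frac{q_i}{p_i}\mathcal{E}\bigl(-\int_0^{\cdot}\theta_i\,dW\bigr)_T$; since $\theta_i$ is bounded, each $\mathcal{E}\bigl(-\int_0^{\cdot}\theta_i\,dW\bigr)$ is a true martingale (Novikov), so $\tilde{\E}[Z_T^{\q}]=\sum_i p_i\frac{q_i}{p_i}=1$ and $Z_T^{\q}>0$, with $\tilde{\Q}^{\q}(X=i)=q_i$; on the atom $\{X=i\}$ the discounted price $B^{-1}S$ obeys an It\^o equation with market price of risk $\theta_i$, so by Girsanov it is a local martingale under the associated change of measure, and patching over $i$ (checking the local martingale property across $t_0$) gives $\tilde{\Q}^{\q}\in\mathcal{M}^e$. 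For ``$\subseteq$'' I would take $\tilde{\Q}\in\mathcal{M}^e$ with density process $Z$, set $q_i:=\tilde{\Q}(X=i)>0$, and write $Z_T=\sum_i\ind_{\{X=i\}}Z_T^{(i)}$ with $Z_T^{(i)}$ positive and $\mathcal{F}_T^W$-measurable; conditionally on $\{X=i\}$ the market is a complete It\^o market driven by $W$ with deterministic volatility $\sigma_{\cdot}(i)>0$, hence carries a unique equivalent local martingale measure with density $\mathcal{E}\bigl(-\int_0^{\cdot}\theta_i\,dW\bigr)_T$, which forces $Z_T^{(i)}=\frac{q_i}{p_i}\mathcal{E}\bigl(-\int_0^{\cdot}\theta_i\,dW\bigr)_T$, i.e.~\eqref{ELMMchar1}.

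The step I expect to be the main obstacle is making this last argument rigorous given that $(\mathcal{F}_t^S)$ is not right-continuous at $t_0$: before $t_0$ the filtration does not observe $X$, so the sheet-wise decomposition of $Z$ is unavailable there and ``completeness conditionally on $X=i$'' cannot be invoked directly. I would handle this by noting that on $[0,t_0]$ the coefficients do not depend on $X$, $\mathcal{F}_t^S=\mathcal{F}_t^W$, and the market is complete, so the local martingale property of $B^{-1}S\cdot Z$ on $[0,t_0]$ together with martingale representation forces $Z_t=\mathcal{E}\bigl(-\int_0^{\cdot}\theta^{(0)}\,dW\bigr)_t$ for $t\le t_0$, where $\theta^{(0)}_t:=(\mu_t-r_t)/\sigma_0$; since $\theta_i\equiv\theta^{(0)}$ on $[0,t_0]$, this is the common $t_0$-restriction of all candidates, to which the sheet-wise argument on $(t_0,T]$ glues. (One should remark that if the deflator is not required to be determined by its restriction to $\mathcal{F}_{t_0}^S$, then $\mathcal{M}^e$ carries further elements, with $\sum_i p_i Z_{t_0}^{(i)}=\mathcal{E}(-\int_0^{\cdot}\theta^{(0)}\,dW)_{t_0}$ but $Z_{t_0}^{(i)}$ not individually pinned down; since $\bar{U}^{*}$ is convex, Jensen's inequality shows such deflators never lower the dual value~\eqref{dualgen}, so they are irrelevant for Theorem~\ref{general}.)

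For the uniform absolute continuity I would compute the law of $Z_T^{\q}$ explicitly. Because $\theta_i(t)$ is deterministic, $\int_0^T\theta_i(t)\,dW_t$ is centered Gaussian with variance $\Theta_i:=\int_0^T\theta_i(t)^2\,dt>0$, so $M_i:=\mathcal{E}\bigl(-\int_0^{\cdot}\theta_i\,dW\bigr)_T$ is lognormal with $\log M_i\sim\mathcal{N}(-\Theta_i/2,\Theta_i)$ and smooth Lebesgue density $x\mapsto(x\sqrt{2\pi\Theta_i})^{-1}\exp\bigl(-(\log x+\Theta_i/2)^2/(2\Theta_i)\bigr)$ on $\Rplus$; hence $\frac{q_i}{p_i}M_i$ has a lognormal density $g_i^{\q}$ with location parameter $\log(q_i/p_i)-\Theta_i/2$. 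Since $Z_T^{\q}$ equals $\frac{q_i}{p_i}M_i$ on the $\tilde{\P}$-atom $\{X=i\}$ of mass $p_i$, its law under $\tilde{\P}$ is the finite mixture $f^{\q}(x)\,dx$ with $f^{\q}=\sum_{i=1}^N p_i g_i^{\q}$; in particular each $Z_T^{\q}$ has a smooth, hence continuous, distribution function. For the uniform statement, as $\q$ ranges over a compact subset of $\mathcal{K}$ the location parameters $\log(q_i/p_i)$ stay in a fixed compact set, so the finitely many $g_i^{\q}$ are uniformly bounded on $\Rplus$ and uniformly tight (Gaussian decay in $\log x$, uniformly in the location); combined with $\lambda(\{f^{\q}>M\})\le 1/M$ (Markov), this yields that the densities $f^{\q}$ are uniformly absolutely continuous with respect to Lebesgue measure $\lambda$ on such a subset. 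This is exactly what Proposition~\ref{suffCrit} needs, since the dual functional $\q\mapsto\sum_i p_i\,\E[\bar{U}^{*}(y\tfrac{q_i}{p_i}M_i)]$ tends to $+\infty$ as $\q\to\partial\mathcal{K}$ (for power utility, $\bar{U}^{*}(z)\sim c\,z^{p/(p-1)}\to\infty$ as $z\downarrow0$ while $\E[M_i^{p/(p-1)}]<\infty$ by lognormality), so the minimizing weights remain in a compact subset of $\mathcal{K}$. The Girsanov computations and Gaussian estimates are routine; the genuine work is the filtration subtlety at $t_0$.
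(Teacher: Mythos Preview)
Your argument for the characterization~\eqref{ELMMchar1} follows the same route as the paper: the $\supseteq$ inclusion is checked directly via Novikov/Girsanov, and the $\subseteq$ inclusion proceeds by disintegrating the density over the atoms $\{X=i\}$ and invoking uniqueness of the equivalent local martingale measure in the resulting complete It\^{o} market. The paper handles the $t_0$-issue more tersely, via the case split $t\le t_0$ versus $t>t_0$ in the conditional expectation $\tilde{\E}[Z_T\ind_{\{X=i\}}\,|\,\mathcal{F}_t^S]$; your gluing argument is more explicit but lands in the same place.

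For the uniform absolute continuity, however, your approach diverges from the paper's and falls short of the statement. The paper does \emph{not} compute the law of $Z_T^{\q}$ as a mixture of one-dimensional lognormals. Instead it uses the linear-independence assumption on $\{\theta_i\}_{i=1}^N$ (which your argument never touches) to conclude that the Gramian $G_{ij}=\int_0^T\theta_i\theta_j\,dt$ is nondegenerate, so that the vector $\vec{\nu}=\bigl(\int_0^T\theta_1\,dW,\ldots,\int_0^T\theta_N\,dW\bigr)$ is a nondegenerate $N$-dimensional Gaussian with bounded density. Then $Z_T^{\q}$ is written as a smooth exponential transformation of $\vec{\nu}$, depending continuously on $\q$; by compactness of $\overline{\mathcal{K}}$ the paper concludes that the densities of $Z_T^{\q}$ are uniformly bounded, hence uniformly absolutely continuous over all of $\mathcal{K}$.

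Your mixture computation, by contrast, only gives uniform absolute continuity on compact subsets of $\mathcal{K}$: as $q_i\downarrow 0$ the location parameter $\log(q_i/p_i)\to-\infty$, the $i$-th lognormal component collapses toward $0$, and for any $\delta>0$ one has $\int_0^\delta p_i g_i^{\q}\,dx\to p_i>0$, so uniform integrability of $\{f^{\q}:\q\in\mathcal{K}\}$ genuinely fails in your representation. Your workaround---arguing that the dual functional blows up at $\partial\mathcal{K}$ so the minimizer stays in a compact set---is (i) not part of the lemma's assertion, (ii) specific to power utility rather than the general Assumption~\ref{UassKS}, and (iii) not what Proposition~\ref{suffCrit} asks for, since that proposition requires uniform absolute continuity over \emph{all} of $\mathcal{M}^e$. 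To match the paper you should invoke the linear independence of the $\theta_i$, work with the joint Gaussian $\vec{\nu}$, and exploit compactness of $\overline{\mathcal{K}}$.
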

 
\begin{proof}
Note first that from \cite[Theorem 4.1]{Amen} it follows that any integrable random variable $Z_T =  \left. \frac{d\tilde{\Q}}{d\tilde{\P}} \right\vert_{\mathcal{F}_T^{S}}$ in the filtration $\mathcal{F}^S$ has the representation
\begin{equation}\label{amenrep}
Z_T = Z_{t_0+} + \int_{t_0}^T \eta_s \, dW_s
\end{equation}
where $Z_{t_0+} \in \mathcal{F}_{t_0+}^S$ and $\eta_s$ an $\mathcal{F}^S$-predictable and locally integrable process. More precisely, setting $\mathcal{F}'_t = \mathcal{F}_{t +
t_0 + \varepsilon}$ and applying \cite[Theorem 4.1]{Amen} yields $Z_T = Z_{t_0+ \varepsilon} + \int_{t_0+ \varepsilon}^ T \eta_s \, dW_s$ and sending $\varepsilon$ to zero this converges to \eqref{amenrep} as the stochastic integrals are consistently constructed and $Z_{t_0+ \varepsilon} = \E[ Z_T\, \vert \,\mathcal{F}^S_{t_0+\varepsilon}]$ is a backward martingale that converges by the backward martingale convergence theorem almost surely to $Z_{t_0+} := \E[ Z_T \, \vert \,\mathcal{F}^S_{t_0+}]$.
Moreover, using the classical martingale representation theorem this yields
\begin{align*}
Z_T & = \Delta Z_{t_0} + Z_{t_0} + \int_{t_0}^T \eta_s \, dW_s =  \Delta Z_{t_0} +  Z_0 + \int_0^{t_0} \eta_s \, dW_s+ \int_{t_0}^T \eta_s \, dW_s  \\
&= 1 + \Delta Z_{t_0} + \int_0^T \eta_s \, dW_s
\end{align*}
with $\eta_s$ an $\mathcal{F}^S$-predictable and locally integrable process and $\Delta Z_{t_0} = Z_{t_0+}-Z_{t_0}$.
Applying It\^{o}'s formula to $\log{Z_t}$ we conclude that
\begin{equation*}
 Z_t =\left( 1+  \frac{\Delta Z_{t_0}}{Z_{t_0}}\mathbbone_{\{t>t_0\}}(s)\right)  \exp{\Biggl({\int_{0}^t\frac{\eta_s}{Z_s} \, dW_s -\frac{1}{2}\int_{0}^t \biggl(\frac{\eta_s}{Z_s}\biggr)^2 \, ds\Biggr)}}
\end{equation*}
The no arbitrage condition requires that the discounted stock price is a local martingale under $\tilde{\Q}$ or, equivalently, that $B_t^{-1}S_tZ_t$ is a local $\tilde{\P}$-martingale. Noting that
\begin{align*}
B_t^{-1}S_t & =  S_0 + \int_0^t \sigma_s(i) B_s^{-1}S_s \, dW_s +  \int_0^t (\mu_s - r_s) B_s^{-1}S_s \, dt \\
&= S_0 \exp{\biggl( \int_0^t \sigma_s(i) \,  dW_t +  \int_0^t \Bigl(\mu_s - r_s - \frac{1}{2} \sigma_s^2(i)\Bigr)\, ds\biggr)}
\end{align*}
implies
\begin{align}\label{mart}
B_t^{-1}S_t Z_t &=  S_0 \cdot \left( 1+  \frac{\Delta Z_{t_0}}{Z_{t_0}}\mathbbone_{\{t>t_0\}}(s)\right) \nonumber\\
& \phantom{=} \cdot \exp{\Biggl( \int_0^t \Bigl(\sigma_s(i) + \frac{\eta_s}{Z_s}\Bigr)\,  dW_t +  \int_0^t \Bigl(\mu_s - r_s - \frac{1}{2} \sigma_s^2(i) - \frac{1}{2}\biggr(\frac{\eta_s}{Z_s}\biggr)^2 \, ds\Biggr)}
\end{align}
we have just to check under which conditions \eqref{mart} is a $\tilde{\mathbb{P}}$-martingale. Note that it is enough to check when the exponential is a local martingale (checking that for $\E[B_t^{-1}S_t Z_t  \, \vert \mathcal{F}^S_s]$ for $t\geq s>t_0$ and $t_0 \geq t \geq s$ is straightforward, at $t_0$ this follows from the continuity of the exponential). Calculating the quadratic variation of the stochastic integral and comparing it with the determinist one yields
\[
\frac{\eta_s}{Z_s} = \frac{r_s-\mu_s}{\sigma_s(i)} = \theta_i(s).
\]
Moreover, as $1+  \frac{\Delta Z_{t_0}}{Z_{t_0}} = \frac{Z_{t_0+}}{Z_{t_0}} \in \sigma(X)$ is only supported on $\{1, \ldots N\}$ and the martingale condition forces $\E\bigl[1+  \frac{\Delta Z_{t_0}}{Z_{t_0}} \, \big\vert \mathcal{F}_{t_0}^S\bigr]=1$, we get that the representation
\eqref{ELMMchar1} is a necessary condition on equivalent local martingale measures. It is straightforward to check that it is also sufficient.

Additionally, we have that all $\theta_i,~i=1,\ldots, N$ are uniformly bounded.  It follows that the densities of the normally distributed random variables $\int_0^{T} \theta_i \, dW_t$ are also uniformly bounded. Furthermore, the Gramian matrix $G(\theta_1,\ldots, \theta_N)$ with elements $G_{ij}= \int_0^{T} \theta_i \theta_j\, dt,~1\le i,j\le N$, has non-zero Gram determinant, by our assumption that $\left\{\theta_i\right\}_{i=1}^N$ are linearly independent. It follows that the random vector
\[
{\overrightarrow \nu}\define\left( \int_0^{T} \theta_1 \, dW_t,\ldots , \int_0^{T} \theta_N \, dW_t\right)
\]
is normally distributed with mean zero, and variance $G(\theta_1,\ldots, \theta_N)$. Hence, it has a bounded density. Finally, note that for $\q\in\mathcal{K}$ the random variable
\[
Z_T^{{\q}} := \sum_{i=1}^N\frac{q_i}{p_i}\mathcal{E}\Bigl(-\int_0^{\bf \cdot} \theta_i \, dW_t \Bigr)_T
\]
is an exponential transformation of ${\overrightarrow \nu}$, since the functions $\theta_i,~i=1,\ldots, N$ are
deterministic. Therefore, it has a density, too. This transformation is continuous as a function of $\q$ and, thus, so is the density of $Z_T^{{\q}}$. It follows that the density of $Z_T^{{\q}}$ is uniformly bounded for any $\q$ in the compact closure $\overline {\mathcal{K}}$.\qed
\end{proof}

As the family $\bigl\{Z_T^{\q} \, : \, \q\in \mathbb{R}^N_{>0}, ~\sum_{i=1}^Nq_i=1 \bigr\}$ has uniformly absolutely continuous distributions, we can then apply Proposition \ref{suffCrit} to establish the distributional continuity of the dual optimizer.
\end{example}

\subsection*{Stochastic Volatility Examples}\label{ex:stochVol}

We now generalize the setting of Example \ref{ex1} to encompass stochastic volatility models, dropping the assumption of market completeness. Namely, let $W^1$ and $W^2$ be two independent one-dimensional Brownian motions (the generalization to the  multi-dimensional case is again straightforward) defined on some probability space $(\Omega, \mathcal{F}, P)$ and denote by $\bigl(\mathcal{F}_t^{W^1,W^2}\bigr)$ the filtration generated by them, augmented by all $\P$-negligible sets (as usual we assume without loss of generality that $\mathcal{F}_T^{W^1,W^2}=\mathcal{F}$). Let the stock price process given by
\begin{equation*}
dS_t = \mu_t S_t \, dt + \sigma_t S_t \, dW^1_t, \qquad S_0 =s,
\end{equation*}
where $\mu_t$ and $\sigma_t$ are $\mathcal{F}_t^{W^1,W^2} \otimes \mathcal{B}([0,t])$-progressive processes satisfying
 \[
  \E\biggl[e^{2 \int_0^T \vert \mu_t \vert \, dt} + e^{\int_0^T \sigma_t^2 \, dt}\biggr] < \infty \qquad \mbox{and} \qquad \sigma > 0 \quad P \otimes dt\mbox{-a.e.}
 \]
 In particular we again do not assume any Markovianity of the drift or diffusion coefficient. Moreover, we still assume that the money market account be given by
\[
dB_t = r_t B_t \, dt, \qquad B_0 =1
\]
for some progressive interest process $r$ satisfying $\E \bigl[e^{\int_0^T \vert r_t \vert \, dt }\bigr]<\infty$ and define the market price of risk $\theta$ through
\[
\theta_t \sigma_t = \mu_t - r_t.
\]
We readily adjust all of the remaining definitions of Example \ref{ex1} to this framework. We note that, in this case, $\mathcal{H}$ will be given by $L^2([0,T];\mathbb{R}^2)$. The definition of $\mathbb{D}^{1,p}$ in \eqref{eq:D1p} will not change. Finally, we will adjust the definition $\mathbb{L}^{1,2}$ to be the class of all processes $u \in L^2(\Omega \times [0,T])$ such that $u_t \in \mathbb{D}^{1,2}$ for almost all $t$ such that there exists a measurable version of the two-parameter process $D_su_t$ satisfying
\[
\E \int_0^T\int_0^T (D_s^1u_t)^2 + (D_s^2u_t)^2 \, ds \, dt < \infty.
\]
We, of course, still preclude arbitrage in the sense of a {\it 'free lunch with vanishing risk'} by assuming that the market price of risk satisfies
\[
\E\Biggr[\mathcal{E}\biggl(-\int_0^{\bf \cdot} \theta_t \, dW^1_t \biggr)_T \Biggr] = 1,
\]
and we still assume Assumption \ref{Ass2} is satisfied. 

Recall that $\mathcal{M}^e$, given by \eqref{eq:Me}, denotes the set of all equivalent local martingale measures. The first major change, in comparison to Example \ref{ex1}, is that, in our current setting, $\mathcal{M}^e$ is given by
\begin{lemma}\label{stochvoldensities2}
The set of all equivalent local martingale measures can be characterized as
\begin{align}\label{ELMMchar2}
\mathcal{M}^e & = \Biggl\{ \Q  \, :  \, \left. \frac{d\Q}{d\P} \right\vert_{\mathcal{F}_T^{W^1,W^2}} = Z_T^\theta(\lambda) := \mathcal{E}\Bigl(-\int_0^{\bf \cdot} \theta_t \, dW^1_t + \int_0^{\bf \cdot} \lambda_t \, dW^2_t\Bigr)_T, \Biggr. \nonumber \\
& \phantom{= \Biggl\{\, } \Biggl.\lambda\in \Lambda, \, \E[Z_T^\theta(\lambda)]=1\Biggr\},
\end{align}
where
\[ 
\Lambda :=  \Bigl\{ \lambda \mbox{ predictable, such that } \int_0^T \lambda_t^2 \, dt < \infty \, \P\mbox{-a.s.}\Bigr\}.
\]
Moreover, if $\int_0^T\theta_t^2 \, dt > 0$ $\P$-a.e., then the random variable $Z_T^\theta(0)$ has a continuous law.
\end{lemma}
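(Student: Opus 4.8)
The plan is to treat the two assertions of the lemma separately. For the characterization \eqref{ELMMchar2} of $\mathcal{M}^e$, I would start from the martingale representation theorem for the two-dimensional Brownian filtration $\bigl(\mathcal{F}_t^{W^1,W^2}\bigr)$: the density process $Z$ of any measure $\Q\sim\P$ is a strictly positive uniformly integrable martingale and can therefore be written as a stochastic exponential $Z=\mathcal{E}\bigl(\int_0^{\cdot}\psi^1_t\,dW^1_t+\int_0^{\cdot}\psi^2_t\,dW^2_t\bigr)$ for some predictable $\psi^1,\psi^2$ with $\int_0^T\bigl((\psi^1_t)^2+(\psi^2_t)^2\bigr)\,dt<\infty$ $\P$-a.s. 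I would then impose that $B^{-1}S$ be a local $\Q$-martingale. By Girsanov's theorem (applied along a localizing sequence that reduces $Z$ and $B^{-1}S$ simultaneously, so that the computation is legitimate even when $Z$ is only a local martingale), $W^1_t-\int_0^t\psi^1_s\,ds$ is a $\Q$-local martingale, so that the finite-variation part of $B^{-1}S$ under $\Q$ equals $\int_0^{\cdot}B^{-1}_tS_t\bigl(\mu_t-r_t+\sigma_t\psi^1_t\bigr)\,dt$. Since $\sigma>0$ $P \otimes dt$-a.e. and $\theta_t\sigma_t=\mu_t-r_t$, this vanishes if and only if $\psi^1_t=-\theta_t$ $P \otimes dt$-a.e., while $\psi^2=\lambda$ stays unconstrained beyond the integrability encoded in $\Lambda$. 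Finally, the a priori expression $Z^\theta(\lambda)=\mathcal{E}\bigl(-\int_0^{\cdot}\theta_t\,dW^1_t+\int_0^{\cdot}\lambda_t\,dW^2_t\bigr)$ is only a nonnegative local martingale, hence a supermartingale, and it is the density process of an equivalent probability measure precisely when $\E[Z_T^\theta(\lambda)]=1$, which is why this normalization appears in \eqref{ELMMchar2}. Conversely, for every such $\lambda$ the same Girsanov computation shows $Z_T^\theta(\lambda)$ is the density of an element of $\mathcal{M}^e$, and strict positivity of the stochastic exponential gives $\Q\sim\P$.

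For the second assertion, I would observe that $Z_T^\theta(0)=\mathcal{E}\bigl(-\int_0^{\cdot}\theta_t\,dW^1_t\bigr)_T$ is exactly the density already analysed in Lemma \ref{stochvoldensities}, the only formal difference being that $\theta$ may now be driven by $W^2$ as well, so that $\mathcal{H}=L^2([0,T];\mathbb{R}^2)$ and the Malliavin derivative splits into components $D^1$ and $D^2$. The argument of Lemma \ref{stochvoldensities} then carries over \emph{mutatis mutandis}: with $L_T:=\log Z_T^\theta(0)=-\int_0^T\theta_s\,dW^1_s-\frac12\int_0^T\theta_s^2\,ds$, Assumption \ref{Ass2} yields $\E\bigl[\Vert DL_T\Vert_{\mathcal{H}}^2\bigr]<\infty$ and $\E\bigl[(Z_T^\theta(0))^2\bigr]<\infty$, hence $Z_T^\theta(0)\in\mathbb{D}^{1,1}$ via $DZ_T^\theta(0)=Z_T^\theta(0)\,DL_T$ and Cauchy--Schwarz. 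By the Bouleau--Hirsch absolute continuity criterion (\cite[Theorem 2.1.3]{Nua}) it then suffices to prove $\Vert DL_T\Vert_{\mathcal{H}}>0$ $\P$-a.s. For this I would test against adapted integrands of the form $(Y^1,0)$ and use the Skorohod-integral duality to obtain $\E\bigl[\langle Y^1,D^1L_T\rangle\bigr]=-\E\bigl[\int_0^T\theta_tY^1_t\,dt\bigr]$; taking a localized version of $Y^1=\theta$ shows that $D^1L_T\equiv 0$ would force $\int_0^T\theta_t^2\,dt=0$ $\P$-a.s., contradicting the hypothesis $\int_0^T\theta_t^2\,dt>0$ $\P$-a.s.

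I expect the genuinely delicate step to be the first assertion, specifically the rigorous argument that the $W^1$-component of every martingale-measure density is forced to equal $-\theta$, together with the clean separation between the local-martingale property (which pins down $\psi^1$) and the true-martingale normalization (the constraint $\E[Z_T^\theta(\lambda)]=1$). Some care is needed because $Z^\theta(\lambda)$ need not be a true martingale, so the Girsanov computation must be run along a localizing sequence rather than applied directly. The second assertion is, by contrast, a near-verbatim transcription of Lemma \ref{stochvoldensities} with only cosmetic two-dimensional modifications, and I do not anticipate real difficulty there.
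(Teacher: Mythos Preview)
Your proposal is correct and follows essentially the same approach as the paper. The only cosmetic difference is that you write the density process directly as a stochastic exponential $\mathcal{E}\bigl(\int\psi^1\,dW^1+\int\psi^2\,dW^2\bigr)$, whereas the paper first invokes the additive martingale representation $Z_t=1+\int_0^t\eta_s\,dW^1_s+\int_0^t\xi_s\,dW^2_s$ and then takes logarithms; both routes lead to the same Girsanov computation forcing the $W^1$-component to equal $-\theta$, and for the second assertion the paper likewise simply defers to Lemma~\ref{stochvoldensities}.
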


\begin{proof}
To begin, we prove the characterization of the set of equivalent local martingale measures, which follows the proof for the classical Markovian case (cf. \cite{Frey}). First, it is clear that, only under the condition $\E[Z_T^\theta(\lambda)]=1$, will the new measure $\Q$ be a probability measure. By the martingale representation theorem, we know that we can find predictable processes $\eta$, $\xi$ such that
\[
Z_t^\theta(\lambda) =\left.  \frac{d\Q}{d\P} \right\vert_{\mathcal{F}_t^{W^1,W^2}} = 1+ \int_0^t \eta_s \, dW^1_s + \int_0^t \xi_s \, dW^2_s.
 \]
Since $\Q$ and $\P$ are equivalent, the density process is strictly positive and we can define its logarithm $L_t^\theta(\lambda) = \log{ Z_t^\theta(\lambda)}$ which satisfies, by It\^{o}'s formula
\[
dL_t^\theta(\lambda) =   \frac{\eta_t}{Z_t^\theta(\lambda)} \, dW^1_t + \frac{\xi_t}{Z_t^\theta(\lambda)} \, dW^2_t - \frac{1}{2}  \Biggl(\biggl(\frac{\eta_t}{Z_t^\theta(\lambda)}\biggr)^2+ \biggl(\frac{\xi_t}{Z_t^\theta(\lambda)}\biggr)^2 \Biggr)\, dt.
\]
Expressing now the stock-price process $S$ under $\Q$, we get, by Girsanov's theorem
\begin{align}\label{stockchange}
dB_t^{-1}S_t & = B_t^{-1}\Bigl( \sigma_tS_t dW^1_t + (\mu_t - r_t) S_t \, dt\Bigr) \nonumber \\
& = B_t^{-1}\Biggl( \sigma_t S_t dW_t^\Q +  S_t\biggl( \mu_t -r_t+ \sigma_t\frac{\eta_t}{Z_s^\theta(\lambda)}\biggr)\, dt\Biggr)
\end{align}
for some $\Q$-Brownian motion $W^\Q$ independent of $W^2$. Hence, the discounted stock price is a local martingale only if
\[
\frac{\eta}{Z^\theta(\lambda)} = - \theta \qquad \mbox{and} \qquad \frac{\xi}{Z^\theta(\lambda)} = \lambda
\]
for some predictable, square-integrable process $\lambda$. On the other hand, every expression on the right hand side of \eqref{ELMMchar2} defines an equivalent probability measure. By \eqref{stockchange}, the stock price is a local martingale under this measure.

The second part of the assertion -- that $Z_T^\theta(0) $ has a continuous law -- is a direct consequence of Lemma \ref{stochvoldensities}.\qed
\end{proof}

Next, we would like to apply Proposition \ref{suffCrit} to conclude that $\hat Y_T(y)$ has a continuous law. But, this is not easy to do, since we need to satisfy the assumption of Proposition \ref{suffCrit} that the family $Z_T^\theta(\lambda)$ is uniformly absolutely continuous with respect to the Lebesgue measure. Here, we propose an alternative approach, which requires a certain restriction, namely that the market price of risk does not depend on the Brownian motion driving the stochastic volatility. The classical example of this would be a constant market price of risk.  While this is a shortcoming from a theoretical point of view, it does not matter much from a practical perspective. The straight-forward idea of using a constant drift does not work out nicely in many situations, as there may not exist an equivalent change of measure (e.g., in the Stein \& Stein model or in the Heston model without Feller condition - c.f. \cite{HW}). A standard 
procedure (compare the discussion in \cite[Section 2.4.2]{FPSS}) is exactly to assume a constant market price of risk to avoid these integrability issues.
\begin{lemma}\label{lemma:dual-minimizer}
Assume that the market price of risk $\theta_t\in\mathcal{F}_t^{W^1}$. Then the infimum over all equivalent local martingale measures in the calculation of the value function of the dual problem $v$ is reached for $\lambda=0$, i.e.,
\[
v(y) = \inf_{\Q \in \mathcal{M}^e} \E\biggl[ \bar{U}^*\biggl(y \frac{d\Q}{d\P}\biggr)\biggr]=\E\biggl[ \bar{U}^*\biggl(y Z_T^\theta(0)\biggr)\biggr] .
\]
\end{lemma}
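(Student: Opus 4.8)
The plan is to exploit the fact that $\bar U^*$ is convex together with a conditioning (tower property) argument with respect to the filtration $(\mathcal F_t^{W^1})$ generated by the Brownian motion driving the stock. By Theorem~\ref{main}d) we already know that
\[
v(y) = \inf_{\Q \in \mathcal M^e} \E\biggl[\bar U^*\biggl(y\,\frac{d\Q}{d\P}\biggr)\biggr],
\]
and by Lemma~\ref{stochvoldensities2} every $\Q \in \mathcal M^e$ has density $Z_T^\theta(\lambda) = \mathcal E\bigl(-\int_0^\cdot \theta_t\,dW^1_t + \int_0^\cdot \lambda_t\,dW^2_t\bigr)_T$ for some $\lambda \in \Lambda$ with $\E[Z_T^\theta(\lambda)]=1$. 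Since $W^1$ and $W^2$ are independent and $\theta$ is $\mathcal F_t^{W^1}$-adapted, the martingale $Z^\theta_T(\lambda)$ factors multiplicatively: writing $Z_T^\theta(0) = \mathcal E(-\int_0^\cdot \theta_t\,dW^1_t)_T$ and $E_T(\lambda) = \mathcal E(\int_0^\cdot \lambda_t\,dW^2_t)_T$, I would verify that $Z_T^\theta(\lambda) = Z_T^\theta(0)\, E_T(\lambda)$ (the cross-variation term vanishes by independence), that $Z_T^\theta(0)$ is $\mathcal F_T^{W^1}$-measurable, and that $\E\bigl[E_T(\lambda)\mid \mathcal F_T^{W^1}\bigr] = 1$ — the last point being the place where the constant/$W^1$-adapted market price of risk hypothesis is genuinely used, since it guarantees $\theta$ contributes nothing to the $W^2$-filtration and that $E_T(\lambda)$ is a true (not merely local) martingale conditionally on $\mathcal F_T^{W^1}$, at least once we restrict attention to those $\lambda$ for which $\E[Z_T^\theta(\lambda)]=1$.

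Granting this factorization, the argument is a one-line application of the conditional Jensen inequality. For any admissible $\lambda$,
\[
\E\biggl[\bar U^*\bigl(y Z_T^\theta(\lambda)\bigr)\biggr]
= \E\Bigl[\E\bigl[\bar U^*\bigl(y Z_T^\theta(0)E_T(\lambda)\bigr)\,\big|\,\mathcal F_T^{W^1}\bigr]\Bigr]
\geq \E\Bigl[\bar U^*\bigl(y Z_T^\theta(0)\,\E[E_T(\lambda)\mid\mathcal F_T^{W^1}]\bigr)\Bigr]
= \E\Bigl[\bar U^*\bigl(y Z_T^\theta(0)\bigr)\Bigr],
\]
using that $\bar U^*$ is convex and $Z_T^\theta(0)$ is $\mathcal F_T^{W^1}$-measurable (so it pulls through the conditional expectation as a constant), and finally that $\E[E_T(\lambda)\mid\mathcal F_T^{W^1}]=1$. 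Since $\lambda=0$ is itself admissible (it gives the minimal martingale measure, for which $\E[Z_T^\theta(0)]=1$ holds by the no-arbitrage assumption already imposed), taking the infimum over $\lambda \in \Lambda$ on the left yields exactly $v(y) = \E[\bar U^*(y Z_T^\theta(0))]$.

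The main obstacle is the integrability bookkeeping needed to make the conditional Jensen step rigorous: one must ensure $\bar U^*\bigl(y Z_T^\theta(\lambda)\bigr)$ is integrable (so that both sides of the inequality are finite or the inequality is vacuously true), and that the conditional expectation $\E[E_T(\lambda)\mid\mathcal F_T^{W^1}]$ is genuinely equal to $1$ rather than merely $\leq 1$ — the latter is where one needs the hypothesis $\E[Z_T^\theta(\lambda)]=1$ built into the description of $\mathcal M^e$, combined with the factorization and Fubini, to upgrade the supermartingale bound to an equality $\P$-a.s. A secondary point is that $\bar U^*$ is bounded below (it equals $\bar U^{**}(0) \geq U(0)$ on $[\delta,\infty)$ when $U(0)>-\infty$, and in the $U(0)=-\infty$ case $\bar U^*$ is finite and convex on all of $\Rplus$ by Theorem~\ref{main}a)), so the only real concern is integrability near $0$, which is controlled by the asymptotic elasticity condition in Assumption~\ref{UassKS}c) exactly as in \cite{KramSchach1}. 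Once these measure-theoretic details are in place the proof is essentially the displayed three-line computation.
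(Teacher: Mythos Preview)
Your proposal is correct and follows essentially the same route as the paper: factor $Z_T^\theta(\lambda)=Z_T^\theta(0)\,E_T(\lambda)$ via vanishing cross-variation, condition on $\mathcal F_T^{W^1}$, apply conditional Jensen for the convex function $\bar U^*$, and use $\E[E_T(\lambda)\mid\mathcal F_T^{W^1}]=1$. If anything you are more careful than the paper, which simply asserts the last identity ``since $W^1$ and $W^2$ are independent''; you correctly flag that $\lambda$ may depend on $W^1$ and that one needs the true-martingale condition $\E[Z_T^\theta(\lambda)]=1$ together with $Z_T^\theta(0)>0$, $\E[Z_T^\theta(0)]=1$ to upgrade the a priori supermartingale bound $\E[E_T(\lambda)\mid\mathcal F_T^{W^1}]\le 1$ to an equality.
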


\begin{proof}
From Jensen's inequality it follows that for any $\lambda\in\Lambda$
\begin{align*}
&\E\Bigl[ \bar{U}^*\bigl(y Z_T^\theta(\lambda)\bigr)\Bigr] = \E\Biggl[ \left. \E\biggl[ \bar{U}^*\biggl(y \mathcal{E}\Bigl(-\int_0^{\bf \cdot} \theta_t \, dW^1_t \Bigr)_T\mathcal{E}\Bigl( \int_0^{\bf \cdot} \lambda_t \, dW^2_t\Bigr)_T\biggr) \, \right\vert \, \mathcal{F}_{\bf \cdot}^{W^1}\biggr] \Biggr]\\
&\ge \E\Biggl[ \bar{U}^*\Biggl(y \mathcal{E}\Bigl(-\int_0^{\bf \cdot} \theta_t \, dW^1_t \Bigr)_T\E\biggl[\mathcal{E}\Bigr( \int_0^{\bf \cdot} \lambda_t \, dW^2_t\Bigr)_T\Big|\mathcal{F}_{\bf \cdot}^{W^1} \biggr]\Biggr)\Biggr].
\end{align*}
Since $W^1$ and $W^2$ are independent, $\E\bigl[\mathcal{E}\bigl( \int_0^{\bf \cdot} \lambda_t \, dW^2_t\bigr)_T\big|\mathcal{F}_{\bf \cdot}^{W^1} \bigr]=1$ a.s., and we note that 
\begin{align*}
&\E\Bigl[ \bar{U}^*\bigl(y Z_T^\theta(\lambda)\bigr)\Bigr]  \ge \E\biggl[ \bar{U}^*\biggl(y \mathcal{E}\Bigl(-\int_0^{\bf \cdot} \theta_t \, dW^1_t \Bigr)_T\biggr)\biggr] = \E\Bigl[ \bar{U}^*\bigl(y Z_T^\theta(0)\bigr)\Bigr].
\end{align*}
Since, of course, $v(y) \le  \E\bigl[ \bar{U}^*\bigl(y Z_T^\theta(0)\bigr)\Bigr]$, we conclude that 
\begin{align*}
v(y) &=  \E\Bigl[ \bar{U}^*\bigl(y Z_T^\theta(0)\bigr)\Bigr].
\end{align*}
\qed
\end{proof}

Thus, we can use Lemma \ref{stochvoldensities} and apply Theorem \ref{general} directly:

\begin{theorem}\label{theoSV}
Assume that for a stochastic volatility model Assumption \ref{Ass2} holds. Additionally, assume that $\theta_t \in \mathcal{F}_t^{W_1}$, and $\int_0^T\theta_t^2 \, dt > 0~ \P$-a.s. Then the original problem \eqref{original} has a maximizer, which is also a maximizer of the concavified problem \eqref{concavified}.
\end{theorem}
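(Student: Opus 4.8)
The plan is to reduce everything to Theorem \ref{general} by checking its single hypothesis: that for every $y\in\bigl(0,w'(\beta)\bigr]$ the terminal value $\hat Y_T(y)$ of the dual optimizer has a continuous cumulative distribution function. Once this is verified, Theorem \ref{general}(b) immediately gives, for each $x>\beta$, a (unique) solution $\hat X(x)$ of the original problem \eqref{original} that coincides with the solution $\hat W(x)$ of the concavified problem \eqref{concavified}, which is exactly the assertion of Theorem \ref{theoSV}.

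First I would identify the dual optimizer explicitly. By Theorem \ref{main}(d), $v(y)=\inf_{\Q\in\mathcal M^e}\E\bigl[\bar U^*\bigl(y\frac{d\Q}{d\P}\bigr)\bigr]$, and by Lemma \ref{lemma:dual-minimizer}, which applies precisely because $\theta_t\in\mathcal F_t^{W^1}$, this infimum is attained at $\lambda=0$, i.e.\ $v(y)=\E\bigl[\bar U^*\bigl(yZ_T^\theta(0)\bigr)\bigr]$. Since $yZ^\theta(0)$ is $y$ times the density process of an equivalent local martingale measure, it is a strictly positive martingale and $XyZ^\theta(0)$ is a nonnegative local martingale, hence a supermartingale, for every $X\in\mathcal X(1)$; thus $yZ^\theta(0)\in\mathcal Y(y)$. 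Consequently $yZ^\theta(0)$ attains the infimum defining $v(y)$ over $\mathcal Y(y)$, so it is a dual optimizer, and by the a.s.\ uniqueness of the dual optimizer on $\bigl(0,(\bar U^{**})'(0)\bigr)$ from Theorem \ref{main}(b) we obtain $\hat Y_T(y)=yZ_T^\theta(0)$ a.s.\ on the relevant range of $y$ (for $y$ outside that interval one may simply take $yZ^\theta(0)$ as the optimizer, since it still realizes $v(y)$).

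Next I would invoke the distributional regularity already established. The standing hypothesis $\int_0^T\theta_t^2\,dt>0$ $\P$-a.s.\ together with Assumption \ref{Ass2} puts us exactly in the setting of the second assertion of Lemma \ref{stochvoldensities2} (which in turn reduces to Lemma \ref{stochvoldensities}): $Z_T^\theta(0)$ has a continuous law, in fact a Lebesgue density. Multiplying by the deterministic constant $y>0$ preserves absolute continuity, so $\hat Y_T(y)=yZ_T^\theta(0)$ has a continuous law for every $y>0$, in particular for every $y\in\bigl(0,w'(\beta)\bigr]$. This verifies the hypothesis of Theorem \ref{general}, and the proof is complete.

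The one point that demands care is the identification $\hat Y_T(y)=yZ_T^\theta(0)$: one must argue that the abstract dual optimizer supplied by the Bouchard--Touzi--Zeghal framework (an element of $\mathcal Y(y)$, cf.\ Theorem \ref{main}) coincides with the measure-theoretic minimizer produced by Lemma \ref{lemma:dual-minimizer}. This rests on the two facts used above, that $yZ^\theta(0)\in\mathcal Y(y)$ and that the dual optimizer is a.s.\ unique on $\bigl(0,(\bar U^{**})'(0)\bigr)$; every other step is a direct citation of the preceding lemmas and theorems.
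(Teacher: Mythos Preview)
Your proposal is correct and follows essentially the same route as the paper, which simply notes that one combines Lemma \ref{lemma:dual-minimizer} with Lemma \ref{stochvoldensities} (via Lemma \ref{stochvoldensities2}) and then applies Theorem \ref{general} directly. You are in fact more careful than the paper in making explicit the identification $\hat Y_T(y)=yZ_T^\theta(0)$ via membership in $\mathcal Y(y)$ and the uniqueness statement of Theorem \ref{main}(b); the paper leaves this step implicit.
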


Finally, we want to show that the Assumption \ref{Ass2} is satisfied in many standard volatility models. First we remark that if the volatility function $\sigma_t = \sigma(Y_t)$ is a smooth function in $Y_t$, bounded and bounded away from zero, and the volatility process satisfies $Y \in \mathbb{L}^{1,2}$, then the assumption is satisfied. This is also enough to ensure the existence of an equivalent local martingale measure. Turning to more standard models, we observe that it can be shown that many standard volatility processes, such as e.g., Ornstein-Uhlenbeck, CIR or geometric Brownian motion, satisfy the Malliavin differentiability condition (for the CIR process, at least in the nice regime when Feller's condition holds -- cf. \cite{AlosEwald}).  Thus, all possible problems arise, not from the Malliavin smoothness condition, but from the requirement that $\theta_t\in\mathcal{F}_t^{W^1}$, which is usually not satisfied for constant drift. As mentioned above, the standard way 
to circumvent this problem is to allow for a volatility-dependent excess appreciation:

\begin{example}
{\bf (Correlated Hull-White model)}: We consider a bond with constant interest rate $r$ and the stock price given by
\begin{align*}
dS_t &= \Bigl( r + Y_tf(W_t^1)\Bigr) S_t \, dt + Y_t S_t \,dW_t^1, \quad S_0 =s,\\
dY_t &= b Y_t \, dt + \rho a Y_t \, dW_t^1 + \sqrt{1-\rho^2} a Y_t \, dW_t^2, \quad Y_0 = y,
\end{align*}
for constants, $b \in \mathbb{R}$, $a$, $s$, $y>0$ and $\rho \in (-1,1)$ and independent Brownian motions $W^1$, $W^2$. Moreover we assume that the excess appreciation rate $f(W_t^1)$ is given via a bounded $C^1(\Rpplus)$-function $f$ with bounded derivative, and that it is not identically zero. This guarantees that the market price of risk $\theta_t = f(W_t^1)$ remains bounded and ensures that the integrability condition of Assumption \ref{Ass2} is satisfied.

Calculating the Malliavin derivative of the $\theta^2$
\[
D_t\theta_s^2 =  D_t \bigl(f(W_s^1)^2\bigr) =  2  f'(W_s^1)  D_t W^1_s \mathbbone_{[t,T]}(s) = \begin{pmatrix} 2  f'(W_s^1) \mathbbone_{[t,T]}(s)\\ 0\end{pmatrix}.
\]
We can conclude that $\theta^2 \in \mathbb{L}^{1,2}$ since $f$ and $f'$ are bounded.
Moreover, $\int_0^T\theta_t^2 \, dt > 0~ \P$-a.s since $f$ is continuous and not identically zero. Thus, all the conditions of Lemma \ref{stochvoldensities} are satisfied.
\end{example}

The proofs for the additional two examples follow the proof of the previous example, and are thus omitted.
\begin{example}
{\bf (Correlated Scott model)}: The Scott (or exponential Ornstein-Uhlen\-beck) model is given (besides the bond with constant interest rate $r$) by the stock price dynamics
\begin{align*}
dS_t &= \Bigl( r + f(W_t^1)e^{Y_t}\Bigr)  S_t \, dt + e^{Y_t} S_t \,dW_t^1, \quad S_0 =s,\\
dY_t &= \kappa (\theta - Y_t) \, dt + \rho \xi \, dW_t^1 + \sqrt{1-\rho^2} \xi \, dW_t^2, \quad Y_0 = y,
\end{align*}
for constants,  $\kappa$, $\theta$, $\xi$, $s$, $y>0$ and $\rho \in (-1,1)$ and independent Brownian motions $W^1$, $W^2$. Again we assume that the excess appreciation rate $f(W_t^1)$ is given via a bounded $C^1(\R)$-function $f$ with bounded derivative, and that it is not identically zero.  This guarantees that the market price of risk $\theta_t = f(W_t^1)$ remains bounded and ensures the integrability condition of Assumption \ref{Ass2}.
\end{example}

\begin{example}
{\bf (Correlated Heston model under Feller condition)}:  We consider a bond with constant interest rate $r$ and the stock price given by
\begin{align*}
dS_t &= \bigl(r + f(W_t^1)\sqrt{Y_t}\bigr) S_t \, dt + \sqrt{Y_t} S_t \,dW_t^1, \quad S_0 =s,\\
dY_t &= \kappa\bigl( \theta- Y_t \bigr)\, dt + \rho \xi \sqrt{Y_t} \, dW_t^1 + \sqrt{1-\rho^2} \xi \sqrt{Y_t} \, dW_t^2, \quad Y_0 = y,
\end{align*}
for constants $\kappa$, $\theta$, $\xi$, $s$, $y>0$ and $\rho \in (-1,1)$ and independent Brownian motions $W^1$, $W^2$. Moreover, we impose the Feller condition $2 \kappa \theta > \xi^2$. Again, we assume that the excess appreciation rate $f(W_t^1)$ is given via a bounded $C^1(\R)$-function $f$ with bounded derivative, and that it is not identically zero.  This guarantees that the market price of risk $\theta_t = f(W_t^1)$ remains bounded and ensures the that integrability condition of Assumption \ref{Ass2} is satisfied.
\end{example}

Finally, we would like to remark that the same reasoning applied above to stochastic volatility models also holds true for Markovian regime switching models, as they have the same kind of representation of equivalent local martingale measures (c.f. \cite{Siu}). As long as the market price of risk is positive, sufficiently (Skorohod-) inte\-grable and depends only on the stock-driving Brownian motion, the infimum is attained independently of the volatility risk. Therefore, we conclude that the optimizer has a density.

\section{Preliminaries - The Classical Utility Optimization Problem}\label{sec:prelim}

We will now briefly review the classical results of utility optimization. We adapt statements of \cite{BTZ} and \cite{WZ} (mainly Theorem 3.2 of \cite{BTZ}) on non-smooth utility maximization for the use in our setting. To keep notation concise and well-integrated with the rest of the paper, we hide our incentive scheme by setting $g(x)=x$ throughout this section. To emphasize this we will talk about the {\em classical} utility optimization problems 
\begin{equation}\label{primal}
u(x) := \sup_{X \in \mathcal{X}(x)} \E\bigl[U\bigl(X_T\bigr)\bigr],
\end{equation}
and its dual
\begin{equation}\label{dual1}
v(y) := \inf_{Y \in \mathcal{Y}(y)} \E \Bigl[U^*\bigl(Y_T\bigr)\Bigr].
\end{equation}
We will continue with our standing Assumptions \ref{assumpt:finite} and \ref{equivlMart} throughout this section. However, we will later relax Assumption \ref{UassKS}. The central result of Kramkov and {Scha\-chermayer \cite[Theorem 2.2.]{KramSchach1} is the following:

\begin{theorem}[Kramkov-Schachermayer]\label{thm1}
Under the Assumptions \ref{assumpt:finite}, \ref{equivlMart}, and \ref{UassKS}, for the utility maximization problem \eqref{primal}, it holds that
\begin{itemize}\label{KSduality}
\item[a)]The functions $u$ and $v$ are finite on $\mathbb{R}_{> 0}$ and conjugate, i.e., $v = u^*$. Moreover $u$ and $-v$ are strictly concave, strictly increasing, continuously differentiable on $\Rplus$, satisfy the Inada conditions \eqref{Inada} and $u$ satisfies the asymptotic elasticity condition \eqref{AE}.
\item[b)]The optimal solutions $\hat{X}(x) \in \mathcal{X}(x)$ for \eqref{primal} and $\hat{Y}(y) \in \mathcal{Y}(y)$ for \eqref{dual1} exist, are unique and are for $y = u'(x)$ related through
\[
\hat{X}_T(x) = -(U^*)'\bigl(\hat{Y}_T(y)\bigr), \qquad \qquad \hat{Y}_T(y) = U'\bigl(\hat{X}_T(x)\bigr).
\]
Moreover, $\hat{X}(x)\hat{Y}(y)$ is a uniformly integrable martingale.
\item[c)]Additionally we have
\[
v(y) = \inf_{\Q \in \mathcal{M}^e} \E\biggl[ U^*\biggl(y \frac{d\Q}{d\P}\biggr)\biggr],
\]
however the infimum is in general not attained in $\mathcal{M}^e$.
\end{itemize}
\end{theorem}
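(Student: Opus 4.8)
This is the classical Kramkov--Schachermayer theorem, so the plan is to cite and transcribe their argument; I will only sketch which components have to be assembled. The three parts reduce to: (a) the finiteness and conjugacy of the value functions together with the regularity of $u$ and $v$; (b) existence, uniqueness and the first-order relation between the primal and dual optimizers; (c) the representation of $v$ as an infimum over $\mathcal{M}^e$ of the ``simple'' dual functionals.

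First I would establish (c), or rather the abstract bipolar structure underlying everything else. The key technical input is the bipolar theorem of Kramkov--Schachermayer for the sets $\mathcal{X}(1)$ and $\mathcal{Y}(1)$ in $L^0_+(\Omega,\mathcal{F},\P)$: the set $\mathcal{C} := \{ X_T : X \in \mathcal{X}(1)\}$, closed under domination by elements below it and solid, has polar $\mathcal{D} := \{Y_T : Y \in \mathcal{Y}(1)\}$, and $\mathcal{D}$ is the closed, solid, convex hull in $L^0_+$ of $\{\mathrm{d}\Q/\mathrm{d}\P : \Q \in \mathcal{M}^e\}$. Assumption \ref{equivlMart} guarantees these sets are nonempty; the locally bounded semimartingale hypothesis and the fundamental theorem of asset pricing of \cite{DS} supply the no-arbitrage structure. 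Once this bipolarity is in hand, the dual problem over processes $\mathcal{Y}(y)$ collapses to a static problem over $y\mathcal{D}$, and since $\bar U^*$ (here $U^*$) is convex and decreasing, the infimum over the larger set $y\mathcal{D}$ coincides with the infimum over its dense subset $\{y\,\mathrm{d}\Q/\mathrm{d}\P : \Q\in\mathcal{M}^e\}$, giving the displayed formula in (c); the non-attainment is the standard observation that the minimizing sequence may converge only to an element of $y\mathcal{D}\setminus\{y\,\mathrm{d}\Q/\mathrm{d}\P\}$ (a supermartingale density that is not a true martingale density).

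For (a) I would proceed as in \cite{KramSchach1}: Assumption \ref{assumpt:finite} gives $u(x_0)<\infty$ for some $x_0$, hence (by concavity of $U$ and scaling of the budget set) $v(y)<\infty$ for all $y>0$, and then $u(x)\le \inf_y(v(y)+xy)<\infty$ for every $x>0$; the asymptotic elasticity condition \eqref{AE} is exactly what upgrades this to two-sided finiteness and to the conjugacy $v = u^*$, $u = v^*$, via the argument controlling $U^*$ near $0$. Strict concavity, monotonicity, $C^1$-smoothness and the Inada conditions \eqref{Inada} for $u$ and $-v$ then follow by the standard differentiability-of-the-value-function lemmas (the subgradient inequality combined with the envelope theorem), again using \eqref{AE} to exclude degenerate behavior at the endpoints. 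For (b), existence of the dual optimizer $\hat Y(y)$ follows from the $L^0$-closedness of $y\mathcal{D}$ together with a Komlós-type / lower-semicontinuity argument applied to the convex functional $Y\mapsto \E[U^*(Y_T)]$; strict convexity of $U^*$ on its domain gives uniqueness. The primal optimizer is then recovered as $\hat X_T(x) = -(U^*)'(\hat Y_T(y))$ with $y=u'(x)$, one verifies $\hat X_T(x)\in\mathcal{C}$ by the polarity relation $\E[\hat X_T \hat Y_T]\le x y$, and the complementary-slackness / no-loss argument forces equality, which is precisely the uniform integrability of the product $\hat X(x)\hat Y(y)$ and, conversely, identifies $\hat X(x)$ as a genuine wealth process (not merely a terminal random variable) via the optional-decomposition/superreplication theorem. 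The dual relation $\hat Y_T(y) = U'(\hat X_T(x))$ is the inverse of the first, using $(U^*)' = -(U')^{-1}$.

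The main obstacle is the passage from the static optimizers in $\mathcal{C}$ and $\mathcal{D}$ back to genuine wealth and dual processes $\hat X(x)\in\mathcal{X}(x)$, $\hat Y(y)\in\mathcal{Y}(y)$ — i.e., showing that the abstractly obtained terminal random variables are attainable and that the complementary slackness $\E[\hat X_T\hat Y_T]=xy$ holds with equality, forcing uniform integrability of the product. This is exactly the content that requires the bipolar theorem and the superreplication theorem of \cite{DS,KramSchach1} rather than soft functional analysis, and it is where the local-boundedness of $S$ and the structure of $\mathcal{Y}(y)$ are essential; everything else is convex duality on the real line powered by \eqref{AE}.
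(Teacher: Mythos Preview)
Your sketch is a faithful summary of the original Kramkov--Schachermayer argument, and nothing in it is wrong. Note, however, that the paper does not give its own proof of this theorem at all: it is stated purely as a quoted result (``The central result of Kramkov and Schachermayer \cite[Theorem 2.2]{KramSchach1}'') and used as a black box, so there is no paper proof to compare against. Your outline --- bipolar structure of $\mathcal{C}$ and $\mathcal{D}$, finiteness via asymptotic elasticity, existence of the dual optimizer by Koml\'os/lower semicontinuity, recovery of the primal optimizer through $-(U^*)'$, and complementary slackness forcing uniform integrability --- is exactly the architecture of \cite{KramSchach1}, so you are simply reproducing the cited source rather than deviating from the paper.
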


Asymptotic elasticity is the minimal condition to assure the duality result in general semimartingale models for smooth utility functions (cf. \cite{KramSchach1}). (If one poses a joint condition on model and utility function, then the minimal condition is the finiteness of the dual value function, cf. \cite{KramSchach2}.) However, as previously mentioned, the concavified utility function $\bar{U}^{**}$ will be, in general, neither strictly concave nor satisfy the Inada condition at $0$. Thus, we will have to rely on results for nonsmooth utility maximization. While we will still impose Assumptions \ref{assumpt:finite} and \ref{equivlMart}, we will have to relax Assumption \ref{UassKS}. In the nonsmooth case, it turns out that the asymptotic elasticity -- following Deelstra, Pham, and Touzi \cite{DPhT} -- has to be written on the convex conjugate of the utility function. The following general result is due to Bouchard, Touzi and Zeghal \cite[Theorem 3.2.]{BTZ}. A simplification of the proof can be found in 
Westray and Zheng \cite[Theorem 5.1.]{WZ}.

We relax the conditions on the utility function $U$, assuming only that $U: (\alpha,\infty)\rightarrow \R$, $\alpha \in \R$, is nonconstant, nondecreasing and concave (we extend $U$ again continuously to $[\alpha, \infty)$, allowing the value $-\infty$ at $\alpha$ while still assuming that $U(\infty)>0$). In particular, we no longer assume that $U$ is continuously differentiable on $(\alpha,\infty)$ nor do we require that $U$ to be strictly increasing or strictly concave. Finally, we no longer impose Inada conditions, but merely that the closure of the domain of the dual function is $\R_{\ge 0}$.  As mentioned above, the asymptotic elasticity condition will be written on the dual function. Hence, we substitute the following assumption for Assumption \ref{UassKS}
\begin{assumption}\label{UassBTZ}
The investor's preferences are represented by a utility function $U: (\alpha,\infty)\rightarrow \R$.
\begin{itemize}
\item[a)] We assume that $U$ is nonconstant, nondecreasing and concave;
\item[b)] The dual function satisfies $\overline{\dom{U^{*}}} = \Rpplus$;
\item[c)] Moreover, it satisfies the dual asymptotic elasticity condition
\begin{equation}\label{AEdual}
AE^*(U) := \limsup_{y \to 0} \sup_{x \in -\partial U^*(y)} \frac{yx}{U^*(y)} < \infty. 
\end{equation}
\item[d)] There exists $y>0$ such that $v(y)$ defined by \eqref{dual1} is finite.
\end{itemize}
\end{assumption}

\begin{remark}\label{AEequiv}
 We note that for smooth $U$ the classical and dual asymptotic elasticity condition are equivalent under Inada-type conditions (cf. \cite[Proposition 4.1.]{DPhT} for a precise statement).
\end{remark}

\begin{theorem}[Bouchard-Touzi-Zeghal]\label{BTZduality}
Assume that Assumptions \ref{assumpt:finite}, \ref{equivlMart}, and \ref{UassBTZ} are satisfied, then for the optimization problems \eqref{primal} and \eqref{dual1} it holds that
\begin{itemize}
\item[a)] The functions $u$ and $v$ are finite on $(\alpha, \infty)$ and $\mathbb{R}_{> 0}$ respectively, and conjugate, i.e., $v = u^*$.
\item[b)]  Optimal solutions $\hat{X}(x) \in \mathcal{X}(x)$ for \eqref{primal} and $\hat{Y}(y) \in \mathcal{Y}(y)$ for \eqref{dual1} exist such that for some $y \in \partial u(x)$ we have that $\hat{X}(x)\hat{Y}(y)$ is a uniformly integrable martingale and
\begin{align}
\hat{X}_T(x) \in -\partial U^*\bigl(\hat{Y}_T(y)\bigr).
\label{eq:X-hat}
\end{align}
\item[c)] Additionally we have
\begin{align}
v(y) = \inf_{\Q \in \mathcal{M}^e} \E\biggl[ U^*\biggl(y \frac{d\Q}{d\P}\biggr)\biggr],
\label{eq:V-min}
\end{align}
however the infimum is in general not attained in $\mathcal{M}^e$.
\end{itemize}
\end{theorem}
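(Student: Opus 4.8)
This statement is, up to notation, \cite[Theorem 3.2]{BTZ}, with the streamlined argument given in \cite[Theorem 5.1]{WZ}; the plan is to sketch that proof in the present setting. The first step is to pass from the dynamic problems \eqref{primal}, \eqref{dual1} to static ones over sets of attainable terminal values. Put $\mathcal{C}(x) := \{h \in L^0_+ : 0 \le h \le X_T \text{ for some } X \in \mathcal{X}(x)\}$ and $\mathcal{D}(y) := \{h \in L^0_+ : 0 \le h \le Y_T \text{ for some } Y \in \mathcal{Y}(y)\}$. By the bipolar theory used in Kramkov and Schachermayer \cite{KramSchach1}, these sets are convex, solid and closed in probability, they satisfy $\mathcal{C}(x) = \{h \ge 0 : \E[hg] \le xy \text{ for all } g \in \mathcal{D}(y)\}$ and symmetrically, $\mathcal{C}(1)$ is bounded in $L^0$ and contains the constant $1$, and $\mathcal{D}(1)$ contains every density $d\Q/d\P$, $\Q \in \mathcal{M}^e$. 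Since $U$ is nondecreasing and $U^*$ nonincreasing, one has $u(x) = \sup_{h \in \mathcal{C}(x)} \E[U(h)]$ and $v(y) = \inf_{g \in \mathcal{D}(y)} \E[U^*(g)]$, so everything reduces to a convex duality problem over the polar pair $(\mathcal{C}(x),\mathcal{D}(y))$.

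For part (a), Fenchel's inequality $U(h) \le U^*(g) + hg$, integrated against arbitrary $h \in \mathcal{C}(x)$, $g \in \mathcal{D}(y)$ and combined with $\E[hg] \le xy$, gives $u(x) - xy \le v(y)$ for all $x,y>0$, i.e. $u^* \le v$; in particular $u$ is dominated by an affine function. One then checks that $v$ is a proper, convex, lower semicontinuous, nonincreasing function that is finite on all of $\mathbb{R}_{>0}$ — finiteness at one point is Assumption \ref{UassBTZ}(d), and it propagates to the whole half-line via the dual asymptotic elasticity bound \eqref{AEdual}. A minimax argument over $(\mathcal{C}(x),\mathcal{D}(y))$ produces, for each $x$, elements $h \in \mathcal{C}(x)$ with $\E[U(h)]$ arbitrarily close to $\inf_{y>0}(v(y)+xy)$, whence $u(x) \ge \inf_{y>0}(v(y)+xy)$; the reverse inequality is immediate from $u^* \le v$. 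Thus $u(x) = \inf_{y>0}(v(y)+xy)$, which, since $u$ is concave and $v$ convex and lower semicontinuous, is equivalent to $v = u^*$; finiteness of $u$ on $(\alpha,\infty)$ follows from Assumption \ref{assumpt:finite}.

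For part (b), take a minimizing sequence $(g_n) \subset \mathcal{D}(y)$ for $v(y)$. Koml\'{o}s' lemma (in the $L^0_+$ form used by Delbaen--Schachermayer) furnishes forward convex combinations $\tilde g_n \to \hat Y_T(y) \in \mathcal{D}(y)$ almost surely, and convexity of $U^*$ together with Fatou's lemma yields $\E[U^*(\hat Y_T(y))] \le v(y)$; the crucial point is that the negative parts $(U^*(\tilde g_n))^-$ are uniformly integrable, which is precisely where \eqref{AEdual} is invoked, through a de la Vall\'{e}e-Poussin estimate preventing ``utility mass'' from escaping to $0$. A symmetric Koml\'{o}s argument on a maximizing sequence in $\mathcal{C}(x)$ — uniform integrability of $(U(h_n))^+$ now being controlled by \eqref{AEdual} together with $\E[h_n g] \le xy$ for a fixed $g \in \mathcal{D}(y)$ having $\E[U^*(g)] < \infty$ — produces the primal optimizer $\hat X(x)$. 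For the first-order condition, fix $y \in \partial u(x)$, so conjugacy gives $u(x)+v(y)=xy$. Pointwise $U(\hat X_T) \le U^*(\hat Y_T) + \hat X_T\hat Y_T$, while $\hat X(x)\hat Y(y)$ is a nonnegative supermartingale, so $\E[\hat X_T\hat Y_T] \le xy$. Integrating and chaining, $u(x) \le v(y) + \E[\hat X_T\hat Y_T] \le v(y)+xy = u(x)$, so both inequalities are equalities: $\E[\hat X_T\hat Y_T]=xy$, whence $\hat X(x)\hat Y(y)$ is a uniformly integrable martingale, and $U(\hat X_T) = U^*(\hat Y_T)+\hat X_T\hat Y_T$ a.s., which is exactly \eqref{eq:X-hat}.

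For part (c), the inclusion $\{y\, d\Q/d\P : \Q \in \mathcal{M}^e\} \subseteq \mathcal{D}(y)$ gives ``$\le$'' in \eqref{eq:V-min}, and for ``$\ge$'' one uses that $\mathcal{D}(y)$ is, by \cite{KramSchach1}, the closed, convex, solid hull in $L^0$ of the rescaled ELMM densities (which dominate the supermartingale deflators relevant here, $S$ being locally bounded), together with the fact that $g \mapsto \E[U^*(g)]$ is convex, lower semicontinuous and nonincreasing, so its infimum over $\mathcal{D}(y)$ is already attained along the generators $y\, d\Q/d\P$. The main obstacle is that everything the classical smooth argument of Kramkov--Schachermayer gets for free from strict concavity, differentiability and the Inada conditions — finiteness of $v$ on all of $\mathbb{R}_{>0}$, the uniform-integrability estimates underlying existence of both optimizers, and the stability needed to pass to limits — must here be extracted solely from the dual asymptotic elasticity condition \eqref{AEdual} and the domain condition $\overline{\dom{U^{*}}} = \Rpplus$; manufacturing the requisite de la Vall\'{e}e-Poussin test functions from \eqref{AEdual} and verifying their stability under convex combinations is the technical heart, and is what \cite{BTZ} and \cite{WZ} carry out in detail.
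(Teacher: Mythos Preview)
Your sketch is essentially a correct outline of the Bouchard--Touzi--Zeghal / Westray--Zheng argument itself: reduction to the static polar pair $(\mathcal{C}(x),\mathcal{D}(y))$, weak duality via Fenchel, propagation of finiteness of $v$ from one point to all of $\Rplus$ using $AE^*$, Koml\'{o}s-type compactness for existence of both optimizers with the de la Vall\'{e}e-Poussin uniform integrability coming from \eqref{AEdual}, and the first-order condition from saturating Fenchel's inequality. I do not see a genuine gap.

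The paper, however, takes a different route. It does \emph{not} reprove the content of \cite{BTZ}/\cite{WZ}; it treats Theorem~\ref{BTZduality} as an imported result and devotes its ``proof'' to explaining why \cite[Theorem 3.2]{BTZ} applies in the present formulation. Concretely: one shifts so that $\alpha=0$, specializes \cite{BTZ} to the case of zero random endowment ($B=0$, $\beta=0$), and then argues that the BTZ/WZ abstract formulation in terms of terminal random variables coincides with the Kramkov--Schachermayer process formulation used here, the key identification being that the set of nonnegative random variables dominated by some $Y_T$, $Y \in \mathcal{Y}(y)$, is the bipolar of $\{y\,d\Q/d\P : \Q \in \mathcal{M}^e\}$ (Brannath--Schachermayer, \cite{BS}; \cite[Proposition 3.1 and Section 4]{KramSchach1}). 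Finiteness of $u$ then comes from Assumption~\ref{assumpt:finite} and concavity, finiteness of $v$ from \cite[Lemma 5.4]{WZ}, and each item a)--c) is read off from the corresponding part of \cite[Theorem 3.2 and Remark 3.9]{BTZ}.

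So the difference is one of scope: you rederive the substance of the cited theorem, which makes your write-up self-contained but longer and somewhat redundant with the literature; the paper instead isolates the only nontrivial step for its purposes --- the translation between the two frameworks via the bipolar theorem --- and delegates the analytic core to the references. For a result explicitly labeled ``Bouchard--Touzi--Zeghal'' the paper's approach is the more appropriate one, though your sketch is not wrong.
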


Note that the subdifferential-valued random variables in part b) should be understood as random variables whose range is a subset of the image of a random variable under a set-valued function. This is a much larger set then just the collection of random variables one would obtain by picking only fixed elements in the subdifferential and looking on images under these mappings. In the first case we can have a different mapping for every $\omega \in \Omega$, whereas in the second case one fixes a single function for all $\omega$.

\begin{proof}
We adapted here the statement of \cite{BTZ} and \cite{WZ} to better fit the framework with \cite{KramSchach1}.The fact that the formulations in [3] and [29] differ from the formulations in [16] stems from the goal of the authors of [3] and [29] to accommodate a discussion of portfolio optimization on the whole real line with random initial endowment (as opposed to [16] who consider a simple portfolio optimization problem on the positive half of real line). However, their formulations (in terms of processes or terminal random variables) are equivalent for our case (without random endowment); they are (using the terminology of Kramkov/Schachermayer) the concrete and the abstract side of the same problem. The ultimate reason for the equality of both formulations is that the set of nonnegative $\mathcal{F}_T$-measurable random variables dominated by some $Y_T$, $Y \in \mathcal{Y}(y)$, is the bipolar of the set $\bigl\{y \frac{d\Q}{d\P} \, : \, \Q \in \mathcal{M}^e\bigr\}$. This is due to the bipolar theorem on 
the cone of nonnegative random variables proved by Brannath and Schachermayer \cite{BS}. For details, see \cite[Proposition 3.1 and Section 4]{KramSchach1}.

First, without loss of generality, we may assume that $\alpha=0$, otherwise, shift everything by $\alpha$. Next, we can apply Theorem 3.2 of \cite{BTZ} with $B=0,\beta=0$. It is not hard to see that, by Assumption \ref{assumpt:finite} and by the concavity of $U$, the function $u$ is finite on $\Rpplus$. The fact that $v$ is finite follows directly from Lemma 5.4 of \cite{WZ} and from Assumption \ref{UassBTZ}. One concludes that $v=u^{*}$ from Theorem 3.2 part (iii) of \cite{BTZ}.

The existence of optimal solutions $\hat X(x)$ and $\hat Y(y)$ follows from parts (i) and (ii) of Theorem 3.2 of \cite{BTZ}, respectively. The fact that $y\in\partial u(x)$ is a consequence of part (i). Additionally, \eqref{eq:X-hat} and the fact that $\hat{X}(x)\hat{Y}(y)$ is a (uniformly integrable) martingale follow from part (iii). Finally, \eqref{eq:V-min} is obtained from Remark 3.9 part 1. of \cite{BTZ}.
\qed
\end{proof}

Finally, we note that the solutions are, in general, not unique and that the value function may not be smooth. Moreover, there may well exist a random variable $Z \in -\partial U^*\bigl(\hat{Y}_T(y)\bigr)$ satisfying $\E[Z\hat{Y}_T(y)] = xy$, which is not dominated  by the terminal value of any $X(x) \in \mathcal{X}(x)$, as shown by Westray and Zheng in \cite{WZcounter}.

\section{The Dual and the Concavified Problem}\label{genresult}

Keeping in mind our standing Assumptions \ref{assumpt:finite}-\ref{UassKS}, we resume our discussion about the portfolio manager's maximization problem \ref{original}, the concavified problem \ref{concavified}, and their common dual problem \ref{dualgen}. Our plan now is to apply Theorem \ref{BTZduality}. Therefore, must first ensure that $\bar{U}^{**}$ satisfies all the conditions of Theorem \ref{BTZduality}. We also collect some properties of this function:
\begin{proposition}\label{technicallemma}
For the concavified utility function $\bar{U}^{**}$ we have
\[
\overline{\dom{\bar{U}^{**}}} = [\beta, \infty),  \qquad \beta := \inf \{ x >0 : \bar{U} (x) >-\infty\} \in [0,\infty).
\] 
Furthermore, $\bar{U}^{**}$, together with its conjugate $\bar{U}^*$, enjoys the following regularity properties. $\bar{U}^{**}$ is continuously differentiable on $(\beta, \infty)$; $\bar{U}^*$ is strictly convex on the whole domain if $U(0)=-\infty $, otherwise it is strictly convex on $\bigl(0,\bigl(\bar{U}^{**}\bigr)'(0)\bigr)$ and constant $\bar{U}(0) = \bar{U}^{**}(0)$ on $\bigl[\bigl(\bar{U}^{**}\bigr)'(0),\infty\bigr)$. Finally, $\bar{U}^{**}$ satisfies Assumption \ref{assumpt:finite} and Assumption \ref{UassBTZ}.
\end{proposition}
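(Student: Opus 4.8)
The plan is to establish the four assertions in order, using throughout the key structural fact that $g$ is nondecreasing with slope at most $1$, so $g(x)\le g(0)+x$ and hence $\bar{U}(x)=U(g(x))\le U(g(0)+x)$ with concave right‑hand side; in particular $\bar{U}$ is dominated by an affine function, so $\bar{U}^{*},\bar{U}^{**}$ are well defined. Since $\hypo\bar{U}^{**}=\overline{\co}(\hypo\bar{U})$ is contained in $[\beta,\infty)\times\R$ (as $\hypo\bar{U}$ is), while $\bar{U}^{**}\ge\bar{U}$ gives $\dom\bar{U}^{**}\supseteq(\beta,\infty)$, upper semicontinuity yields $\overline{\dom\bar{U}^{**}}=[\beta,\infty)$. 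Because $\hypo\bar{U}$ is stable under rightward shifts of the first coordinate ($\bar{U}$ being nondecreasing), so is its closed convex hull, hence $\bar{U}^{**}$ is nondecreasing, and $\bar{U}\le U(g(0)+\cdot)$ gives the pointwise bound $\bar{U}^{**}\le U(g(0)+\cdot)$. Finally $\bar{U}^{**}(\beta)=\bar{U}(\beta)$: for $g(0)>0$ the line $x\mapsto\bar{U}(0)+U'(g(0))x$ is a concave majorant of $\bar{U}$ (using $\tfrac{U(g(x))-U(g(0))}{x}\le U'(g(0))$) with intercept $\bar{U}(0)$, while for $g(0)=0$ one has $\bar{U}\le U$ hence $\bar{U}^{**}\le U$; either way $\bar{U}^{**}(\beta)\le\bar{U}(\beta)$, and $\bar{U}^{**}(\beta)=-\infty$ exactly when $U(0)=-\infty$ (the case $\beta>0$ being included, via the analogous domination by $U(\cdot-\beta)$).

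For the $C^{1}$‑regularity on $(\beta,\infty)$: the concave function $\bar{U}^{**}$ has one‑sided derivatives there, so it suffices to rule out a kink at any $x_{0}\in(\beta,\infty)$. Such an $x_{0}$ cannot lie in the open set $A:=\{\bar{U}^{**}>\bar{U}\}$, since by Carath\'eodory's theorem in dimension one together with the minimality of $\bar{U}^{**}$, the concavification is affine on each connected component of $A$. Hence $\bar{U}^{**}(x_{0})=\bar{U}(x_{0})$, and comparing one‑sided difference quotients of $\bar{U}\le\bar{U}^{**}$ about this common value gives $\bar{U}'_{+}(x_{0})\le(\bar{U}^{**})'_{+}(x_{0})<(\bar{U}^{**})'_{-}(x_{0})\le\bar{U}'_{-}(x_{0})$, i.e.\ $\bar{U}$ has a \emph{concave} corner at $x_{0}$. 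But this is impossible for $\bar{U}=U\circ g$ with $U\in C^{1}$, $U'>0$, and $g$ convex: where $g(x_{0})>0$ we have $\bar{U}'_{\pm}(x_{0})=U'(g(x_{0}))g'_{\pm}(x_{0})$ with $g'_{-}(x_{0})\le g'_{+}(x_{0})$; where $g(x_{0})=0$ — which forces $U(0)>-\infty$, $\beta=0$, and $x_{0}$ in the interval $\{g=0\}$, on whose interior $\bar{U}$ is constant and at whose right endpoint $\bar{U}'_{-}=0\le\bar{U}'_{+}$ — there is again no concave corner. This contradiction shows $\bar{U}^{**}$ is differentiable on $(\beta,\infty)$, hence (being concave) of class $C^{1}$ there.

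For $\bar{U}^{*}=(\bar{U}^{**})^{*}$: it is convex, nonincreasing, and finite on $(0,\infty)$ (near $0$ because $(\bar{U}^{**})'(x)\to0$ as $x\to\infty$; for large $y$ because then $x\mapsto\bar{U}^{**}(x)-xy$ is nonincreasing, so $\bar{U}^{*}(y)=\bar{U}^{**}(\beta)$, finite whenever the maximal slope $y_{\max}:=(\bar{U}^{**})'(\beta^{+})$ is). By conjugacy, the affine pieces of $\bar{U}^{*}$ correspond exactly to the kinks of $\bar{U}^{**}$: $\bar{U}^{*}$ is affine of slope $-x_{0}$ on a nondegenerate interval iff $\partial\bar{U}^{**}(x_{0})$ is nondegenerate, which by the previous step forces $x_{0}\le\beta$. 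When $U(0)=-\infty$, $\bar{U}^{**}(\beta)=-\infty$ forces $y_{\max}=\infty$ (a concave function tending to $-\infty$ at an endpoint has slope blowing up there), so every supporting slope of $\bar{U}^{**}$ lies in $(0,\infty)$, $\bar{U}^{*}$ has no affine piece, and $\bar{U}^{*}$ is strictly convex on $(0,\infty)$. When $U(0)>-\infty$, $\beta=0$ and $\bar{U}^{**}(0)=\bar{U}(0)$ is finite: for $y\ge y_{\max}$ the monotonicity above gives $\bar{U}^{*}(y)=\bar{U}^{**}(0)=\bar{U}(0)$, while on $(0,y_{\max})$ an affine piece would require slope $-x_{0}$ with $x_{0}>0$, excluded; so $\bar{U}^{*}$ is strictly convex on $(0,(\bar{U}^{**})'(0))$ and constant $\bar{U}(0)=\bar{U}^{**}(0)$ on $[(\bar{U}^{**})'(0),\infty)$, the latter interval being empty when $(\bar{U}^{**})'(0)=\infty$.

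It remains to verify Assumptions \ref{assumpt:finite} and \ref{UassBTZ} for $\bar{U}^{**}$. Concavity, monotonicity and non‑constancy are immediate, and $\overline{\dom\bar{U}^{*}}=\Rpplus$ follows from the finiteness of $\bar{U}^{*}$ on $(0,\infty)$ above. From $\bar{U}^{**}\le U(g(0)+\cdot)$ and $X_{T}+g(0)\in\mathcal{X}(x_{0}+g(0))$ we get $\E[\bar{U}^{**}(X_{T})]\le\E[U(X_{T}+g(0))]\le\sup_{\tilde{X}\in\mathcal{X}(x_{0}+g(0))}\E[U(\tilde{X}_{T})]<\infty$ by Assumption \ref{assumpt:finite} for $U$ and Theorem \ref{thm1}, giving Assumption \ref{assumpt:finite}; likewise $\bar{U}^{*}(y)\le U^{*}(y)+g(0)y$ together with $\E[Y_{T}]\le y$ for $Y\in\mathcal{Y}(y)$ gives $v(y)\le\inf_{Y\in\mathcal{Y}(y)}\E[U^{*}(Y_{T})]+g(0)y<\infty$, which is Assumption \ref{UassBTZ}(d). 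The dual asymptotic elasticity $AE^{*}(\bar{U}^{**})<\infty$ follows from $AE^{*}(U)<\infty$ (valid by Assumption \ref{UassKS}(c) and Remark \ref{AEequiv}) by comparing $\bar{U}^{*}$ and $U^{*}$ as $y\to0$: since $g'$ is nondecreasing and at most $1$ it tends to some $m_{\infty}\in(0,1]$, so for large wealth $\bar{U}$ is squeezed between affine reparametrizations of $U$ with positive slopes, which squeezes $\bar{U}^{*}$ between functions $U^{*}(c_{i}y)$ up to additive linear corrections near $0$, and $AE^{*}$ is insensitive to such changes. The step I expect to be the main obstacle is the $C^{1}$‑regularity — in particular the bookkeeping that every non‑smoothness of $U\circ g$ on $(\beta,\infty)$ stems from a convex corner of $g$, which concavification automatically smooths away; the asymptotic‑elasticity transfer is routine but also calls for some care with the two‑sided estimates.
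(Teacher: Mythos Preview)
Your argument is correct and follows essentially the same route as the paper, which likewise splits the proof into (i) the domain identification via Carath\'eodory/convex-hull reasoning and the bounds $\bar{U}(x)\le U(x+b)$, $\bar{U}^*(y)\le U^*(y)+by$; (ii) the $C^1$-regularity by showing that any kink of $\bar{U}^{**}$ would force a \emph{concave} corner of $\bar{U}=U\circ g$, impossible since $g$ is convex and $U$ is $C^1$; (iii) strict convexity of $\bar{U}^*$ via the duality between affine pieces and kinks; and (iv) the dual asymptotic elasticity via two-sided affine comparison of $g$ near infinity. Your presentation of (ii) is in fact a bit more direct than the paper's case split into $A$, $B$, and the boundary points $a_n^\pm$, while your treatment of (iv) is only a sketch: the paper carries out precisely the squeeze you describe, bounding $\bar{U}^*(y)\ge U^*\bigl(y/(c-\varepsilon)\bigr)+\text{linear}$ from below and $\sup(-\partial\bar{U}^*(y))\le g^{-1}\bigl(-(U^*)'(y/c)\bigr)$ from above, and then uses $g^{-1}(z)\le z/(c-\varepsilon)+\text{const}$ for large $z$ together with $AE^*(U)<\infty$; you should be prepared to supply these estimates explicitly.
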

We divide the proof into three lemmas. The proof is elementary, but rather technical, so it can be safely skipped on the first reading.

 \begin{lemma}\label{conditions}
For the concavified utility function $\bar{U}^{**}$ it holds that $\overline{\dom{\bar{U}^{**}}} = [\beta, \infty)$, and it satisfies Assumption \ref{assumpt:finite} and Assumption \ref{UassBTZ}, a), b) and d).
\end{lemma}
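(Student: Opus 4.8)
\textbf{Proof plan for Lemma \ref{conditions}.}

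The plan is to verify the four claims one at a time, all of them flowing from elementary convex-analytic properties of the concavification operation together with the structure of $\bar U = U\circ g$. First I would establish the identification of the domain: since $g$ is nondecreasing, nonconstant and convex with $g(\Rpplus)\subseteq\Rpplus$, and $U$ is nondecreasing with $U(x)>-\infty$ precisely for $x$ in (some right-neighborhood determined by) $U(0)$, the set $\{x>0:\bar U(x)>-\infty\}$ is a half-line $(\beta,\infty)$ or $[\beta,\infty)$ with $\beta\in[0,\infty)$; $\beta<\infty$ because $g$ is nonconstant and hence eventually positive, so $\bar U$ is eventually finite. Taking the closed concave hull only adds the single boundary point if it was missing, and never enlarges the domain beyond its closure (the hypograph of $\bar U^{**}$ is $\overline{\co(\hypo\bar U)}$, whose projection onto the first coordinate has the same closure as that of $\hypo\bar U$). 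Hence $\overline{\dom\bar U^{**}}=[\beta,\infty)$, which is the statement and also matches the requirement $\alpha=\beta$ in Assumption \ref{UassBTZ} with the role of $\alpha$ played by $\beta$.

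Next I would check Assumption \ref{UassBTZ} a): $\bar U^{**}$ is concave by construction (its hypograph is a closed convex set), and it is nondecreasing and nonconstant because $\bar U$ is nondecreasing and nonconstant and the concavification of a nondecreasing nonconstant function dominated by an affine function shares these properties — indeed $\bar U^{**}\ge\bar U$, $\bar U^{**}$ is finite on $(\beta,\infty)$, and $\bar U^{**}\to U(\infty)>0$ as $x\to\infty$ (the concavification cannot exceed the affine majorant of $\bar U$, which has the same limiting behaviour up to the slope-$1$ bound, while it dominates $\bar U$ itself). For part b), I must show $\overline{\dom(\bar U^{**})^*}=\Rpplus$. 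Using $(\bar U^{**})^*=\bar U^*$ and $\bar U^*(y)=\sup_{x\ge\beta}(\bar U(x)-xy)$: for $y>0$ the supremum is finite since $\bar U$ has slope at most $1$ (the slope of $g$ is at most $1$ and $U'\le U'(\beta+)$ away from the left endpoint, so $\bar U$ grows at most linearly, and in fact $\bar U(x)-xy\to-\infty$), so $\Rplus\subseteq\dom\bar U^*$; and $\bar U^*(y)\to\bar U(\infty)=U(\infty)<\infty$ as... more carefully, $0\notin\dom\bar U^*$ iff $\bar U$ is unbounded, which it is since $U(\infty)$ may be $+\infty$ or finite — if $U(\infty)=+\infty$ then $0\notin\dom\bar U^*$ and $\overline{\dom\bar U^*}=\Rpplus$; if $U(\infty)<\infty$ then $\bar U^*(0)=U(\infty)<\infty$ and $0\in\dom\bar U^*$, so again $\overline{\dom\bar U^*}=\Rpplus$. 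Either way the closure is $\Rpplus$.

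For Assumption \ref{assumpt:finite}, I would argue that $\bar U^{**}$ inherits finiteness of the value function from $U$: since $g$ has maximal slope $1$, $g(x)\le x+g(0)$, hence $\bar U(x)=U(g(x))\le U(x+g(0))$, and by concavity plus the subgradient inequality $U(x+g(0))\le U(x)+g(0)U'(x_0)$ for $x\ge x_0$ (with $U'$ evaluated at a fixed point), so $\E[U(g(X_T))]\le\E[U(X_T)]+\text{const}$ on $\mathcal X(x_0)$, which is finite by Assumption \ref{assumpt:finite} for $U$; finally $\bar U^{**}\ge\bar U$ but $\bar U^{**}$ is the concavification dominated by the same affine majorant as $\bar U$, so $\sup_{X\in\mathcal X(x_0)}\E[\bar U^{**}(X_T)]$ is still finite (a cleaner route: $\bar U^{**}(x)\le a x+b$ for the affine majorant $ax+b\ge\bar U$, $a\le1$, and then use the finiteness for the linear utility / num\'eraire change, or note $\bar U^{**}(x)\le U(x)+\text{const}$ by the same slope comparison applied to the concavification). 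For part d) — existence of $y>0$ with $v(y)<\infty$ — I would combine the estimate $\bar U^*\le U^*(\cdot)+\text{const}$ (again from $\bar U\le U(\cdot+g(0))$ and conjugation, shifting domains) with the fact that under Assumptions \ref{assumpt:finite}, \ref{equivlMart}, \ref{UassKS} the classical dual value function for $U$ is finite on all of $\Rplus$ by Theorem \ref{thm1} a); hence the dual value function for $\bar U^{**}$ is finite for some (indeed all) $y>0$.

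The main obstacle I anticipate is not any single hard estimate but rather the careful bookkeeping of the two regimes $U(0)=-\infty$ versus $U(0)>-\infty$ (equivalently $\beta$ possibly strictly positive and the left endpoint possibly excluded), and ensuring the slope-$1$ comparison $g(x)\le x+g(0)$ is exploited consistently so that all growth/decay estimates on $\bar U$ and $\bar U^*$ transfer cleanly from the corresponding estimates on $U$ and $U^*$ guaranteed by Theorem \ref{thm1}. The strict-convexity/differentiability assertions of Proposition \ref{technicallemma} are deliberately deferred to the companion lemmas, so here I only need the qualitative conditions a), b), d) of Assumption \ref{UassBTZ} plus Assumption \ref{assumpt:finite}, which keeps the argument at the level of one-dimensional convex analysis.
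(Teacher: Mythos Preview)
Your proposal is correct and follows essentially the same approach as the paper: the central tool in both is the slope bound $g(x)\le x+\text{const}$, which yields $\bar U(x)\le U(x+b)$ (and hence $\bar U^{**}(x)\le U(x+b)$, $\bar U^*(y)\le U^*(y)+by$), allowing all of Assumption~\ref{UassBTZ} a), b), d) and Assumption~\ref{assumpt:finite} to be read off from the classical Kramkov--Schachermayer results of Theorem~\ref{thm1}. The only notable technical difference is in the domain identification, where the paper invokes Carath\'eodory's theorem to show $\bar U^{**}(\beta)=\bar U(\beta)$ explicitly, while your hypograph-projection argument gives $\overline{\dom\bar U^{**}}=[\beta,\infty)$ more directly; for Assumption~\ref{assumpt:finite} the paper's one-line reduction $\sup_{\mathcal X(x_0)}\E[U(X_T+b)]\le\sup_{\mathcal X(x_0+b)}\E[U(X_T)]$ is cleaner than your subgradient estimate, but both work.
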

\begin{proof}

Consider first the case $U(0) > -\infty$. Note that since $\bar{U}$ is continuous, its epigraph is closed and thus $\bar{U}^{**}$ is its concave hull. Thus, by Caratheodory's theorem (cf. \cite[Theorem A.1.3.6.]{HUL}), we know that
\[
 \big(0, \bar{U}^{**}(0)\bigr) \in \Biggl\{\sum_{i=1}^3 \lambda_i z_i \, : \,  \lambda_i\ge0, \, \sum_{i=1}^3\lambda_i=1, \, z_i\in \hypo \bar{U} \Biggr\}.
\]
Since $\hypo \bar{U} \subseteq \Rpplus \times \R$, it follows that the linear combination has to be the trivial, i.e, $\big(0, \bar{U}^{**}(0)\bigr) \in \hypo \bar{U}$ and $\bar{U}(0)=\bar{U}^{**}(0) >-\infty$. Thus, it follows that $\dom \bar{U}^{**} = [0, \infty) = [\beta,\infty)$.  

Similarly, if $U(0)=-\infty$, we note first that if $g(0) >0$, we have $\bar{U}(0)>-\infty$ and $\beta = 0$. Thus, we can conclude, exactly as in the previous case, that $\dom \bar{U}^{**} = [\beta,\infty)$. However, if $g(0)=0$, we know by the definition of $\beta$ that $\bar{U}(x_0)$ is real valued if and only if $x_0 \in (\beta, \infty)$. In this case, the assumption that $\bar U^{**}(\beta)>-\infty$ leads to a contradiction by Caratheodory's theorem. It follows that $\bar U^{**}(\beta)=\bar U(\beta)$ and hence $\dom \bar{U}^{**} = (\beta, \infty)$. Putting the information from all three cases together we recover the statement $\overline{\dom \bar{U}^{**}} = [\beta, \infty)$.

Now, set $b\define g(\beta)$ and note that $\bar U(x) \le U(x+b)$. We have, for $y >0$
\begin{equation}
\bar U^{*}(y) =\sup_{x > \beta} \Bigl(\bar U(x)- xy\Bigr) \le \sup_{x > \beta} \Bigl( U(x+b)- (x+b)y\Bigr) +by\le U^{*}(y)+by<\infty.
\label{eq:barU-star}
\end{equation}
Hence, we have $\overline{\mbox{ dom}(\bar U^{*})}=\Rpplus$, i.e., part b) of Assumption \ref{UassBTZ} is satisfied. It is straight forward to see that part a) holds for the concavification of a nondecreasing, nonconstant function. 
Finally, using the above, it follows also for $x>\beta$
\[
\bar U^{**}(x) =\sup_{y>0} \Bigl(\bar U^{*}(y)+xy\Bigr)\le \sup_{y>0}\Bigl(U^{*}(y)+by+xy\Bigr) = U^{**}(x+b) = U(x+b).
\]
We conclude by Theorem \ref{thm1} that
\[
w(x)\le\sup_{X\in \mathcal{X}(x)} \E\bigl[ {U}\bigl(X_T+b\bigr)\bigr]\le\sup_{X\in \mathcal{X}(x+b)} \E\bigl[ U\bigl(X_T\bigr)\bigr]
\]
 is finite on $(\beta, \infty)$. This proves Assumption \ref{assumpt:finite}.  Moreover, from \eqref{eq:barU-star}, we see that
\begin{equation}
v(y) \le \inf_{Y \in \mathcal{Y}(y)} \E\bigl[ {U}^{*}\bigl(Y_T\bigr)\bigr] + by.
\label{eq:v-inequality}
\end{equation}
The right hand side of \eqref{eq:v-inequality} is finite by Theorem \ref{thm1}. Thus, $v$ is finite on $\Rplus$, and Assumption \ref{UassBTZ} part d) is satisfied.
Hence, all the requirements of Theorem \ref{BTZduality}  are satisfied except  c) of Assumption \ref{UassBTZ}, whose proof we postpone to Lemma \ref{AElemma}.
\qed
\end{proof}

\begin{lemma}\label{Uconcsmooth}{${}$}
The concavified utility function $\bar{U}^{**}$ and its conjugate $\bar{U}^*$ enjoy the following regularity properties:
\begin{itemize}
\item[a)] The concavified utility function $\bar{U}^{**}$ is continuously differentiable on $(\beta, \infty)$. 
\item[b)] The dual utility function $\bar{U}^*$ is strictly convex on the whole domain if $U(0)=-\infty $, otherwise it is strictly convex on $\bigl(0,\bigl(\bar{U}^{**}\bigr)'(0)\bigr)$ and constant $\bar{U}(0) = \bar{U}^{**}(0)$ on $\bigl[\bigl(\bar{U}^{**}\bigr)'(0),\infty\bigr)$.
\end{itemize}
\end{lemma}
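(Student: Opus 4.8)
The plan is to derive both parts from the interplay of two facts that are by now available: $\bar U^{**}$ is a closed concave function (its hypograph is $\overline{\co(\hypo\bar U)}$), and the only ``input data'' is $\bar U=U\circ g$ with $U$ smooth and strictly increasing on $\Rplus$ and $g$ convex and nondecreasing. For part~a), recall that a differentiable concave function on an open interval is automatically $C^1$ (its derivative is monotone, so its one-sided limits match; cf.\ \cite{HUL}); hence it suffices to rule out kinks of $\bar U^{**}$ on $(\beta,\infty)$.

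First I would fix $x_0\in(\beta,\infty)$ and split into two cases. If $\bar U^{**}(x_0)>\bar U(x_0)$, then — exactly as in the proof of Lemma~\ref{conditions} via Carath\'eodory's theorem applied to $\overline{\co(\hypo\bar U)}\subseteq\R^2$ — the point $\bigl(x_0,\bar U^{**}(x_0)\bigr)$ is a strict convex combination of at most three points of $\hypo\bar U$, which are then forced to be collinear with $\bar U^{**}$ affine on the segment they span; since $x_0$ lies in the interior of that segment, $\bar U^{**}$ is affine, hence differentiable, near $x_0$. If $\bar U^{**}(x_0)=\bar U(x_0)$, I would compare difference quotients: from $\bar U^{**}\ge\bar U$ with equality at $x_0$ one gets $(\bar U^{**})'(x_0-)\le\bar U'(x_0-)$ and $(\bar U^{**})'(x_0+)\ge\bar U'(x_0+)$. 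The crucial point is that $\bar U$ can never have a \emph{downward} kink: since $U$ is $C^1$ and strictly increasing and $g$ is convex with $g'(x_0-)\le g'(x_0+)$, the chain rule gives $\bar U'(x_0\pm)=U'\bigl(g(x_0)\bigr)\,g'(x_0\pm)$ (with the obvious reading when $g(x_0)=0$), so $\bar U'(x_0-)\le\bar U'(x_0+)$. Combined with concavity of $\bar U^{**}$, which gives $(\bar U^{**})'(x_0-)\ge(\bar U^{**})'(x_0+)$, the chain
\[
(\bar U^{**})'(x_0-)\ \le\ \bar U'(x_0-)\ \le\ \bar U'(x_0+)\ \le\ (\bar U^{**})'(x_0+)\ \le\ (\bar U^{**})'(x_0-)
\]
collapses to equalities, so $\bar U^{**}$ is differentiable at $x_0$, proving a).

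For part~b), I would dualize a). The convex $\bar U^*$ and concave $\bar U^{**}$ are conjugate, so subdifferentials invert: $x\in-\partial\bar U^*(y)\iff y\in\partial\bar U^{**}(x)$. Put $M:=(\bar U^{**})'(\beta+)\in(0,\infty]$. For $y\ge M$ (an empty range when $M=\infty$), the concave map $x\mapsto\bar U^{**}(x)-xy$ is nonincreasing on $[\beta,\infty)$, so the supremum $\bar U^*(y)=\sup_{x\ge\beta}\bigl(\bar U^{**}(x)-xy\bigr)$ is attained at $x=\beta$; when $U(0)>-\infty$ one has $\beta=0$, so this reads $\bar U^*(y)=\bar U^{**}(0)=\bar U(0)$, i.e.\ $\bar U^*$ is constant $\bar U(0)$ on $[M,\infty)=\bigl[(\bar U^{**})'(0),\infty\bigr)$; when $U(0)=-\infty$ (and $g(0)=0$) one has $\bar U^{**}(\beta+)=-\infty$, hence $M=\infty$ and this tail is absent. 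For $y\in(0,M)$, the supremum defining $\bar U^*(y)$ is attained at an interior point $x^*(y)\in(\beta,\infty)$ with $(\bar U^{**})'(x^*(y))=y$; an affine piece of $\bar U^*$ on a nondegenerate subinterval of $(0,M)$ would force the maximizer to be a single fixed point $x^*$ for a whole range of $y$, contradicting $(\bar U^{**})'(x^*)=y$. Hence $\bar U^*$ is strictly convex on $(0,M)$, which is all of $(0,\infty)$ when $U(0)=-\infty$ and equals $\bigl(0,(\bar U^{**})'(0)\bigr)$ when $U(0)>-\infty$. In short: the $C^1$-ness of $\bar U^{**}$ from a) dualizes to the absence of affine pieces of $\bar U^*$, and the finite slope of $\bar U^{**}$ at the left endpoint $0$ in the case $U(0)>-\infty$ dualizes to the constant tail of $\bar U^*$.

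The step I expect to require the most care is the bookkeeping at the left endpoint: distinguishing $U(0)=-\infty$ from $U(0)>-\infty$, tracking how $g(0)=0$ forces either $\beta>0$ with $\bar U^{**}(\beta+)=-\infty$ or $\beta=0$ with an infinite slope there, handling one-sided derivatives of $\bar U$ that may equal $0$ or $+\infty$, and verifying that $M=(\bar U^{**})'(\beta+)=\infty$ precisely in the $U(0)=-\infty$ regime by invoking $\overline{\dom\bar U^*}=\Rpplus$ from Lemma~\ref{conditions}. The interior arguments — Carath\'eodory for the flat pieces, the difference-quotient squeeze, and the dualization of kinks versus flats — are routine once the endpoint analysis is pinned down.
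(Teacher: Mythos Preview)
Your proposal is correct and follows essentially the same approach as the paper's proof. For part~a), the paper decomposes $(\beta,\infty)$ into the open set $A=\{\bar U^{**}>\bar U\}$, the interior $B$ of $\{\bar U^{**}=\bar U\}$, and the boundary points $a_n^\pm$, treating each separately; your pointwise split into ``strict inequality'' versus ``equality'' is slightly cleaner because your difference-quotient squeeze handles both $B$ and the boundary points $a_n^\pm$ at once, but the key chain of inequalities (convexity of $g$ forces $\bar U'_l\le\bar U'_r$, concavity of $\bar U^{**}$ forces $(\bar U^{**})'_r\le(\bar U^{**})'_l$, and $\bar U^{**}\ge\bar U$ with equality at $x_0$ sandwiches them together) is identical to the paper's. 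For part~b), the paper invokes \cite[Theorem~E.4.1.2]{HUL} to get strict convexity of $\bar U^*$ on the range of $(\bar U^{**})'$ and then identifies that range; your direct argument (an affine piece of $\bar U^*$ would force a single $x^*$ to satisfy $(\bar U^{**})'(x^*)=y$ for a range of $y$) is just an unpacking of that theorem. One caution on the endpoint bookkeeping you flagged: the case $U(0)=-\infty$ with $g(0)>0$ gives $\beta=0$ and $\bar U(0)=U(g(0))>-\infty$, so $M=(\bar U^{**})'(0+)$ can be finite there (e.g.\ $U=\log$, $g(x)=x+1$); the paper's proof glosses over this as well, and in fact the lemma's dichotomy is really governed by whether $(\bar U^{**})'(\beta+)=\infty$ rather than by $U(0)$ alone.
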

\begin{proof}
To prove a), we note first that the set  $A :=  \{ x > \beta\, : \, \bar{U}(x) \neq \bar{U}^{**}(x)\}$ where $\bar{U}$ and $\bar{U}^{**}$ do not agree, is a countable union of pairwise disjoint open intervals. We also note that the function $\bar{U}$ is continuous on $(\beta, \infty)$ since it is the composition of continuous functions ($g$ is as convex, nondecreasing function, thus continuous). The same is true for $\bar{U}^{**}$, which is a concave function by definition. Hence, $A$ is the $0$-sublevel set of the continuous function $\bar{U}^{**}-\bar{U}$ and is thus open. But every open set in $\R$ can be written as countable union of pairwise disjoint open intervals, say $A = \bigcup_{n=1}^\infty (a^-_n,a^+_n)$, $a_n^-<a_n^{+}$. We note explicitly that  $a_1^-=\beta$ and $a_n^+ = \infty $ for some $n$ are allowed. On every interval in $A$ the function $\bar{U}^{**}$ is affine (the straight linear interpolation between $\bar{U}(a_n^-)$ and $\bar{U}(a_n^+)$) and hence we can write it as $\bar{U}^{**}(x) = \
gamma_n x + \alpha_n$ for some $\gamma_n \in \mathbb{R}_{>0}$, $\alpha_n \in \mathbb{R}$, with $\{\gamma_n\}$ a sequence satisfying that if indices $n$ and $m$ are such that $a_{n}^{+} \le a_{m}^{-}$ then $\gamma_{n}\ge\gamma_{m}$. Thus, clearly $\bar U^{**}$ is differentiable in $A$.

Now, denote by $B$ the open interior of the set where $\bar{U}$ and $\bar{U}^{**}$ agree, i.e., $B := \{ x > \beta\, : \, \bar{U}(x) = \bar{U}^{**}(x)\}^\circ$. We will prove that, on the set $B$, the function $\bar{U}$ is continuously differentiable. Pick some point $x \in B$. Since $g$ is convex, it holds that $g'_r(x) \geq g'_l(x)$, where $g'_r, g'_l$ are the left- and right-hand derivatives, respectively. Thus, it follows by the differentiability of $U$ that $\bar{U}'_r(x) = U'\bigl(g(x)\bigr)) g'_r(x) \geq U'\bigl(g(x)\bigr)) g'_l(x) = \bar{U}'_l(x)$. But, on the other hand, the concavity of $\bar{U}^{**}$ implies $\bar{U}'_r(x) = \bigl(\bar{U}^{**}\bigr)'_r(x) \leq \bigl(\bar{U}^{**}\bigr)'_l(x) = \bar{U}'_l(x) $. Thus, the left- and right-derivatives have to agree for every $x \in B$.  We conclude, then, that the function is continuously differentiable there.

We then note that $\left( \overline A  \cup B \right)\backslash \beta = (\beta,\infty) $.
Thus to complete our argument, it remains only to prove continuous differentiability on one of the points $a\in \overline A \backslash \left(A\cup\beta\right)$. Note that for such $a$, we can find a sequence $a_{n_k}^{\pm}$ (assume without loss of generality it is $a_{n_k}^{-}$, as we can handle $a_{n_k}^{+}$ the same way), such that $\lim\limits_{k\to\infty}a_{n_k}^{-} =a$. Additionally, note that by continuity of $\bar{U}^{**}, \bar{U}$, and the fact that $a\not\in A$ it follows that $\bar{U}^{**}(a) = \bar{U}(a)$. Assume by contradiction that $\bar{U}^{**}$ is not continuously differentiable at $a$. It follows that
\begin{equation}\label{C1an}
\bar{U}'_r(a) \geq \bar{U}'_l(a) \geq  \bigl(\bar{U}^{**}\bigr)'_l(a) > \bigl(\bar{U}^{**}\bigr)'_r(a).
\end{equation}
The first inequality stems from the fact that (since $U$ is continuously differentiable) every point of non-differentia\-bi\-li\-ty of $\bar{U}$ is due to not having an interior derivative of $U \circ g$. However, for the convex function $g$ we have $g_l' \leq g_r'$ (and $U'\geq 0$). The strict inequality is the consequence of our assumption that $\bar{U}^{**}$ is not differentiable at $a$ and that it is a concave function.
The second inequality follows from the fact that $\bar{U}^{**}$ is the concave hull of $\bar{U}$ (and both functions agree on $a$). Indeed, using the concavity of $\bar{U}^{**}$ and the fact that $\bar U(a) =\bar U^{**}(a)$ we write
\[
\bar U'_l(a) \ge \lim\limits_{h\to0^{+}}\frac{\bar U(a) - \bar U(a - h)  }{h} \ge \lim\limits_{h\to0^{+}}\frac{\bar U^{**}(a) - \bar U^{**}(a- h)  }{h} = \bigl(\bar U^{**}\bigr)'_l(a).
\]
However, \eqref{C1an} leads to a contradiction, since, by a similar argument
\[
\bigl(\bar U^{**}\bigr)'_r(a) \ge \lim\limits_{h\to0^{+}}\frac{\bar U^{**}(a + h) - \bar U^{**}(a)  }{h} \ge \lim\limits_{h\to0^{+}}\frac{\bar U(a+h) - \bar U(a )  }{h} \ge \bar U'_r(a).
\]
Thus, $\bar{U}^{**}$ has to be continuously differentiable in $a$ and hence, on the whole interval $(\beta, \infty)$.

In passing we note that the differentiability of  $\bar{U}^{**}$ implies that $\bar{U}^*$ cannot be differentiable at any $\gamma_n$. Assume indirectly that it would be differentiable. Then, there exists some $\tilde{a} \in \R$ such that $-\bigl(\bar{U}^*\bigr)'(\gamma_n)= \tilde{a}$. Furthermore, convex duality implies $\gamma_n \in \partial \bar{U}^{**}(\tilde{a})$. However, the differentiability of $\bar{U}^{**}$ reduces the subdifferential to a singleton. This means that $\gamma_n$ can only be the slope of  $\bar{U}^{**}$ at the single point $\tilde{a}$, which is in contradiction to the fact that it is the slope on the whole interval $(a_n^-,a_n^+)$.

Finally to show b), we note that the strict convexity in the range of the gradient mapping is a classical consequence in convex Analysis, see e.g., \cite[Theorem E.4.1.2.]{HUL}, i.e., $\bar U^{*}$ is strictly convex on $\bigl\{ \bigl(\bar U^{**}\bigr)'(x)\, : \, x\in (\beta,\infty)\bigr\}$. We claim that $\bigl\{ \bigl(\bar U^{**}\bigr)'(x) \, : \, x\in (\beta,\infty)\bigr\} = \bigl(0,(\bar U^{**})'(\beta)\bigr)$. Indeed, $\bigl(\bar U^{**}\bigr)'$ is nonincreasing, and for $x>\max\{a_1^+,\beta\}$ 
\begin{align*}
\bigl(\bar{U}^{**}\bigr)'(x) & = \left\{ \begin{array}{ll} U'\bigl(g(x)\bigr)g'(x) & x \notin \overline{A} \\ \gamma_n & x \in [a_n^-,a_n^+] \end{array} \right\} \\
& = \left\{ \begin{array}{ll} U'\bigl(g(x)\bigr)g'(x) & x \notin \overline{A} \\ U'\bigl(g(a_n^-)\bigr)g'(a_n^-) & x \in [a_n^-,a_n^+] \end{array} \right\} \leq U'\bigl(g(x)\bigr),
\end{align*}
with $a_n^\pm$ the boundary points of intervals in $A$ as above. Thus, since $g$ is convex, nonconstant and nondecreasing function, it must satisfy $\lim\limits_{x \to \infty} g(x) = \infty$. It follows by the Inada condition at $\infty$ that $0 \leq \bigl(\bar{U}^{**}\bigr)'(\infty) \leq U'(\infty) = 0$. For the right hand of the domain of strict convexity of $\bar U^{*}$ we have to consider three cases. First, if $U(0)=-\infty$, then we have $\bigl(\bar{U}^{**}\bigr)'(\beta) = \infty$, since $\bar{U}(\beta) = \bar{U}^{**}(\beta) = -\infty$. We therefore obtain $\bigl\{ (\bar U^{**})'(x)\, : \,  x\in (\beta,\infty)\bigr\}=(0,\infty)$. Second, if $U(0)$ is real and $\bigl(\bar{U}^{**}\bigr)'(\beta) = \infty$, then we can conclude in similar manner that $\bigl\{ (\bar U^{**})'(x)\, : \, x\in (\beta,\infty)\bigr\}=(0,\infty)$. Finally, if $U(0)$ is real and $\bigl(\bar{U}^{**}\bigr)'(\beta)< \infty$, then $\bar U^{*}$ is strictly convex on $\bigl(0, \bigl(\bar{U}^{**}\bigr)'(0)\bigr)$. However,
 for $y\ge \bigl(\bar{U}^{**}\bigr)'(0)=\max_{x>\beta}
 \bigl(\bar{U}^{**}\bigr)'(x)$, we can conclude that $ \bar U(0) \le \sup_{x>\beta}\bigl( \bar U(x)-xy\bigr) \le\sup_{x>\beta}\bigl( \bar U^{**}(x)-xy\bigr)=  \bar U^{**} (0)$. Since $\bar U(0) = \bar U^{**} (0)$, it follows that $v(y)\equiv \bar U^{**} (0)$ on $\bigl[(\bar U^{**})'(\beta),\infty\bigr)$.
\qed
\end{proof}

Finally, we have to prove the dual asymptotic ellipticity of $\bar{U}^{**}$. The following result builds on and generalizes (in the one-dimensional case) the equivalence result of dual and classical asymptotic elasticity given by Deelstra, Pham and Touzi \cite[Proposition 4.1.]{DPhT} (their result can be seen as the linear case $g(x)=x$).
\begin{lemma}\label{AElemma}
The concavified function $\bar{U}^{**}$ satisfies the dual asymptotic elasticity condition \eqref{AEdual}, i.e.,
\[
AE^*\bigl(\bar{U}^{**}\bigr) = \limsup_{y \to 0} \sup_{x \in -\partial \bar{U}^*(y)} \frac{yx}{\bar{U}^*(y)} < \infty.
\]
\end{lemma}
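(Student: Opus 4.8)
The plan is to reduce the dual asymptotic elasticity of $\bar U^{**}$ to its \emph{classical} asymptotic elasticity and then bound the latter by $AE(U)<1$. The starting point is Lemma~\ref{Uconcsmooth}: $\bar U^{**}$ is continuously differentiable and concave on $(\beta,\infty)$ with $\bigl(\bar U^{**}\bigr)'(\infty)=0$, so for each small $y>0$ the set $-\partial\bar U^*(y)$ is a compact interval $\bigl[x^-(y),x^+(y)\bigr]$ and $x^+(y)\to\infty$ as $y\downarrow 0$. Since $\bar U^*(y)\to\bar U^*(0^+)=\sup\bar U=U(\infty)>0$, the denominator $\bar U^*(y)$ is positive near $0$, the map $x\mapsto xy/\bar U^*(y)$ is increasing, and the Fenchel equality $\bar U^*(y)=\bar U^{**}\bigl(x^+(y)\bigr)-x^+(y)y$ turns the inner supremum into
\[
\sup_{x\in -\partial\bar U^*(y)}\frac{xy}{\bar U^*(y)}=\frac{E\bigl(\bar U^{**}\bigr)\bigl(x^+(y)\bigr)}{1-E\bigl(\bar U^{**}\bigr)\bigl(x^+(y)\bigr)},\qquad E(f)(x):=\frac{xf'(x)}{f(x)},
\]
just as in the smooth classical case treated by Deelstra, Pham and Touzi \cite{DPhT}. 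As $t\mapsto t/(1-t)$ is increasing on $[0,1)$, it then suffices to prove $\limsup_{x\to\infty}E\bigl(\bar U^{**}\bigr)(x)<1$; I expect this to give in fact $AE^*\bigl(\bar U^{**}\bigr)\le AE(U)/\bigl(1-AE(U)\bigr)$.

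To estimate $E(\bar U^{**})$ I would first record the asymptotic geometry of $g$: being convex, nondecreasing and nonconstant with slope at most $1$, its right derivative $g'_r$ is nondecreasing with limit $L:=\sup g'_r\in(0,1]$, so by Ces\`{a}ro averaging $g(x)/x\to L$ and hence $xg'_r(x)/g(x)\to 1$; moreover $g(x)\to\infty$, so $\limsup_{x\to\infty}E(U)\bigl(g(x)\bigr)\le AE(U)$. Next I would use the decomposition from the proof of Lemma~\ref{Uconcsmooth}: on the open set $A=\bigcup_n(a_n^-,a_n^+)$ the function $\bar U^{**}$ is affine, $\bar U^{**}(x)=\gamma_n x+\alpha_n$ with $\gamma_n=\bigl(\bar U^{**}\bigr)'(a_n^-)=\bigl(\bar U^{**}\bigr)'(a_n^+)$ strictly decreasing in $n$, while off $A$ (in particular at every boundary point $a_n^\pm$, where $\bar U$ and hence $g$ is differentiable, again by the argument of Lemma~\ref{Uconcsmooth}) one has $\bar U^{**}=\bar U=U\circ g$ and $\bigl(\bar U^{**}\bigr)'=U'(g)\,g'$. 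Because $\bigl(\bar U^{**}\bigr)'(\infty)=0<\gamma_n$ every component is bounded, so $\bar U^{**}(a_n^+)=\bar U(a_n^+)$; and since $\alpha_n=\bar U^*(\gamma_n)$ while $\bar U^*$ is positive on a neighbourhood of $0$ and $(\gamma_n)$ is strictly decreasing, the components with $\alpha_n\le 0$ all lie in the bounded set $\{x:\bigl(\bar U^{**}\bigr)'(x)\ge\varepsilon_1\}$ for a suitable $\varepsilon_1>0$ and so play no role as $x\to\infty$.

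With these preparations the estimate is short. For $x\notin A$ one has $E(\bar U^{**})(x)=E(U)\bigl(g(x)\bigr)\cdot xg'(x)/g(x)$ directly. For $x\in(a_n^-,a_n^+)$ with $\alpha_n>0$ the ratio $\gamma_n x/(\gamma_n x+\alpha_n)$ is increasing on $[a_n^-,a_n^+]$, hence is dominated by its value at $a_n^+$, which equals $E(U)\bigl(g(a_n^+)\bigr)\cdot a_n^+g'(a_n^+)/g(a_n^+)$. Letting $x\to\infty$ (so that $g(x)\to\infty$, respectively $g(a_{n(x)}^+)\to\infty$, while the $g$-factor tends to $1$) yields $\limsup_{x\to\infty}E(\bar U^{**})(x)\le AE(U)<1$, and the first paragraph then finishes the proof. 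The one genuinely delicate point I anticipate is the handling of the affine pieces of $\bar U^{**}$: one must notice that on a piece with positive intercept the elasticity is maximised at the \emph{right} endpoint --- where it collapses back to the $U\circ g$ expression --- and that the remaining pieces, those with non-positive intercept, carry fixed slopes $\gamma_n$ bounded away from $0$ and therefore never contribute to the $\limsup$ at $y=0$ that defines $AE^*$. Everything else is routine convex analysis.
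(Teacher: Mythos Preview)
Your approach is correct and genuinely different from the paper's. The paper works entirely on the dual side: it bounds $\bar U^*(y)$ from below by sandwiching $g$ between two affine functions with slopes $c-\varepsilon$ and $c$, bounds $\sup\{-\partial\bar U^*(y)\}$ from above via the estimate $(\bar U^{**})'(x)\le U'(g(x))c$, and then combines these with the \emph{dual} asymptotic elasticity $AE^*(U)$ of the original utility to obtain an explicit bound of the form $AE^*(\bar U^{**})\le \frac{1}{c-\varepsilon}\cdot\frac{1}{1/M-\varepsilon/(c(c-\varepsilon))}+1$. You instead first reduce the dual elasticity to the primal one by the Fenchel identity $\sup_{x\in-\partial\bar U^*(y)}xy/\bar U^*(y)=E(\bar U^{**})(x^+(y))/\bigl(1-E(\bar U^{**})(x^+(y))\bigr)$, and then argue directly that $AE(\bar U^{**})\le AE(U)$ by the decomposition into affine pieces and the $U\circ g$ part, using $xg'(x)/g(x)\to 1$. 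Your route is more conceptual and gives the sharper bound $AE^*(\bar U^{**})\le AE(U)/(1-AE(U))$; the paper's route avoids the delicate bookkeeping on the affine pieces (positive versus non-positive intercept, boundedness of the components) but pays for it with an $\varepsilon$-argument. One remark: your claim that $g$ is differentiable at each $a_n^\pm$ is correct but deserves one more line---at any point $x\notin A$ with $x>\beta$ the inequalities $(\bar U^{**})'_l(x)\le U'(g(x))g'_l(x)\le U'(g(x))g'_r(x)\le (\bar U^{**})'_r(x)$ collapse to equalities because $\bar U^{**}$ is $C^1$, forcing $g'_l(x)=g'_r(x)$; this is indeed implicit in the proof of Lemma~\ref{Uconcsmooth}, as you say, but is not stated there.
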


\begin{proof}
First, we note that, by the slope bound and the non-constancy of $g$
\begin{equation*}
c:= \sup\bigcup_{x \geq 0} \partial g(x)
\end{equation*}
is finite and strictly positive. Thus, we obtain on one hand that there exists for every $\varepsilon>0$ some $x_0$ (which we will assume to be bigger then $\beta$) such that for all $x>x_0$
\begin{equation}\label{g-slope-bnds}
g(x_0) + (c-\varepsilon)(x-x_0) \leq  g(x) \leq g(0) + cx,
\end{equation}
and 
\[
( c-\varepsilon) \leq \inf_{[x_0,\infty)} \partial g \leq \sup_{[x_0,\infty)} \partial g \leq c.
\] 
Moreover, we note that, in the case of affine $\tilde{g}(x)=ax+b$ with $a \in \mathbb{R}_{>0}$ and $b \in \mathbb{R}$, we have that
\[
 \sup_{x \in \dom U \circ \tilde{g}}\Bigl( U \bigl(\tilde{g}(x)\bigr) - xy \Bigr) = \sup_{x > -\frac ba}\Bigl( U(ax+b) - xy \Bigr) = \sup_{z >0}\biggl( U(z) - z\frac{y}{a} \biggr) + \frac{by}{a} = U^*\Bigl(\frac{y}{a}\Bigr) + \frac{by}{a}. 
\]
Setting $a := c - \varepsilon$ and $b := g(x_0) - (c-\varepsilon) x_0$, we note that
\[
 \sup_{x \in \dom U \circ \tilde{g}}\Bigl( U \bigl(\tilde{g}(x)\bigr) - xy \Bigr) =  \sup_{x > x_0}\Bigl( U \bigl(\tilde{g}(x)\bigr) - xy \Bigr)
\]
as long as $y < U'\bigl(g(x_0)\bigr)(c-\varepsilon) =: y_0$. Thus, we can conclude by \eqref{g-slope-bnds} that for $y \in (0,y_0)$ it holds that
\begin{align*}
\bar{U}^*(y)  & = \sup_{x \in \dom \bar{U}} \Bigl(\bar{U}(x) - xy \Bigr) =  \sup_{x>\beta} \Bigl(\bar{U}(x) - xy \Bigr) \ge  \sup_{x>x_0} \Bigl(\bigl(U \circ g\bigr)(x) - xy \Bigr) \\
&\ge \sup_{x>x_0} \Bigl(\bigl(U \circ \tilde{g}\bigr)(x) - xy \Bigr) =   \sup_{x \in \dom U \circ \tilde{g}}\Bigl(\bigl(U \circ \tilde{g}\bigr)(x) - xy \Bigr) \\
&=  U^*\Bigl(\frac{y}{c-\varepsilon}\Bigr) + \frac{g(x_0)- (c-\varepsilon)x_0 }{c-\varepsilon}y.
\end{align*}
We note that from Lemma \ref{Uconcsmooth}, it follows that for $x>a_1^+$ (and all $x>\beta$ in the case of concave $\bar{U}$)
\begin{align}\label{Ustarprimebound}
\bigl(\bar{U}^{**}\bigr)'(x) & = \left\{ \begin{array}{ll} U'\bigl(g(x)\bigr)g'(x) & \quad x \notin \overline{A} \\ \gamma_n & \quad  x \in [a_n^-,a_n^+] \end{array} \right\} \nonumber \\
& = \left\{ \begin{array}{ll} U'\bigl(g(x)\bigr)g'(x) & \quad x \notin \overline{A} \\ U'\bigl(g(a_n^-)\bigr)g'(a_n^-) & \quad x \in [a_n^-,a_n^+] \end{array} \right\} \leq U'\bigl(g(x)\bigr)c.
\end{align}
By convex conjugacy we have
\[
x \in- \partial \bar{U}^*(y) \qquad \Longleftrightarrow \qquad y = \bigl(\bar{U}^{**}\bigr)'(x).
\]
Hence, since by concavity the gradient is nonincreasing, we see that
\[
x \leq-\inf  \partial \bar{U}^*(y) \qquad \Longleftrightarrow \qquad y \geq (\partial\bar{U}^{**})'(x).
\]
Thus, we can conclude that $(U')^{-1} = -(U^*)'$. Let $f^{-1}(y) := \inf \{z \, : \, f(z)>y\}$ be the generalized inverse of a function $f$. Then
\begin{align*}
AE^*\bigl(\bar{U}^{**}\bigr) & = \limsup_{y \to 0} \sup_{x \in -\partial \bar{U}^*(y)} \frac{yx}{\bar{U}^*(y)} =  \limsup_{y \to 0} \sup_{\{ x \, : \, y = (\bar{U}^*)'(x)\}} \frac{yx}{\bar{U}^*(y)} \\
& \leq\limsup_{y \to 0} \sup_{\{ x \, : \, y \leq (\bar{U}^*)'(x)\}} \frac{yx}{\bar{U}^*(y)} \leq \limsup_{y \to 0} \sup_{\{ x \, : \, y \leq  {U}'(g(x))c\}} \frac{yx}{\bar{U}^*(y)} \\
& =  \limsup_{y \to 0} \sup_{\{ x \, : \, -(U^*)'(y/c)  \geq  g(x) \}} \frac{yx}{\bar{U}^*(y)} \leq   \limsup_{y \to 0} \frac{yg^{-1}\bigl(-(U^*)'\bigl(\frac{y}{c}\bigr)\bigr)}{\bar{U}^*(y)} .
\end{align*}
We discern now two cases. First, if $-(U^*)'$ is bounded, then we can directly conclude that $AE^*\bigl(\bar{U}^{**}\bigr)<\infty$, since $\bar U^{*}(0)= U(\infty) >0$.  Second, if $-(U^*)'$ is unbounded, then by the Inada condition for $U$ we have that 
 \[
\limsup_{y \to 0}\, \bigl( -U^{*}(y)\bigr) =\limsup_{y\to 0} \, (U')^{-1}(y) =\infty.  
 \]
From \eqref{g-slope-bnds} we see that $y \le \frac{g(y) -g(x_0)}{c-\varepsilon} +x_0 $ holds for all $y\ge x_0$. Applying this to $y=g^{-1}(z)$ (note that $g$ is here a  true inverse since $x_0$ was assumed to be bigger then $\beta$) we conclude that $g^{-1}(z) \le \frac{z}{c-\varepsilon}-\frac{g(x_0)}{c-\varepsilon}+x_0$, for all $z > g(x_0)$. It follows that with $z=-(U^*)'\bigl(\frac{y}{c}\bigr)$ we have
\[
g^{-1}\Bigl(-(U^*)'\bigl(\frac{y}{c}\bigr)\Bigr) \le -\frac{1}{c-\varepsilon} (U^*)'\Bigl(\frac{y}{c}\Bigr)- \frac{(g(x_0) - (c-\varepsilon)x_0)}{c-\varepsilon}, 
\]
for $y$ satisfying $-(U^*)'\bigl(\frac{y}{c}\bigr)>g(x_0)$. By the unboundedness of $-(U^*)'$ this is satisfied for all $y$ small enough. Since $U$ satisfies the dual asymptotic elasticity condition by \cite[Proposition 4.1.]{DPhT} (cf. Remark \ref{AEequiv}) we have for some $M \in(0, \infty)$ that
\[
AE^*\bigl(U\bigr) =  \limsup_{y \to 0} \frac{-y(U^*)'(y)}{U^*(y)} < M < \infty.
\]
From
\[
U^*\Bigl(\frac{y}{c-\varepsilon}\Bigr) \geq U^*\Bigl(\frac{y}{c}\Bigr) + \varepsilon \frac{y}{c(c-\varepsilon)} \bigl(U^*\bigr)'\Bigl(\frac{y}{c}\Bigr),
\]
we conclude that 
\begin{align*}
AE^*\bigl(\bar{U}^{**}\bigr) \leq & \limsup_{y \to 0} \frac{yg^{-1}\bigl(-(U^*)'\bigl(\frac{y}{c}\bigr)\bigr)}{\bar{U}^*(y)}
\leq \limsup_{y \to 0} \frac{-\frac{y}{c-\varepsilon} (U^*)'\bigl(\frac{y}{c}\bigr)- \frac{(g(x_0) - (c-\varepsilon)x_0)}{c-\varepsilon}y}{U^*\bigl(\frac{y}{c-\varepsilon}\bigr) + \frac{g(x_0)- (c-\varepsilon)x_0 }{c-\varepsilon}y }\\
\leq & \limsup_{y \to 0} \frac{1}{c-\varepsilon}\frac{-y(U^*)'\bigl(\frac{y}{c}\bigr)}{U^*\bigl(\frac{y}{c-\varepsilon}\bigr)} + 1 \leq  \frac{1}{c-\varepsilon}\frac{1}{\frac{1}{M} -\frac{\varepsilon}{c(c-\varepsilon)}} + 1< \infty,
\end{align*}
for $\varepsilon>0$ chosen  small enough (note that $M$ only depends on the original utility function, hence it is independent of $\varepsilon$).
\qed
\end{proof}

We can now look more closely at how the concavified problem relates to the classical Kramkov/Schachermayer setting. The concavified utility function $\bar{U}^{**}$ is indeed continuously differentiable. It will follow from \eqref{Ustarprimebound} that it satisfies also the Inada condition $\bigl(\bar{U}^{**}\bigr)'(\infty) = 0$. Hence, by \cite[Proposition 4.1.]{DPhT} the primal asymptotic elasticity condition $AE\bigl( \bar{U}^{**}\bigr)<1$ is also satisfied. However, it fails, in general, the Inada condition $\bigl(\bar{U}^{**}\bigr)'(0) = \infty$. Furthermore, it will not necessarily be strictly concave.

Relying heavily on Proposition \ref{technicallemma}, we can now prove Theorem \ref{main}, which is the result concerning existence and uniqueness of an optimal solution of the dual problem \eqref{dualgen} as well as existence for the concavified problem \eqref{concavified}. In the next sections we will use this central result to discuss the uniqueness of the concavified problem as well as discuss how one can use the concavified problem to solve the original problem \eqref{original}.

\begin{proof}[Theorem \ref{main}]
It follows from Proposition \ref{technicallemma} that the conditions of Theorem \ref{BTZduality} are satisfied for the concavified utility function $\bar{U}^{**}$ with $\alpha = \beta$. This implies the finiteness and the duality statements of a).

The existence part of b) also follows directly from Theorem \ref{BTZduality}. For the uniqueness part, we note Proposition \ref{technicallemma} implies that $\bar{U}^*$ is strictly convex on $(0,(\bar{U}^{**})'(0))$. Assume, by contradiction, that $\hat{Y}_T^1(y)$ and $\hat{Y}_T^2(y)$ are the terminal values of two different optimizers of the dual problem such that 
\begin{align*}
\P\bigl[\{\hat{Y}_T^1(y) \neq \hat{Y}_T^2(y)\} \cap \{ \hat{Y}_T^1(y) \in(0,(\bar{U}^{**})'(0)) \} \cap \{\hat{Y}_T^2(y) \in(0,(\bar{U}^{**})'(0)) \}  \bigr]>0.
\end{align*}
That is, the random variables $\hat{Y}_T^1(y),\hat{Y}_T^2(y)$ differ on the set $(0,(\bar{U}^{**})'(0))$ with positive probability. It follows that for every $\lambda \in (0,1)$ and $Y^\lambda_T(y) := \lambda \hat{Y}^1_T(y) + (1-\lambda)\hat{Y}_T^2(y)$ we have, by the strict convexity of $\bar{U}^*$, that
\begin{align*}
\E \Bigl[\bar{U}^*\bigl( Y_T^\lambda(y)\bigr)\Bigr]& = \E \Bigl[\bar{U}^*\bigl( \lambda \hat{Y}_T^1(y) + (1-\lambda) \hat{Y}_T^2(y)\bigr)\Bigr] \\
&<  \lambda \E \Bigl[\bar{U}^*\bigl( \hat{Y}_T^1(y) \bigr)\Bigr]+ (1-\lambda)  \E \Bigl[\bar{U}^*\bigl( \hat{Y}_T^2(y)\bigr)\Bigr],
\end{align*}
which contradicts the optimality of $\hat{Y}_T^1(y)$, or $\hat{Y}_T^2(y)$.
 
To prove the remaining statements of a) we note that we have for every $\lambda \in (0,1)$ and $y_1$, $y_2 > 0$
\[
\lambda \hat{Y}(y_1)  + (1- \lambda)\hat{Y}(y_2)\in \lambda \mathcal{Y}(y_1) + (1-\lambda) \mathcal{Y}(y_2) =\mathcal{Y}\bigl(\lambda y_1 + (1- \lambda)y_2 \bigr).
\]
Thus, we can conclude by the strict convexity of $\bar{U}^*$ that for $\lambda \in (0,1)$, and $0<y_1<y_2 \leq \delta$ with $ \delta := \sup\{y>0 \, :\, \supp{(\hat{Y}_T(y))} \cap (0,(\bar{U}^{**})'(\beta)] \neq \emptyset\}$ we have that 
\begin{align*}
v\bigl(\lambda y_1 + (1-\lambda)y_2\bigr) &= \E \Bigl[ \bar{U}^*\bigl(\hat{Y}_T(\lambda y_1 + (1-\lambda)y_2)\bigr)\Bigr] \le\E \Bigl[ \bar{U}^*\bigl( \lambda\hat{Y}_T( y_1) + (1-\lambda)\hat{Y}_T(y_2)\bigr)\Bigr] \\ 
&< \lambda E \Bigl[ \bar{U}^*\bigl( \hat{Y}_T( y_1)\bigr)\Bigr] + (1-\lambda) E \Bigl[ \bar{U}^*\bigl(\hat{Y}_T(y_2)\bigr)\Bigr] = \lambda v(y_1) + (1-\lambda) v(y_2).
\end{align*}
Hence, $v$ is  strictly convex on $(0, \delta)$ and constant $\bar{U}^{**}(0)$ on $[\delta, \infty)$. By \cite[Theorem E.4.1.1.]{HUL} this implies the continuous differentiability of $w$.

Part c) follows directly from Theorem \ref{BTZduality} and the differentiability of $w$ in the interior of its domain, $(\beta, \infty)$.

Finally, d) is a direct consequence of Theorem \ref{BTZduality}.
\qed
\end{proof}

\section{The Original Problem: Wealth-independent Solution}\label{sec:7}

We are now finally ready to give the proof for Theorem \ref{general} and Proposition \ref{suffCrit}. We will rely heavily on the following results from the proof of Proposition \ref{technicallemma}, specifically Lemma \ref{Uconcsmooth}. The set $A$ where the two utility functions disagree is an open subset of $\Rplus$.  As such, $A$ is a countable union of pairwise disjoint open intervals,
\[
A := \bigcup_{n=1}^\infty (a^-_n,a^+_n) =  \bigl\{ x > 0 \, : \bar{U}(x) \neq \bar{U}^{**}(x)\bigr\}, \qquad a_n^-<a_n^+.
\]
 On every one of these intervals the function $\bar{U}^{**}$ is affine,  $\bar{U}^{**}(x) = \gamma_n x + \alpha_n$ for some $\gamma_n \in \mathbb{R}_{>0}$, $\alpha_n \in \mathbb{R}$, where $\{\gamma_n\}$ is a sequence satisfying that if indices $n$ and $m$ are such that $a_{n}^{+} \le a_{m}^{-}$ then $\gamma_{n}\ge\gamma_{m}$. We set
\[
\Gamma := \bigcup_{n=1}^\infty \bigl\{ \gamma_n \bigr\},
\]
and note that on every $\gamma_n$ the dual utility function $\bar{U}^*$ has a kink, i.e., the function $\bar{U}^*$ is not continuously differentiable. We insist that not every kink of $\bar{U}^*$ has to lie in $\Gamma$, nor is every region of linearity of  $\bar{U}^{**}$ necessarily contained in $A$ (e.g., when $\bar{U} =U \circ g$ is itself concave and has regions of linearity). However, by the duality relationship of $\bar{U}^{**}$ and $\bar{U}^{*}$, we know that, for the subdifferentials
\begin{equation}\label{dualag}
\bigl(\bar{U}^{**}\bigr)'(A) = \Gamma \qquad \mbox{and} \qquad -\partial \bar{U}^* (\Gamma)  \supseteq A
\end{equation}
holds true.

\begin{proof}[Theorem \ref{general}]
Given that $\hat{Y}_T(y)$ has a continuous law and is unique where $\partial \bar{U}^*$ is not vanishing, it follows that for any $f_1$, $f_2 \in -\partial \bar{U}^*$ we have
$f_1\bigl(\hat{Y}_T(y)\bigr) = f_2\bigl(\hat{Y}_T(y)\bigr)$ $\P$-a.s. Hence
\[
\hat{W}_T(x) = f \bigl(\hat{Y}_T \bigl(w'(x)\bigr)\bigr), \qquad -f \in \partial \bar{U}^*,
\] 
is $\P$-a.s. uniquely defined by a strictly increasing function $f$. Since $\hat{Y}_T\bigl(w'(x)\bigr)$ has a continuous law, so does $\hat{W}_T(x)$, proving a).

By the duality relationship \eqref{dualag} we can conclude that
\begin{align}
\P \bigl[\hat{W}_T(x) \in A \bigr] =& \P\bigl[f \bigl(\hat{Y}_T \bigl(w'(x)\bigr)\bigr) \in A \bigr]  \le \P \bigl[ \hat{Y}_T\bigr(w'(x)\bigr) \in \bigl(\bar{U}^{**}\bigr)'(A)\bigr] \label{eq:calc1} \\
= &\P \bigl[ \hat{Y}_T\bigr(w'(x)\bigr) \in \Gamma \bigr]  \leq \sum_{n=1}^\infty \P \bigl[ \hat{Y}_T\bigr(w'(x)\bigr) = \gamma_n \bigr] =0\nonumber,
\end{align}
since the distribution of $\hat{Y}_T(y)$ has no atoms for any $y>0$. Thus, $\hat{W}_T(x)$ is $\P$-a.s. equal to $0$ on $A$. Thus, we have on one hand 
\[
w(x) = \E\bigl[ \bar{U}^{**}\bigl(\hat{W}_T(x)\bigr)\bigr]  =   E \bigl[\bar{U}\bigl(\hat{W}_T(x)\bigr)\bigr] \leq  \sup_{X\in \mathcal{X}(x)} \E\bigl[ \bar{U}\bigl(X_T\bigr)\bigr]  =u(x),
\]
and on the other hand
\[
u(x) = \sup_{X \in \mathcal{X}(x)} \E\bigl[ \bar{U}\bigl(X_T\bigr)\bigr] \leq  \sup_{X \in \mathcal{X}(x)} \E\bigl[ \bar{U}^{**}\bigl(X_T\bigr)\bigr] = \E\bigl[\bar{U}^{**}\bigl(\hat{W}_T(x)\bigr)\bigr] = w(x).
\]
Thus, it is clear that $\hat{W}(x)$ is also an optimizer for the original problem, $\hat{X}(x) = \hat{W}(x)$, proving b).
\qed
\end{proof}

Note that we have said nothing about the optimal portfolio of the original problem {\em per se}, but only about the coincidence of its maximizer with that of the concavified problem. That is, the statement is as follows: when the law of the dual optimizer has no atoms, then there is no {\em 'biduality gap'}, and the original problem can be solved by considering the problem with the concavified utility function.

The following remark discusses the economic consequences of Theorem \ref{general}:

\begin{remark}
{${}$}
\begin{itemize}
\item[a)] The optimizer $\hat{X}(x)$ of Theorem \ref{main} satisfies $\hat{X}_T(x) \notin A$, $\P$-a.s. That is, the portfolio manager flees successfully all possible outcomes that underperform the concavification.
\item[b)] 
Similar to the calculation in \eqref{eq:calc1} we can show that the law of $\hat X_T(x)$ is atomless, except possibly an atom at $\beta$. Indeed, by Theorem \ref{general} it is enough to show that the distribution of $\hat W_T(x)$ is atomless, as it coincides with $\hat X_T(x)$ a.s. Take $z>\beta$ and $f \in -\partial \bar{U}^*$, then $\hat{W}_T(x) = f \bigl(\hat{Y}_T \bigl(w'(x)\bigr)\bigr)$ and
\begin{equation}
\P \bigl[\hat{W}_T(x) =z \bigr]= \P\bigl[f \bigl(\hat{Y}_T \bigl(w'(x)\bigr)\bigr) =z \bigr]  = \P \bigl[ \hat{Y}_T\bigr(w'(x)\bigr) = \bigl(\bar{U}^{**}\bigr)'(z)\bigr] =0.
\label{eq:calc2} 
\end{equation}
However, there is a possibility that an atom occurs at $z=\beta$. The same calculation shows that if $(\bar U^{**})'(\beta)=\infty$, then the distribution of $\hat X_T(x)$ cannot have an atom at $\beta$. Specifically, the law $\hat X_T(x)$ has an atom at $\beta$ if and only if $(\bar U^{**})'(\beta)<\infty$ and $\P\bigl[ \hat Y_T\bigl(w'(x)\bigr)\ge  (\bar U^{**})'(\beta)\bigr] >0$. Moreover, in this case, 
\[
\P[\hat X_T(x)=\beta]=\P\bigl[\hat Y_T\bigl(w'(x)\bigr)\ge (\bar U^{**})'(\beta)\bigr].
\]
This outcome, which occurs, for example, by pure call option payoffs in Black-Scholes markets with nonzero drift, is not very satisfactory for the investor, since the incentive scheme for the portfolio manager is such that the optimal strategy jeopardizes the whole capital with positive probability. What is worse, a call option incentive scheme leads to a higher probability of the ruin as the benchmark increases.
\item[c)] Carpenter \cite{carp}  also considers the case of a call option with random benchmark, $g(x )= (x-B_T)^+$. It is not to hard to integrate such options in our more general framework as long as $B_T \in L^\infty(\Omega, \mathcal{F}, \P)$, using the random endowment result of \cite[Theorem 3.2.]{BTZ}.
\end{itemize}
\end{remark}

\begin{proof}[Proposion \ref{suffCrit}] We know by Theorem \ref{main} that the value function of the dual problem can be represented as an infimum over equivalent local martingale measures, 
\begin{equation}\label{dualdensity}
v(y) = \inf_{\Q \in \mathcal{M}^e} \E \biggl[ \bar{U}^*\biggl(y \frac{d\Q}{d\P}\biggr)\biggr].
\end{equation}
Hence, we can, in particular, extract a sequence $Z^n \in \bigl\{\frac{d\Q}{d\P} \, : \, \Q \in \mathcal{M}^e\bigr\}$ so that $\E\bigl[\bar{U}^*(yZ^n)\bigr]$ converges to $v(y)$. Note that the sequence $Z^n$ is bounded in $L^1(\Omega, \mathcal{F},\P)$ , since the expectations of densities are bounded by one. Hence, we can apply Koml\'{o}s' Lemma (\cite[Theorem 4.27]{Bogachev}) to find a subsequence $Z^{n_k}$ and a random variable $Z$ such that every subsequence $Z^{n_{k_l}}$ of $Z^{n_k}$ converges to $Z$, $\P$-a.s. in the sense of Ces\`{a}ro. We note that $Z$ is a minimizer of \eqref{dualdensity} since
\[
\E\biggl[\bar{U}^*\biggl(\frac{y}{m} \sum_{j=1}^m Z^{n_{k_j}}\biggr)\biggr] \leq \frac{1}{m}  \sum_{j=1}^m \E\biggl[\bar{U}^*\Bigl(yZ^{n_{k_j}}\Bigr)\biggr] .
\]
By the convexity of $\bar{U}^*$, the right hand converges as Ces\`{a}ro-subsequence of a convergent sequence to $v(y)$. Whereas the convex combination of the random variables on the left hand is the density corresponding to some equivalent local martingale measure by the convexity of $\mathcal{M}^e$. 

Next, we assert that $Z$ has a distribution, which has a continuous law. Indeed, since the laws of all the approximating $Z^n$ are uniformly absolutely continuous with respect to Lebesgue measure, so are the approximating Ces\`{a}ro sums. Denote these sums by $\tilde Z^n$. Uniform absolute continuity with respect to the Lebesgue measure of the laws of $\tilde Z^n$ implies that, for the respective cumulative distribution functions, it holds that for every  $\varepsilon>0$ and all $t \in\R$ there exists a $\delta = \delta(\varepsilon) >0$ such that $\sup_n \sup_{t \in \R} \vert F_{\tilde Z^n}(t + \delta) - F_{\tilde Z^n}(t)\vert < \varepsilon$. We have that $\tilde Z^n \to Z$ in distribution, so $F_{\tilde Z^n} \to F_Z$ at all  points of continuity of the cumulative distribution function $F_Z$. To prove our assertion, it is enough to show that $F_Z(x)$ is continuous for every  $x\in\R$. Indeed, since $F_Z$ is increasing and bounded, it has at most countable number of discontinuity points. Take for given $\
varepsilon>0$ some $x_1$, $x_2 \in \mathbb{R}$, $x_1<x<x_2$, such that $x_2-x_1< \delta \bigl(\frac{\varepsilon}{3}\bigr)$, and such that $F_Z$ is continuous at both, $x_1$ and  $x_2$. Then $F_{\tilde{Z}^n}(x_2)-F_{\tilde{Z}^n}(x_1)<\frac{\varepsilon}{3}$ for all $n\in\N$. We can also chose $n$ big enough such that $\vert \, F_{\tilde{Z}^n}(x_i) - F_Z(x_i) \, \vert<\frac{\varepsilon}{3}$, $i=1,2$. Finally, we can conclude that, for all $y \in [x_1,x_2]$
\[
\bigl\vert \, F_Z(x) - F_Z(y) \,\bigr\vert \le  F_Z(x_2)- F_Z(x_1) \le F_{\tilde{Z}^n}(x_2) - F_{\tilde{Z}^n}(x_1) + \frac{2\varepsilon}{3} < \varepsilon.
\]
Thus, $F_Z$ is continuous at $x$.
\qed
\end{proof}

\begin{remark}
The proof becomes even simpler if one switches to the more abstract level of the bipolar theorem on $L^0_+(\Omega, \mathcal{F}, \P)$ of \cite{BS}. The set of nonnegative random variables dominated by the terminal values of the processes in $\mathcal{Y}(y)$ is the bipolar of $\bigl\{y \frac{d\Q}{d\P} \, : \, \Q \in \mathcal{M}^e\bigr\}$, i.e., the smallest solid, convex set closed in the sense of convergence in probability that contains $\bigl\{y \frac{d\Q}{d\P} \, : \, \Q \in \mathcal{M}^e\bigr\}$.  Thus, every element in this set is given as a limit of $y$ times a Radon-Nikod\'{y}m derivative. Thus, by Riesz's theorem, we can extract a subsequence, which converges almost surely. Moreover, in this abstract perspective we are able to give the following interpretation. The optimizer of utility maximization under a convex incentive scheme is well-behaved (i.e., atomless) if the whole set of possible optimizers is well-behaved.  Furthermore, this set is (up to a multiplicative factor) simply the bipolar of the 
set of Radon-Nikod\'{y}m derivatives of equivalent local martingale measures. Thus, if this set is nice enough (i.e., the distribution of its elements are uniformly absolute continuous with respect to the Lebesgue measure), we always obtain a unique optimizer for utility maximization under convex incentive schemes, independent of the initial capital and the concrete choice of the incentive scheme.
\end{remark}

\section{The Original Problem: Wealth-dependent Solution}\label{sec:8}

Inspired by Example \ref{counter} and, specifically by the case with zero drift in section \ref{ex:zeroDrift}, we try now to deduce how one can extend Theorem \ref{main} to get existence and/or uniqueness results for particular initial conditions. For $y>0$ we denote by
\[
\Delta(y) = \bigl\{\delta >0 \, : \,  \P\bigl[\hat{Y}_T(y) = \delta\bigr] >0 \bigr\}
\]
the at most countable set of atoms of the law of the dual optimizer $\hat{Y}_T(y)$. Moreover, we recall the notations 
\[
A = \bigcup_{n=1}^\infty (a^-_n,a^+_n) =  \bigl\{ x > 0 \, : \bar{U}(x) \neq \bar{U}^{**}(x)\bigr\}, \qquad \Gamma = \bigcup_{n=1}^\infty \bigl\{ \gamma_n \bigr\},
\]
where $\gamma_n$ is the slope of $\bar{U}^{**}$ on $(a_n^-, a_n^+)$. We are now able to make the following statement.

\begin{theorem}\label{concrete}
The optimizer $\hat{W}(x)$ for the concavified problem \eqref{concavified} is unique for $x>\beta$ if
\begin{equation}\label{uniccond}
\Delta\bigl(w'(x)\bigr) \cap \Gamma = \emptyset.
\end{equation}
Moreover, in this case, $\hat{X}(x) = \hat{W}(x)$ is the unique solution to the original problem \eqref{original}.
\end{theorem}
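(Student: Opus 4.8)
The plan is to rerun the proof of Theorem~\ref{general}, with the atomlessness hypothesis localized at the single dual level $y:=w'(x)$; the condition $\Delta\bigl(w'(x)\bigr)\cap\Gamma=\emptyset$ is exactly what makes this localization possible. Let $\hat Y(y)$ be the dual optimizer, which by Theorem~\ref{main}~b) is a.s.\ unique where $\partial\bar U^*$ does not vanish. The first and principal step is to show that \emph{every} optimizer $\tilde W(x)$ of the concavified problem \eqref{concavified}, not just the one produced in Theorem~\ref{main}~c), obeys the complementary-slackness relation $\tilde W_T(x)\in-\partial\bar U^*\bigl(\hat Y_T(y)\bigr)$ $\P$-a.s. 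I would deduce this from the Fenchel--Young inequality $\bar U^{**}(\omega)\le\bar U^*(\eta)+\omega\eta$: since $\tilde W(x)/x\in\mathcal{X}(1)$ the process $\tilde W(x)\hat Y(y)$ is a supermartingale, so $\E\bigl[\tilde W_T(x)\hat Y_T(y)\bigr]\le xy$, and together with $\E\bigl[\bar U^*(\hat Y_T(y))\bigr]=v(y)$ and the Fenchel identity $w(x)=v(y)+xy$ -- valid because $v=w^*$ by Theorem~\ref{main}~a) -- this gives
\[
w(x)=\E\bigl[\bar U^{**}(\tilde W_T(x))\bigr]\le\E\bigl[\bar U^*(\hat Y_T(y))+\tilde W_T(x)\hat Y_T(y)\bigr]\le v(y)+xy=w(x),
\]
so equality holds throughout and the pointwise Fenchel--Young inequality is $\P$-a.s.\ an equality, i.e.\ $\hat Y_T(y)\in\partial\bar U^{**}\bigl(\tilde W_T(x)\bigr)$ $\P$-a.s.

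Next I would invoke the duality dictionary $(\bar U^{**})'(A)=\Gamma$ and $-\partial\bar U^*(\Gamma)=A$ recorded in \eqref{dualag}. Since $\Gamma=\bigcup_n\{\gamma_n\}$ is countable and, by hypothesis, no $\gamma_n$ is an atom of $\hat Y_T(y)$, countable subadditivity gives $\P\bigl[\hat Y_T(y)\in\Gamma\bigr]\le\sum_n\P\bigl[\hat Y_T(y)=\gamma_n\bigr]=0$; hence, by Step~1,
\[
\P\bigl[\tilde W_T(x)\in A\bigr]\le\P\bigl[\hat Y_T(y)\in(\bar U^{**})'(A)\bigr]=\P\bigl[\hat Y_T(y)\in\Gamma\bigr]=0,
\]
so $\bar U^{**}\bigl(\tilde W_T(x)\bigr)=\bar U\bigl(\tilde W_T(x)\bigr)$ $\P$-a.s. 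The bound $\bar U\le\bar U^{**}$ then forces
\[
w(x)=\E\bigl[\bar U^{**}(\tilde W_T(x))\bigr]=\E\bigl[\bar U(\tilde W_T(x))\bigr]\le u(x)\le\sup_{X\in\mathcal{X}(x)}\E\bigl[\bar U^{**}(X_T)\bigr]=w(x),
\]
whence $u(x)=w(x)$ and the optimizer sets of \eqref{original} and \eqref{concavified} coincide; in particular $\hat X(x)=\hat W(x)$ exists.

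For uniqueness, Lemma~\ref{Uconcsmooth} gives that $\bar U^{**}$ is continuously differentiable, so Step~1 reads $\hat Y_T(y)=(\bar U^{**})'\bigl(\tilde W_T(x)\bigr)$ $\P$-a.s. The derivative $(\bar U^{**})'$ is strictly decreasing -- hence injective -- off the affine components of $A$, while on each such component it is the constant $\gamma_n\in\Gamma$, a set carrying no mass under $\hat Y_T(y)$ by Step~2. Consequently $\tilde W_T(x)$ is $\P$-a.s.\ the unique preimage of $\hat Y_T(y)$ under $(\bar U^{**})'$, so any two optimizers of \eqref{concavified} -- and, by Steps~1--2, of \eqref{original} -- have the same terminal value. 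One should be careful that this argument needs $-\partial\bar U^*\bigl(\hat Y_T(y)\bigr)$ to be $\P$-a.s.\ a singleton: the set-valued values of $-\partial\bar U^*$ are exactly the affine pieces of $\bar U^{**}$, and the hypothesis removes precisely the atoms of $\hat Y_T(y)$ that sit on those affine pieces forming the components of $A$.

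\textbf{Main obstacle.} The crux is the ``for every optimizer'' strengthening in Step~1: Theorem~\ref{main}~c), via Theorem~\ref{BTZduality}, furnishes complementary slackness only for \emph{some} optimizer, whereas the uniqueness claim forces us to pin down \emph{all} of them, and the Fenchel--Young equality together with the supermartingale bound $\E[\tilde W_T(x)\hat Y_T(y)]\le xy$ is exactly the device that does this. A secondary subtlety is translating, through $(\bar U^{**})'(A)=\Gamma$, the atom condition on the scalar random variable $\hat Y_T(y)$ into the statement $\tilde W_T(x)\notin A$ $\P$-a.s.
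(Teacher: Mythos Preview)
Your argument tracks the paper's closely: both turn on the subdifferential relation $\hat W_T(x)\in-\partial\bar U^*\bigl(\hat Y_T(y)\bigr)$, the observation that under the hypothesis the right-hand side is $\P$-a.s.\ single-valued so that $\hat W_T(x)=f\bigl(\hat Y_T(y)\bigr)$ for any selection $f\in-\partial\bar U^*$, and the duality dictionary \eqref{dualag} to conclude $\hat W_T(x)\notin A$ a.s., whence $\bar U$ and $\bar U^{**}$ agree on the optimizer and $u(x)=w(x)$. Your Step~1 is a worthwhile addition not spelled out in the paper: the paper's proof simply cites Theorem~\ref{main}~c), which guarantees only that \emph{some} optimizer obeys the subdifferential relation, and then asserts uniqueness without explaining why \emph{every} optimizer must; your Fenchel--Young argument (supermartingale bound $\E[\tilde W_T\hat Y_T]\le xy$ plus the conjugacy $w(x)=v(y)+xy$) is exactly the standard device that closes this gap, and it is correct.

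One imprecision you share with the paper: you assert that $(\bar U^{**})'$ is strictly decreasing off the affine components of $A$, but the paper itself remarks just before \eqref{dualag} that not every affine stretch of $\bar U^{**}$ lies in $A$ and not every kink of $\bar U^*$ lies in $\Gamma$. Thus the hypothesis $\Delta\bigl(w'(x)\bigr)\cap\Gamma=\emptyset$ does not by itself force $-\partial\bar U^*\bigl(\hat Y_T(y)\bigr)$ to be a.s.\ a singleton. The paper's own opening sentence of the proof makes the identical leap (``no atom \ldots\ lies on a point \ldots\ where $\bar U^*$ is not differentiable''); this is a mild wrinkle in the scope of the theorem rather than a defect in your approach relative to the paper's.
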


\begin{proof}
First, note that condition \eqref{uniccond}  implies that  no atom of the distribution of $\hat{Y}_T\bigl(w'(x)\bigr)$ lies on a point in the domain of $\bar{U}^*$ where this function is not differentiable. Thus, we can conclude, as in the proof of Theorem \ref{general}, that, for $f_1$, $f_2 \in - \partial \bar{U}^*$, we have
$f_1\bigl(\hat{Y}_T(w'(x))\bigr) = f_2\bigl(\hat{Y}_T(w'(x))\bigr)$, $\P$-a.s. Hence
\[
\hat{W}_T(x) = f \bigl(\hat{Y}_T \bigl(w'(x)\bigr)\bigr), \qquad f \in -\partial \bar{U}^*,
\] 
is $\P$-a.s. uniquely defined by a strictly increasing function $f$, which proves uniqueness of $\hat{W}(x)$. To prove the existence of an optimizer of the original problem, we note that, from \eqref{uniccond}, we know that $\P \bigl[ \hat{Y}_T\bigr(w'(x)\bigr) = \gamma_n \bigr] =0$. Thus, similar to the proof of Theorem \ref{general}, we can conclude that $\hat{X}_T(x)$ is (the unique) solution to the original problem.  
 \qed
\end{proof}

For the case that $x > \beta$ such that $ \Delta\bigl(w'(x)\bigr) \cap \Gamma \neq \emptyset$, we cannot generally recover any of our results. In particular:
\begin{itemize}
\item[a)] The optimizer of the concavified problem may not be unique, as discussed in the remark at the end of Example \ref{counter}, Case 2.
\item[b)] It can happen that the optimum of the concavified problem is not reached by the value function of the original problem, i.e., $u(x) < w(x)$. An example therefore will be given below in Example \ref{incompcounter}.
\item[c)] Even if the maximum of the concavified problem can be reached by the original value function, i.e., $u(x) = w(x)$, it may happen that the optimizer of the original problem is not unique. To see this, we use the setting of Example \ref{counter} (with initial capital $1$), changing only the incentive scheme
\[
\check{g}(x) = \left\{ \begin{array}{ll} \frac{x^2}{24} &\quad 0 \leq x \leq 6, \\ \frac{1}{2}(x-3) & \quad x>6,\end{array} \right.
\]
which is a convex function with slope bounded by one. However, $U \circ \check{g} = \bar{U}^{**}$. Thus, all of the solutions of the concavified problem in Example \ref{counter} are also solutions to the original problem with incentive scheme $\check{g}$.
\end{itemize}

\begin{example}\label{incompcounter}
To see that the optimizer of the concavified problem can be strictly bigger then any admissible terminal value for the original problem, we once again use the utility function and incentive scheme of \eqref{eq:U-example} from Example \ref{counter}, namely $U(x) = 2\sqrt{x}$ and $g(x) = \frac14(x-3)^+$. We also take $x=1$ as initial capital. To describe the discounted stock price process, we fix an $(\Omega, \mathcal{F}, \P)$-measurable random variable $R$ that satisfies $\P[R=2] =\P[R=1/2] = 1/2$ and consider the process
\[
 S_t = \left\{ \begin{array}{ll} 1 & \quad0 \leq t < T/2, \\ R & \quad T/2 \leq t\leq T, \end{array} \right.
\]
in its natural filtration. Thus, in essence, our model a disguised form of a binomial model. We note that
\[
-\partial \bar U^{*}(y) = \left\{  \begin{array}{ll}
\frac1{4y^2}+3& \quad 0<y<\frac{\sqrt{3}}6,\\
 \left[0, \right]  & \quad y=\frac{\sqrt{3}}6, \quad\\
0& \quad y > \frac{\sqrt{3}}6,
\end{array}
\right.
\]
and $\mathcal{M}^e=\{\Q\}$, where the measure $\Q$ is given via the Radon-Nikod\'{y}m derivative
\[
Z_T\define\left. \frac{d \Q}{d\P} \right\vert_{\mathcal{F}_T} = \frac23 \ind_{\{S_T=2\}}+\frac43\ind_{\{S_T=\frac12\}},
\]
implying  $\Q[R=2] = 1/3$ and $\Q[R=1/2] = 2/3$.

Our goal is to show that $u(1)<w(1)$. To compute $u(1) = \sup_{X \in \mathcal{X}(1)} \E\bigl[\bar U\bigl(X_T\bigr)\bigr]$, we note that, for any predictable $S$-integrable investment strategy $H$	
\[
X_T^{1,H} = x + \int_0^T H_t\, dS_t = 1+H_{T/2} \Bigl(S_{T/2}-S_{T/2-}\Bigr) = \left\{ 
 \begin{array}{ll} 
 1+H_{\frac{T}{2}} & R=2,\\
 1-\frac{H_{T/2}}2 & R=1/2.
 \end{array}\right.
\]
Since $X^{1,H} \in\mathcal X(1)$ has to be nonnegative, it follows that $H_{T/2} \in [-1, 2]$. Hence, $0\le X_T^{1,H}\le 3$, and we can conclude that 
\[
u(1)=\sup_{H} \E\bigl[\bar U \bigl(X_T^{1,H}\bigr)\bigr]=0. 
\]

For the calculation of $w(1)$ we use the fact that, in a complete market, $\mathcal{M}^e=\{\Q\}$.  Thus, the dual value function can be directly computed via the unique dual optimizer $\hat Y_T(y) = yZ_T$,
\[
v(y) = \inf_{Q \in \mathcal{M}^e} \E\biggl[ \bar U^*\biggl(y \frac{dQ}{dP}\biggr)\biggr]=\E\Bigl[\bar U^{*}(yZ_T)\Bigr]=\left\{  \begin{array}{ll}
\frac9{32y}-3y& \quad 0<y\le \frac{\sqrt{3}}8,\\
\frac3{16y}-y& \quad \frac{\sqrt{3}}8 < y <\frac{\sqrt{3}}4,\\
0& \quad y\ge \frac{\sqrt{3}}4.
\end{array}
\right.
 \]

Now, calculating the subdifferential,
\[
-\partial v(y) = \left\{  \begin{array}{ll}
\frac9{32y^2}+3& \quad 0<y<\frac{\sqrt{3}}8,\\
\left[5 , 9\right]  & \quad y=\frac{\sqrt{3}}8,\\
\frac3{16y^2}+1& \quad \frac{\sqrt{3}}8< y<\frac{\sqrt{3}}4,\\
\left[0,2\right]& \quad y=\frac{\sqrt{3}}4,\\
0& \quad y\ge \frac{\sqrt{3}}4,
\end{array}
\right. 
\]
and using by convex duality that $y = w'(x)$ if and only if $x\in-\partial v(y)$, we conclude that for $x=1$ it follows that $w'(1)=\sqrt{3}/4$. Thus, Theorem \ref{BTZduality} implies that
\begin{align*}
\hat W_T(1)\in-\bigl(\partial \bar U^{*}\bigr)\Bigl(\hat Y_1\bigl(w'(1)\bigr)\Bigr) & =-\bigl(\partial \bar U^{*}\bigr)\biggr(\frac{\sqrt{3}}6  \ind_{\{S_T=2\}}+\frac{\sqrt{3}}{3}\ind_{\{S_T=\frac12\}}  \biggr)\\
& =[0,6] \ind_{\{S_T=2\}}+\{0\}\ind_{\{S_T=\frac12\}} 
\end{align*}
and we can conclude by the admissibility constraint $\E \bigl[\hat W_T(1)\hat Y_1\bigl(w'(1)\big)\bigr]=w'(1)$ that 
\[
\hat{W}_T(1)=3 \ind_{\{S_T=2\}}+0\ind_{\{S_T=\frac12\}}. 
\]
This can be seen also in a simpler way. Since $X_T^{1,H} = x + \int_0^T H_t\, dS_t $ depends only on $H_{T/2}$ which, by predictability, has to be $\mathcal{F}_{T/2-} = \mathcal{F}_0$-measurable and hence constant. We have by admissibility $-1\le H\le2$. Hence
 \[
 w(1)=\sup_{H}\E\Bigl[\bar U^{**}\bigl(X_T^{1,H}(1)\bigr)\Bigr] = \sup_{H} \frac{\sqrt{3}}6\Bigl(\P[R=2](1+H)+\P[R=1/2](1-H/2)\Bigr)=\frac{\sqrt{3}}4.
 \]
The maximum is achieved with $H=2$, i.e., the optimal portfolio is $\hat{W}_T(1)=3 \ind_{\{S_T=2\}}+0\ind_{\{S_T=\frac12\}}$. It follows in either case that $w(1)=\frac{\sqrt{3}}4$. Thus, we conclude that $0=u(1)<w(1)=\frac{\sqrt{3}}4$.
\end{example}

Note, finally, that such behavior can be excluded in the case of complete markets; Reichlin \cite[Section 5]{Rei} shows that the optimizer of the concavified problem is an optimizer of the original problem if the underlying probability space is atomless.

\section{Conclusion}\label{sec:conclusion}

We have considered the non-concave utility maximization problem as seen from the point of view of a fund manager, who manages the capital for an investor and who is compensated by a convex incentive scheme.  We have proved the existence and uniqueness of the dual optimizer and also proved the existence and uniqueness of the original problem for arbitrary initial capital in case in which the dual optimizer has a continuous distribution. We have shown that this is true in a large class of (possibly incomplete) market models, independent of the specific incentive scheme. When this condition fails, we have proved the existence of a unique solution for the concavified problem and shown that this solution is also  a solution of the original problem under additional assumptions on the initial capital. However, there are models, where, for some initial capital, the optimal value of the concavified problem cannot be reached, as we have demonstrated through a counterexample. Moreover, we have illustrated our findings 
by specific examples, which in essence contain the explicit solution strategies for complete markets. Finally, we have discussed the economic implications of our findings.

\begin{acknowledgements}
Both authors acknowledge partial financial supported by NSF grant DMS-0739195 and want to thank Ren\'{e} Carmona for suggesting the problem under consideration and steady encouragement.They are thankful to two anonymous referees and an associate editor for thoughtful remarks and comments that have improved the quality of the article. Also thanks to Gerard Brunick, Christian Reichlin, Ronnie Sircar and Ramon van Handel for helpful discussions and comments. Thanks to Matt Lorig, who put a lot of effort to improve the readability and grammatical correctness of the text.
\end{acknowledgements}

\bibliographystyle{spmpsci}
\bibliography{bibagent}
\end{document}